\theoremstyle{theorem}
\newtheorem{thm}{Theorem}[]
\newtheorem{coro}{Corollary}[]
\newtheorem{lem}{Lemma}
\theoremstyle{remark}
\newtheorem{rem}{Remark}
\theoremstyle{definition}
\newtheorem{defn}{Definition}
\global\long\def\P#1{\operatorname{P}\left\{  #1\right\}  }
\newcommand{\eps}{\varepsilon}
\def\setR{\mathbbm{R}}
\def\P{\mathcal{P}}
\def\E{\mathcal{E}}
\def\D{\mathcal{D}}
\def\C{\mathcal{C}}
\def\N{\mathcal{N}}
\newcommand{\etc}{\emph{i.e.}, }
\newcommand{\eg}{\emph{e.g.}, }
\newcommand{\argmin}[1]{\underset{#1}{\operatorname{argmin}}}
\newcommand{\Eox}[1]{{\rm E}\left[#1\right]}
\newcommand{\Lp}[3]{{\left\|{#1}\right\|}_{#2}^{#3}}
\newcommand{\LKr}[1]{\left\{#1\right\}}
\newcommand{\LPr}[1]{\left(#1\right)}
\newcommand{\LCr}[1]{\left[#1\right]}
\newcommand{\LPd}[1]{\left<#1\right>}
\newcommand{\Labs}[1]{\left|#1\right|}
\newcommand{\mvec}[1]{{\boldsymbol #1}}
\newcommand{\mmat}[1]{{\pmb{\rm #1}}}
\newcommand{\gos}{\rightarrow}
\newcommand{\baks}{\leftarrow}
\newtheorem{assumption}{Assumption}
\definecolor{amethyst}{rgb}{0.6, 0.4, 0.8}
 \title{An efficient algorithm for compression-based compressed sensing}
 \author{Sajjad Beygi, Shirin Jalali, Arian Maleki, Urbashi Mitra} 
\begin{document}
\maketitle

\begin{abstract}
Modern image and video compression codes employ elaborate structures existing in such signals  to encode them into few number of bits. Compressed sensing recovery algorithms on the other hand use  such signals' structures to recover them from few linear observations. Despite the steady progress in the field of compressed sensing,  structures that are often used for signal recovery are still much simpler than those employed by  state-of-the-art compression codes. The main goal of this paper is to bridge this gap through answering the following question:  Can one employ a given compression code to build  an efficient (polynomial time) compressed sensing recovery algorithm? 
In response to this question, the compression-based gradient descent (C-GD) algorithm is proposed. C-GD, which is  a low-complexity iterative algorithm,   is able to employ a generic compression code for compressed sensing and therefore elevates  the scope of  structures used in compressed sensing to those used by compression codes. The convergence performance of C-GD and its required number of measurements in terms of the rate-distortion performance of  the compression code are theoretically analyzed. It is also shown that C-GD is robust to additive white Gaussian noise. Finally, the presented simulation results show that combining C-GD with commercial image compression codes such as JPEG2000 yields state-of-the-art performance in imaging applications. 
\end{abstract}

\section{Introduction}
The main problem of compressed sensing is to recover an unknown target signal $\mvec{x}\in\setR^n$ from undersampled linear measurements $\mvec{y}\in\setR^m$, 
\begin{align}
\label{linear_system_EQ}
\mvec{y}= \mmat{A}\mvec{x} + \mvec{z},
\end{align}
where    $\mmat{A}\in\setR^{m\times n}$ and  $\mvec{z}\in\setR^m$ denote   the measurement matrix and the measurement  noise, respectively. Since $m<n$, $\mvec{x}$ cannot be recovered accurately, unless we leverage some prior information about the signal $\mvec{x}$. Such information can mathematically be expressed by assuming that $\mvec{x} \in {\cal S}$, where ${\cal S} \subset \setR^{n}$ is a known set. 
 Intuitively, it is expected that, the ``smaller'' the set ${\cal S}$, the fewer number of measurements $m$ required for recovering $\mvec{x}$. In other words,  having more information about the target signal $\mvec{x}$ limits the feasible set in a way that enables the recovery algorithm  to identify a high-quality estimate of $\mvec{x}$ using fewer measurements.  In the last decade, researchers have explored several different instances of  set ${\cal S}$, such as the class of sparse signals or low-rank matrices \cite{Donoho:06,CandesR:06,CandesT:06, beygi2015multi, beygi2015nested}. Despite the mathematical elegance of these research directions, their impact has not yet met expectations in many application areas. A major barrier is that real-world signals often exhibit far more complex  structures than the simple ones that are studied in theoretical researches. In response to this shortcoming, we propose a new complimentary approach to elevate the scope of compressed sensing much beyond sparsity, low-rankness, etc. Our approach hinges upon the following simple assumption:  for a given  class of signals, there exists an \emph{efficient}  compression algorithm  that is able to represent the signals in that class with few number of  bits per symbol. In many application areas, such as  image and video processing, thanks to the extensive research performed in the past fifty years, such compression algorithms exist and  use sophisticated structures shared by signals in a class. Compressive sensing recovery algorithms that take advantage of similar complex structures potentially considerably  outperform current algorithms. This raises the following question that we address in this paper: How can one effectively employ a compression algorithm  in a compressed sensing recovery method? 
 
 In response to this question, we propose an iterative compression-based compressed sensing  recovery algorithm, called compression-based gradient descent (C-GD). C-GD, with no extra effort, elevates the scope of the structures used in signal recovery from simple structures, such as sparsity, to the more sophisticated ones that are used in the state-of-the-art compression codes. Our simulation results confirm that combining C-GD with the state-of-the-art image compression codes such as JPEG2000 yields state-of-the-art performance in compressive imaging applications. In addition to its state-of-the-art performance in the imaging applications, C-GD comes with a theoretical framework that derives its convergence rate, analyzes its performance in the presence of noise and other non-idealities in the system, and determines the impact of different parameters of C-GD on its performance. 

Using a compression code for the compressed sensing problem was first discussed in \cite{jalali2016compression} and \cite{rezagah2016using}. Both papers studied this problem   from a theoretical standpoint, for deterministic signal models and for stationary processes, respectively. In  \cite{jalali2016compression},  inspired by the Occam's razor principle, the compressible signal pursuit (CSP) optimization was proposed. We will review this algorithm briefly in Section \ref{ssec:csp}. The goal of the CSP optimization as a compression-based recovery algorithm is to, among the compressible signals, find the one that gives the lowest measurement error.  Theoretically, it can be  proved that, asymptotically, as $n$ goes to infinity and the  per-symbol reconstruction distortion approaches zero,   the CSP optimization achieves    the optimal sampling rate, in cases where such optimal rates are already known in the literature. While these results theoretically support  the idea of designing compression-based compressed sensing algorithms, they do not provide a recipe of how to achieve this goal in practice.    Solving the CSP  requires an exhaustive search over all compressible signals, or in other words, all signals in the codebook, whose size is often too large. For instance if we consider the class of images that can be compressed in $1000$ bits only, then the size of the codebook will be $2^{1000}$. Hence, CSP cannot be used in most real-world applications.  On the other hand,  the C-GD algorithm that is proposed in this paper is inspired by   projected gradient descent. Each step of the algorithm involves two operations: i) moving in the direction of the gradient of $\|\mvec{y} -\mmat{A}\mvec{x}\|^2$, and ii) projecting on the set of compressible signals. Both steps can be performed efficiently.

There are several other works in the literature that aim at elevating the scope of signal recovery beyond sparsity and low-rankness, and hence, in spirit, have a similar objective as this paper \cite{ChRePaWi10, metzler2014denoising, RichModelBasedCS}. Here we briefly mention those results and describe how they compare with our work. To use the framework developed in \cite{ChRePaWi10} for a class of signals  ${\cal S}$, one should construct a set of atoms such that  every $\mvec{x} \in {\cal S}$ can be represented using a few of those atoms. Finding non-standard atoms for real-world signals, such as images, is a sophisticated task and hence this approach has not found any application in compressive imaging or video recoding yet. The work of \cite{metzler2014denoising} is only concerned with independently and identically distributed (i.i.d.) measurement matrices. While in our theoretical results, we also only consider i.i.d.~measurement matrices, as  shown later  in the section on simulation results, C-GD works well with partial Fourier matrices  that are used for MRI and radar applications too.  On the contrary,  D-AMP developed in \cite{metzler2014denoising} fails to work with such matrices. The framework developed in \cite{RichModelBasedCS} still aims to use sparsity. However, it assumes that many of ${n \choose k}$ subspaces of $k$-sparse vectors, do not belong to our signal space. This framework is still only capable of using simple structures and hence does not offer the state-of-the-art performance in applications of compressed sensing, such as imaging. This point is demonstrated in \cite{metzler2014denoising}. 

Finally, there are many papers in the literature that have considered the compressive imaging application and proposed heuristic algorithms to employ sophisticated structures in the signal. (See \cite{dong2014compressive} and the references therein for a complete list of references.) In the simulation results section, we compare the performance of our algorithm with the state-of-the-art heuristic algorithm NLR-CS proposed in \cite{dong2014compressive}.  It turns out that C-GD offers the state-of-the-art performance. We should also emphasize that our framework has several other advantages over this line of work: (i) It is general and can be applied to different applications with no extra effort, (ii) it comes with a theoretical framework that shows its robustness to noise and other non-idealities in the measurement process.   

The organization of the paper is as follows. Section \ref{sec:background} reviews some background information. Section \ref{sec:C-GD} presents the main contributions of the paper, namely the C-GD algorithms and its theoretical analysis. Section \ref{sec:standardsignals} consider the application of the C-GD algorithm to some standard classes of signals. Section \ref{sec:simulations} presents our simulation results, which show that the C-GD algorithm achieves state-of-the-art performance in image compressed sensing. Section \ref{sec:proofs} provides the proofs of the mains results of the paper.  Finally, Section \ref{sec:conclusion} concludes the paper. 



\section{Background}\label{sec:background}

In this section, we first review the notation used throughout the paper and also the definitions of Gaussian and sub-Gaussian random variables. Then we review the rate-distortion function of a compression code, and formally define the objective of compression-based signal recovery. Finally we review the CSP optimization, which is the first  proposed method for employing compression algorithms for compressed sensing.

\subsection{Notations and defenitions}

Scalar values are denotes by lower-case letters such as $x$. A column vector is denotes by bold letters as $\mvec{x}$. The $i$-th element of $\mvec{x}$ is denoted  by ${x}_i$. Given vector $\mvec{x}\in\mathds{R}^n$, $\|\mvec{x}\|_{\infty}=\max_{i} |x_i|$ and $\|\mvec{x}\|_{2}=\sqrt{\sum_{i=1}^nx_i^2}$ denote the $\ell_{\infty}$ norm and the $\ell_2$ norm of $\mvec{x}$, respectively.  A matrix is denoted by bold capital letters such $\mmat{X}$ and its $(i, j)$-th element by ${X}_{i,j}$. The transpose of $\mmat{X}$ is given by $\mmat{X}^{T}$.   $\sigma_{\max}(\mmat{A})$ denotes the maximum singular value of $\mmat{A}$. Calligraphic letters such as $\D$ and ${\cal C}$ denote sets. The size of a set ${\cal C}$ is denotes as $|{\cal C}|$. The unit sphere in $\mathds{R}^n$ is  denoted by $S^{n-1}$, i.e., $S^{n-1}\triangleq
 \{ \mvec{u} \in \mathbb{R}^n \ | \  \|\mvec{u}\|_2 =1 \}$. Throughout the paper $\log$ and $\ln $ refer to logarithm in base $2$ and natural logarithm, respectively.  Finally, we use $O$ and $\Omega$ notation, defined below, to describe the limiting behavior of certain quantities. $f(n) = O(g(n))$ as $n \rightarrow \infty$, if and only if there exist $n_0$ and $c$ such that for any $n> n_0$, $|f(n)| \leq c |g(n)|$. Likewise, $f(n) = \Omega(g(n))$ as $n \rightarrow \infty$, if and only if there exist $n_0$ and $c$ such that for any $n> n_0$, $|f(n)| \geq c |g(n)|$. 

In this paper, the main results are proved for both Gaussian and sub-Gaussian measurement matrices. In the following, we briefly review the definition  of  sub-Gaussian and sub-exponential  random variables. 
\begin{defn}[Sub-Gaussian] 
 A random variable $X$ is sub-Gaussian when
\[
\Lp{X}{\psi_2}{} \triangleq  \inf \LKr{L>0 : \Eox{\exp\LPr{\frac{|X|^2}{L^2}}} \le 2} < \infty.
\]
\end{defn}
Note that Gaussian random variables are also sub-Gaussian random variables. For $X\sim\N(0,\sigma^2)$, $\Lp{X}{\psi_2}{}{}=\sqrt{8\over 3}\sigma$.
\begin{defn}[Sub-exponential] 
\label{sub_expo_def}
A random variable $X$ is a sub-exponential random variable when 
\begin{align}
\Lp{X}{\psi_1}{} \triangleq \inf \LKr{L>0 : \Eox{{\rm e}^{\frac{|X|}{L}}} \le 2} < \infty.
\end{align}
\end{defn}
Using the above definitions, it is straightforward to show the following result. 
\begin{lem} \label{prod_subgaus}
Let $X$ and $Y$ be sub-Gaussian random variables. Then $XY$ is sub-exponential. Moreover, $ \Lp{XY}{\psi_1}{} \le \Lp{X}{\psi_2}{}  \Lp{Y}{\psi_2}{}. $
\end{lem}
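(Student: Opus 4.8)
The plan is to bound $\Lp{XY}{\psi_1}{}$ directly from its defining infimum by exhibiting the value $L = \Lp{X}{\psi_2}{}\,\Lp{Y}{\psi_2}{}$ as a feasible point. Abbreviate $L_X = \Lp{X}{\psi_2}{}$ and $L_Y = \Lp{Y}{\psi_2}{}$ and assume for now $L_X, L_Y > 0$. By Definition \ref{sub_expo_def}, it suffices to show $\Eox{\exp(|XY|/(L_X L_Y))} \le 2$, since this places $L_X L_Y$ in the feasible set $\LKr{L>0 : \Eox{\exp(|XY|/L)} \le 2}$ and hence forces $\Lp{XY}{\psi_1}{} \le L_X L_Y$, which is exactly the claimed inequality.

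The algebraic core is a pointwise bound on the integrand via Young's inequality $ab \le \tfrac12(a^2+b^2)$, applied with $a = |X|/L_X$ and $b = |Y|/L_Y$. This gives $|XY|/(L_X L_Y) \le \tfrac12 |X|^2/L_X^2 + \tfrac12 |Y|^2/L_Y^2$, so that $\exp(|XY|/(L_X L_Y)) \le \exp(|X|^2/(2L_X^2))\,\exp(|Y|^2/(2L_Y^2))$. Taking expectations and applying the Cauchy--Schwarz inequality to the product of the two factors yields $\Eox{\exp(|XY|/(L_X L_Y))} \le \sqrt{\Eox{\exp(|X|^2/L_X^2)}\,\Eox{\exp(|Y|^2/L_Y^2)}}$, reducing everything to controlling each factor under the square root.

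To finish I would argue that each such factor is at most $2$, i.e.\ that the infimum defining $\Lp{\cdot}{\psi_2}{}$ is attained (as a ``$\le$'') at the norm value itself. The map $L \mapsto \Eox{\exp(|X|^2/L^2)}$ is nonincreasing, and for every $L > L_X$ it is bounded by $2$ by definition of $L_X$ as an infimum; letting $L \downarrow L_X$, the integrand increases, so monotone convergence gives $\Eox{\exp(|X|^2/L_X^2)} \le 2$, and identically for $Y$. Substituting, the right-hand side is at most $\sqrt{2 \cdot 2} = 2$, completing the argument. The degenerate cases $L_X = 0$ or $L_Y = 0$ (where the relevant variable is a.s.\ zero, so $XY = 0$ a.s.\ and $\Lp{XY}{\psi_1}{} = 0$) are handled trivially.

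I expect the only genuine subtlety to be this last step---verifying the boundary behavior $\Eox{\exp(|X|^2/L_X^2)} \le 2$ at the infimum---rather than the two inequalities forming the algebraic skeleton, which are completely routine. Everything else is a direct chain of Young's inequality followed by Cauchy--Schwarz.
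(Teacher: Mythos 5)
Your proof is correct and complete. Note that the paper itself gives no proof of Lemma \ref{prod_subgaus}---it is dismissed as ``straightforward to show'' from the definitions---so there is no in-paper argument to compare against; what you have written is essentially the standard textbook derivation (cf.\ Lemma 2.7.7 in Vershynin's \emph{High-Dimensional Probability}). Two remarks. First, where the standard presentation finishes with the pointwise AM--GM bound $uv \le \tfrac12(u^2+v^2)$ applied to $u=\exp\LPr{|X|^2/(2L_X^2)}$ and $v=\exp\LPr{|Y|^2/(2L_Y^2)}$, you use Cauchy--Schwarz instead; both yield the same bound $\sqrt{2\cdot 2}=2$, and crucially neither requires independence of $X$ and $Y$. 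That independence-free feature is not cosmetic: the paper applies the lemma to the dependent pair $\mvec{y}_i(u)=\LPd{\mmat{A}_i,\mvec{u}}$, $\mvec{y}_i(v)=\LPd{\mmat{A}_i,\mvec{v}}$ (two projections of the \emph{same} row of $\mmat{A}$) in the proof of Lemma \ref{thm_dense_subGauss_mtx}, so a proof that factored the expectation of the product would be wrong for the intended use. Second, you correctly identified and closed the one genuine subtlety created by defining $\Lp{\cdot}{\psi_2}{}$ as an infimum: one must check $\Eox{\exp\LPr{|X|^2/L_X^2}} \le 2$ at the infimum itself, and your argument---monotonicity of $L \mapsto \Eox{\exp\LPr{|X|^2/L^2}}$ gives the bound $2$ for all $L>L_X$, then monotone convergence as $L \downarrow L_X$ transfers it to $L_X$---is exactly right, as is your treatment of the degenerate case $L_X=0$ (which forces $X=0$ a.s., hence $\Lp{XY}{\psi_1}{}=0$). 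No gaps.
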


\subsection{Objective of compression-based compressed sensing}

Consider  a  compact set ${\cal Q}\subset\setR^n$. (Throughout the paper, we focus on this deterministic signal model. However,  all the results can be extended to the stochastic setting as well. Refer to  \cite{rezagah2016compression} for further information.)
 A rate-$r$ lossy compression code for set ${\cal Q}$ is characterized   by its encoding and decoding mappings $(f,g)$, where
$$
f:{\cal Q} \gos \LKr{1,2,\ldots, 2^{r}}, 
$$ 
and
$$
g:\LKr{1,2,...,2^{r}}\gos {\cal Q}.
$$ 
 Encoding and decoding  mappings  $(f,g)$ define a codebook $\C$, where
\begin{align*}
\C&= \LKr{g\LPr{f\LPr{\mvec{x}}}: \; \mvec{x}\in{\cal Q}}.
\end{align*}
Clearly, $|\C|\leq 2^{r}$. The performance of  a code defined by $(f,g)$ is characterized by its i) rate $r$ and ii) maximum distortion $\delta$ defined as 
\[
\delta=\sup_{\mvec{x}\in{\cal Q}} \Lp{\mvec{x}-g(f(\mvec{x}))}{2}{}.
\]

The problem of compression-based compressed sensing can be formally stated in the following way. Suppose that a family of compression codes $(f_r, g_r)$, parameterized with rate $r$ is given for ${\cal Q}$. For instance, JPEG or JPEG2000 compression algorithms at different rates can be considered as a family of compression algorithms for the class of natural images. The deterministic distortion-rate function of this family is given by
\[
\delta (r)=  \sup_{\mvec{x}\in{\cal Q}} \Lp{\mvec{x}-g_r(f_r(\mvec{x}))}{2}{}.
\]
Note that for any reasonable code,  $\delta(r)$ is expected to be a monotonically non-increasing function of $r$. We also define the deterministic rate-distortion function of the compression code as
\[
r(\delta) = \inf \{r : \delta(r) \leq \delta \}. 
\]

\begin{rem}
Typically  a family of  compression codes is defined as a sequence of compression codes that are indexed by blocklength $n$ and operate either at a fixed rate or a fixed distortion. In this paper, we are more interested in the setting where the blocklength is fixed and the rate or the distortion changes. Therefore, we consider a family of compression codes with fixed blocklength $n$ indexed by rate $r$. 
\end{rem}

Based on these definitions we can formally state the objectives of this paper as follows. Consider the problem of compressed sensing of signals in a set ${\cal Q}$. Suppose that instead of knowing the set ${\cal Q}$ explicitly, for signals in ${\cal Q}$, we have access to a family of compression algorithms with rate-distortion function $r(\delta)$. For $\mvec{x}\in{\cal Q}$ our goal is not to compress it, but to recover it from its undersampled set of linear projections $\mvec{y} = \mmat{A} \mvec{x}+\mvec{z}$. The goal of compression-based compressed sensing is summarized in the following two questions:

\textbf{Question 1:} {\it Can one employ a given compression code in an efficient (polynomial time) signal recovery algorithm?} 

\textbf{Question 2:} {\it Can we characterize the number of observations such an algorithm requires to accurately recover $\mvec{x}$, in terms of the rate-distortion performance of the code?}

Note that the algorithms that are developed in response to the above questions will automatically employ the structure captured by the compression algorithms, and hence can immediately elevate the scope of compressed sensing much beyond simple structures. For instance, an MPEG4-based compressed sensing recovery algorithm will use not only the intra-frame but also the inter-frame dependencies among different pixels. 
 \subsection{Compressible signal pursuit}\label{ssec:csp}
 
Consider the following simplified version of our main questions: Can one employ a compression code for signal recovery? Can we characterize the number of observations such algorithm requires for an accurate recovery of $\mvec{x}$ in terms of the rate-distortion performance of the code? Note that the only simplification is that we have removed the constraint on the computational complexity of the recovery scheme. In response to this simplified question,   \cite{jalali2016compression} proposed the compressible signal pursuit (CSP) optimization that estimates signal $\mvec{x}$ based on  measurements $\mvec{y}$ as follows.  Among all the signals $\mvec{u} \in {\cal Q}$ that satisfy the measurement constraint, i.e.~ $\mvec{y} = \mmat{A}\mvec{u}$,  it searches for the one that can be compressed well by the compression code described by  $(f_r,g_r)$. More formally, given a lossy compression code with codebook $\C_r = \{g_r (f_r(\mvec{x}) : \mvec{x} \in {\cal Q} \}$, the CSP optimization recovers $\mvec{x}$ from its measurements $\mvec{y} = \mmat{A}\mvec{x}$ as follows
\begin{align}
	\label{search_optimization}
	\hat{\mvec{x}} = \argmin{\mvec{u}\in \C_r}\Lp{\mvec{y} - \mmat{A}\mvec{u}}{2}{2}. 
\end{align}
The performance of the CSP optimization is characterized in \cite{jalali2016compression} and \cite{rezagah2016using}, under deterministic and stochastic signal models, respectively. Before we mention the theoretical results, we should emphasize that at this point CSP is based on an exhaustive search over the codebook and hence is computationally infeasible. The following result  from \cite{jalali2016compression} characterizes the   performance of the CSP optimization in the noiseless setting, where $\mvec{z}=\mvec{0}$.
\begin{thm}[Corollary 1 in \cite{jalali2016compression}]\label{thm:CSP}
Consider a family of compression codes $(f_r,g_r)$ for set ${\cal Q}$ with corresponding codebook ${\cal C}_r$ and rate-distortion function $r(\delta)$. Let $\mmat{A}\in\setR^{m\times n}$, where $A_{i,j}$ are i.i.d.~${\cal N}(0,1)$. For $\mvec{x}\in{\cal Q}$ and $\mvec{y}=\mmat{A}\mvec{x}$, let $\hat{\mvec{x}}$  denote the solution of \eqref{search_optimization}. Given $\nu>0$ and $\zeta>1$, such that ${\zeta \over \log{1\over {\rm e}\delta}}<\nu$, let 
\[
m= {\zeta r\over \log {1\over  {\rm e}\delta }}.
\] 
Then,
\[
P( \Lp{\mvec{x}- \hat{\mvec{x}}}{2}{} \geq \theta\delta^{1-{1+\nu \over \zeta}})\leq {\rm e}^{-0.8 m}+{\rm e }^{-0.3 \nu r},
\]
where $\theta=2{\rm e}^{-(1+\nu)/\zeta}$.
\end{thm}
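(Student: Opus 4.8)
The plan is to combine the optimality of $\hat{\mvec{x}}$ with matching deviation bounds for a Gaussian matrix. Write $\tilde{\mvec{x}}=g_r(f_r(\mvec{x}))$ for the codeword assigned to $\mvec{x}$, so that $\tilde{\mvec{x}}\in\C_r$ and $\Lp{\mvec{x}-\tilde{\mvec{x}}}{2}{}\le\delta$ by definition of the distortion. Since $\hat{\mvec{x}}$ minimizes $\Lp{\mvec{y}-\mmat{A}\mvec{u}}{2}{2}$ over $\mvec{u}\in\C_r$ and $\mvec{y}=\mmat{A}\mvec{x}$, feasibility of $\tilde{\mvec{x}}$ yields the basic inequality
\[
\Lp{\mmat{A}(\mvec{x}-\hat{\mvec{x}})}{2}{}\le\Lp{\mmat{A}(\mvec{x}-\tilde{\mvec{x}})}{2}{}.
\]
Everything then reduces to an upper bound on the right-hand side and a lower bound on the left-hand side that holds uniformly over the codebook.

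For a fixed vector $\mvec{v}$, the i.i.d.\ $\N(0,1)$ entries of $\mmat{A}$ make $\Lp{\mmat{A}\mvec{v}}{2}{2}/\Lp{\mvec{v}}{2}{2}$ a $\chi^2_m$ random variable. The right-hand side involves the single, $\mmat{A}$-independent vector $\mvec{x}-\tilde{\mvec{x}}$, so a one-sided $\chi^2$ upper-tail bound suffices: with threshold $4m$ one has $P(\chi^2_m\ge 4m)\le e^{-0.8m}$, whence with probability at least $1-e^{-0.8m}$,
\[
\Lp{\mmat{A}(\mvec{x}-\tilde{\mvec{x}})}{2}{}\le 2\sqrt{m}\,\Lp{\mvec{x}-\tilde{\mvec{x}}}{2}{}\le 2\sqrt{m}\,\delta.
\]
This single concentration event accounts for the factor $2$ in $\theta$ and the $e^{-0.8m}$ term.

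The left-hand side is the crux, because $\hat{\mvec{x}}$ depends on $\mmat{A}$; I would therefore establish the lower bound simultaneously for every codeword. Fixing a threshold $\gamma\in(0,1)$, the goal is $\Lp{\mmat{A}(\mvec{x}-\mvec{c})}{2}{}\ge\sqrt{\gamma m}\,\Lp{\mvec{x}-\mvec{c}}{2}{}$ for all $\mvec{c}\in\C_r$. For each fixed $\mvec{c}$ the scale-invariant $\chi^2$ lower tail gives $P(\Lp{\mmat{A}(\mvec{x}-\mvec{c})}{2}{2}\le\gamma m\Lp{\mvec{x}-\mvec{c}}{2}{2})\le(\gamma e^{1-\gamma})^{m/2}$, and a union bound over the at most $2^r$ codewords yields failure probability at most $2^r(\gamma e^{1-\gamma})^{m/2}$. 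Choosing $\gamma=(e\delta)^{2(1+\nu)/\zeta}$ and taking logarithms, the failure exponent becomes $r\ln 2+\tfrac{m}{2}(1+\ln\gamma-\gamma)$; substituting $m=\zeta r/\log\tfrac{1}{e\delta}$ and using $r\ln 2=\tfrac{m}{\zeta}\ln\tfrac{1}{e\delta}$ collapses it to $-\nu r\ln 2+\tfrac{m}{2}(1-\gamma)$, which the hypothesis $\zeta/\log\tfrac{1}{e\delta}<\nu$ renders negative (it forces $m<\nu r$), giving a failure bound of the form $e^{-c\nu r}$ for some $c>0$. On the intersection of the two good events, the optimality inequality and the upper bound give
\[
\sqrt{\gamma m}\,\Lp{\mvec{x}-\hat{\mvec{x}}}{2}{}\le\Lp{\mmat{A}(\mvec{x}-\hat{\mvec{x}})}{2}{}\le 2\sqrt{m}\,\delta,
\]
so that $\Lp{\mvec{x}-\hat{\mvec{x}}}{2}{}\le 2\delta/\sqrt{\gamma}=2e^{-(1+\nu)/\zeta}\delta^{1-(1+\nu)/\zeta}=\theta\delta^{1-(1+\nu)/\zeta}$, and a union bound over the two failure events produces the claimed $e^{-0.8m}+e^{-0.3\nu r}$.

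I expect the uniform lower bound to be the main obstacle: because the codebook has exponential size $2^r$, the union bound survives only if the per-codeword lower-tail probability is small enough to beat $2^r$, and this is precisely what dictates the scaling $m=\zeta r/\log\tfrac{1}{e\delta}$ together with the value of $\gamma$ that simultaneously pins down the error exponent $1-(1+\nu)/\zeta$. The one genuinely delicate piece is the constant bookkeeping---retaining the $e^{-\gamma}$ factor in the $\chi^2$ tail and using the strict inequality $\zeta/\log\tfrac{1}{e\delta}<\nu$---to land the residual exponent at the stated $0.3\nu r$; the remaining steps are routine.
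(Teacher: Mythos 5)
Two things to note at the outset: this paper never proves Theorem \ref{thm:CSP} itself --- it is imported verbatim as Corollary 1 of \cite{jalali2016compression} --- so your proposal has to be measured against the argument in that reference, and on structure you reproduce it faithfully. The skeleton is exactly right: the basic inequality $\|\mmat{A}(\mvec{x}-\hat{\mvec{x}})\|_2\le\|\mmat{A}(\mvec{x}-\tilde{\mvec{x}})\|_2$ from optimality of $\hat{\mvec{x}}$ over the codebook; a one-sided $\chi^2_m$ upper tail at threshold $4m$ for the single $\mmat{A}$-independent direction $\mvec{x}-\tilde{\mvec{x}}$ (indeed $\tfrac{m}{2}(3-\ln 4)\approx 0.807m$, which is where both the factor $2$ in $\theta$ and the $e^{-0.8m}$ term come from); and a union-bounded $\chi^2_m$ lower tail over the at most $2^r$ codewords with $\gamma=(e\delta)^{2(1+\nu)/\zeta}$, a choice that is forced once $\theta=2e^{-(1+\nu)/\zeta}$ is fixed. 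These are the same two tail bounds this paper records as Lemma \ref{X2_concentaration}, and your identification of the uniform lower bound as the step that dictates the scaling $m=\zeta r/\log\tfrac{1}{e\delta}$ is correct.

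The genuine gap is the one you half-concede in your last sentence: your bookkeeping does not reach the stated exponent $0.3\nu r$, and under the theorem's hypotheses it cannot, along this route. Your residual exponent is $-\nu r\ln 2+\tfrac{m}{2}(1-\gamma)$, and the hypothesis $\zeta/\log\tfrac{1}{e\delta}<\nu$ supplies only $m<\nu r$, hence $\tfrac{m}{2}(1-\gamma)<\tfrac{\nu r}{2}$ and a failure probability at most $e^{-(\ln 2-\frac{1}{2})\nu r}\approx e^{-0.193\nu r}$, which is strictly \emph{weaker} than the claimed $e^{-0.3\nu r}$. To land $0.3$ you would need $m(1-\gamma)\le 2(\ln 2-0.3)\nu r\approx 0.786\,\nu r$, i.e., either a strengthened sampling condition or $\gamma\gtrsim 0.21$; neither follows from the stated assumptions, since $\gamma=(e\delta)^{2(1+\nu)/\zeta}\to 0$ as $\delta\to 0$, and retaining the $e^{-m\gamma/2}$ factor buys nothing in that regime. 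Equivalently, what your argument establishes is $2^{-(1-\frac{\log e}{2})\nu r}\approx 2^{-0.28\nu r}$; the mismatch with $0.3$ may well trace to a base-$2$ versus base-$e$ conversion in the source's constant, but as written your proof delivers the theorem only with $e^{-0.19\nu r}$ in place of $e^{-0.3\nu r}$, and you should either state that weaker constant or identify the extra hypothesis needed to close the gap. Everything else --- the value of $\theta$, the $e^{-0.8m}$ term, and the role of the codebook size $2^r$ against the lower-tail exponent --- is correct and matches the cited proof.
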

For a simpler interpretation of this result define the $\alpha$-\emph{dimension} of a family of compression codes with rate-distortion function $r(\delta)$ as
\[
\alpha = \lim \sup_{\delta \rightarrow 0} \frac{r(\delta)}{\log (1 /\delta)}.
\]
Suppose that a small value of $\delta$ (or large value of $r$) is used in CSP. Then, roughly speaking, Theorem \ref{thm:CSP} implies that CSP returns an almost accurate estimate of $\mvec{x}$ as long as $m> \alpha $. Note that $\alpha$  is usually much smaller than $n$, and hence the number of measurements CSP requires is much smaller than the ambient dimension of the signal. (See Section \ref{sec:standardsignals} for some classical examples.)

\begin{rem}
In this paper, we focus entirely on the deterministic setting. The performance of CSP in the stochastic setting is studied in \cite{rezagah2016compression}. Using the connection between the  R\'enyi information dimension and rate distortion dimension of a random variable  \cite{kawabata1994rate},  it has been proved in \cite{rezagah2016compression, jalali2016new} that for i.i.d.~sources with a mixture of discrete and continuous distribution, CSP achieves the optimal sampling rate. 
\end{rem}

\begin{rem}
The robustness of CSP to deterministic and stochastic measurement noises has also been proved in  \cite{jalali2016compression}. However, for the sake of brevity we do not repeat those results here and only discussed the noiseless setting in Theorem \ref{thm:CSP}.
\end{rem}

Unfortunately,  the positive  theoretical properties of CSP are overshadowed by the fact that  implementing it requires an exhaustive search over the set of all codewords. This makes CSP computationally infeasible for  all real-world applications. In the next section, we propose an efficient CS recovery algorithm that employs compression code and compare its performance with that of CSP. 

\section{Our main contributions}\label{sec:C-GD}

\subsection{Compression-based gradient descent (C-GD)}

As discussed in the last section, CSP is based on an exhaustive search and is computationally infeasible for real-world signals. In response to this drawback of CSP, we propose a computationally efficient and theoretically analyzable approach to approximate the solution of CSP. Towards this goal, inspired by the \emph{projected gradient descent} (PGD) algorithm \cite{rockafellar1976monotone}, we propose the following iterative algorithm: Start from some $\mvec{x}^{0}\in\mathds{R}^n$. For $k=1,2,\ldots$,
\begin{align}
	\label{update_rule_pgdMethod}
	\mvec{x}^{k+1} \leftarrow\, & \P_{\C_r}\LPr{\mvec{x}^{k} + \eta_k \,\mmat{A}^{T}\LPr{\mvec{y} - \mmat{A}\mvec{x}^k}},
\end{align}
where
\begin{align}
\P_{\C_r}\LPr{\mvec{z}} = \argmin{\mvec{u}\in \C_r}\Lp{\mvec{u} - \mvec{z}}{2}{2}.
\end{align}
Here index $k$ denotes the iteration number and $\eta_k \in \setR$ denotes the step size. 
We refer to this algorithm as \emph{compression-based gradient descent} (C-GD). Each iteration of this  algorithm involves performing  two operations. In the first step, it moves in the direction of the negative of $\|\mvec{y}-\mmat{A}\mvec{x}\|_2^2$ to find solutions that are closer to the  $\mvec{y}=\mmat{A}\mvec{u}$ hyperplane. The second step, i.e.,~the projection step, ensures that the estimate C-GD obtains belongs to the codebook.

The first step of C-GD is straightforward and requires two matrix-vector multiplications. For the second step, ideally,  applying the encoder and decoder to a signal $\mvec{x}$ yields   the closest codeword of the compression code. Hence, we make the following assumption about the compression code to $\mvec{x}$.
  \begin{assumption}
In analyzing the performance of C-GD, we assume that the compression code $(f_r,g_r)$ satisfies
\begin{align}\label{approximatedProjection}
\P_{\C_r}\LPr{\mvec{x}}=g_r(f_r(\mvec{x})).
\end{align}
\end{assumption}
 Under Assumption 1, the projection step of C-GD can be implemented efficiently. More precisely, under this assumption, the C-GD algorithm is simplified to 
\begin{align}
	\label{update_rule_pgdMethod}
	\mvec{x}^{k+1} \leftarrow\, &  g_r\left(f_r \LPr{\mvec{x}^{k} + \eta_k \,\mmat{A}^{T}\LPr{\mvec{y} - \mmat{A}\mvec{x}^k}}\right).
\end{align}
Hence, each step of this algorithm requires two matrix-vector multiplication and an application of the encoder and the decoder of the given compression code. In the next section, we summarize our theoretical results regarding the performance of C-GD. Note that for the notational simplicity, we present all our results under Assumption 1. However as will be discussed after Corollary \ref{Corollary_Gauss}, we can relax this assumption and still analyze the iterative algorithm proposed in \eqref{update_rule_pgdMethod}. 


\subsection{Convergence Analysis of C-GD }
\label{sec:converge_analysis}

The objective of this section is to theoretically analyze some of the properties of C-GD. As discussed before, the measurement vector is denoted with $\mvec{y}= \mmat{A} \mvec{x}+ \mvec{z}$, where  $\mvec{x} \in{\cal Q}$, $\mmat{A} \in \setR^{m\times n}$, and $\mvec{z}$ is the noise. Furthermore, we assume that a family of compressions codes $(f_r, g_r)$ parameterized with the rate $r$ that is known for ${\cal Q}$. Starting with $\mvec{x}^{0}$, C-GD uses iterations
\[
\mvec{x}^{k+1} \leftarrow\,  \P_{\C_r}\LPr{\mvec{x}^{k} + \eta \,\mmat{A}^{T}\LPr{\mvec{y} - \mmat{A}\mvec{x}^k}}.
\] 
to obtain a good estimate of $\mvec{x}$. In our theoretical analysis of C-GD, we focus on popular measurement matrices in compressed sensing area namely dense i.i.d. Gaussian and sub-Gaussian matrices. In Section \ref{sec:simulations}, we present our simulation results that confirm the success of C-GD for Fourier matrices as well. However, the theoretical study of this important class of matrices is left for future research. 
We first study the performance of C-GD for i.i.d.~Gaussian measurement matrices. Then, we extend our results to i.i.d.~sub-Gaussian measurement matrices. 

\begin{thm}
\label{noisy_dense_thm_iid_gauss}
Let $\mmat{A} \in \setR^{m\times n}$ be a random Gaussian measurement matrix with i.i.d entries ${A}_{i,j} \sim \N\LPr{0, \sigma_a^2}$, and $\mvec{z} \in\setR^m$ be an i.i.d. Gaussian noise vector with $z_i \sim \N\LPr{\mvec{0}, \sigma_z^2}$. Let $\eta = \frac{1}{m\sigma_a^2}$ and define $ \tilde{\mvec{x}} =\P_{\C_r}(\mvec{x})$, where $\P_{\C_r}(\cdot)$ is defined in \eqref{approximatedProjection}. If Assumption 1 holds, then,  given $\epsilon>0$, for  $m\ge 80r\LPr{1+ \epsilon},$  with a probability larger than $1-2^{-2\epsilon r+1}$, we have
\begin{align}
\|\mvec{x}^{k+1} - \tilde{\mvec{x}}\|_2 \le  0.9\|\mvec{x}^k - \tilde{\mvec{x}}\|_2 +  2\LPr{2+\sqrt{\frac{n}{m}}}^2\delta + \frac{\sigma_z}{\sigma_a}\sqrt{\frac{8(1+\epsilon)r}{m}}, 
\end{align}
for $k=0,1,2,\ldots$. 
\end{thm}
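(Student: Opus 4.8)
The plan is to treat the C-GD recursion as a projected-gradient iteration and to exploit that, under Assumption 1, $\P_{\C_r}$ is the \emph{exact} Euclidean projection onto the finite codebook $\C_r$. Write $\mvec{w}^k = \mvec{x}^{k} + \eta\,\mmat{A}^{T}(\mvec{y} - \mmat{A}\mvec{x}^k)$, so that $\mvec{x}^{k+1} = \P_{\C_r}(\mvec{w}^k)$, and set $\mvec{e}^{k} = \mvec{x}^{k} - \tilde{\mvec{x}}$. Since $\tilde{\mvec{x}} = \P_{\C_r}(\mvec{x}) \in \C_r$ and $\mvec{x}^{k+1}$ is the closest codeword to $\mvec{w}^k$, we have $\Lp{\mvec{x}^{k+1} - \mvec{w}^k}{2}{2} \le \Lp{\tilde{\mvec{x}} - \mvec{w}^k}{2}{2}$; expanding the squares and rearranging yields the basic one-step bound
\begin{align}
\Lp{\mvec{e}^{k+1}}{2}{2} \le 2\LPd{\mvec{e}^{k+1},\, \mvec{w}^k - \tilde{\mvec{x}}}.
\end{align}
Substituting $\mvec{y} = \mmat{A}\mvec{x} + \mvec{z}$ and $\mvec{x} = \tilde{\mvec{x}} + \mvec{d}$, where $\mvec{d} = \mvec{x} - \tilde{\mvec{x}}$ satisfies $\Lp{\mvec{d}}{2}{} \le \delta$ by the definition of the distortion, gives the decomposition
\begin{align}
\mvec{w}^k - \tilde{\mvec{x}} = (\mmat{I} - \eta\,\mmat{A}^{T}\mmat{A})\mvec{e}^k + \eta\,\mmat{A}^{T}\mmat{A}\mvec{d} + \eta\,\mmat{A}^{T}\mvec{z}.
\end{align}

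I would then bound the three resulting inner products separately and divide through by $\Lp{\mvec{e}^{k+1}}{2}{}$ at the end, treating $\mvec{e}^{k+1} = \mvec{0}$ trivially. The crucial observation is that both $\mvec{e}^{k}$ and $\mvec{e}^{k+1}$ lie in the \emph{finite} set $\C_r - \tilde{\mvec{x}}$ of at most $2^{r}$ points, so every estimate need only hold uniformly over this set rather than over the whole sphere $S^{n-1}$. With the choice $\eta = 1/(m\sigma_a^2)$ we have $\Eox{\eta\,\mmat{A}^{T}\mmat{A}} = \mmat{I}$, so the contraction term $\LPd{\mvec{e}^{k+1}, (\mmat{I} - \eta\mmat{A}^{T}\mmat{A})\mvec{e}^k}$ has zero mean; via the polarization identity it reduces to controlling $\big|\Lp{\mmat{A}\mvec{u}}{2}{2}/(m\sigma_a^2) - 1\big|$ for the $O(2^{2r})$ normalized pairwise sums and differences $\mvec{u}$ drawn from $\C_r - \tilde{\mvec{x}}$. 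Because $\Lp{\mmat{A}\mvec{u}}{2}{2}/\sigma_a^2$ is a $\chi^2_m$ variable, a Laurent--Massart tail bound together with a union bound over these $O(2^{2r})$ vectors shows that, once $m \ge 80r(1+\eps)$, the restricted isometry constant is small enough to force $\LPd{\mvec{e}^{k+1}, (\mmat{I} - \eta\mmat{A}^{T}\mmat{A})\mvec{e}^k} \le 0.45\,\Lp{\mvec{e}^{k+1}}{2}{}\Lp{\mvec{e}^{k}}{2}{}$, which after the leading factor $2$ produces the $0.9$ coefficient.

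For the distortion term I would apply Cauchy--Schwarz to obtain $\eta\LPd{\mmat{A}\mvec{e}^{k+1}, \mmat{A}\mvec{d}} \le \eta\,\sigma_{\max}(\mmat{A})^2\,\Lp{\mvec{e}^{k+1}}{2}{}\,\delta$ and invoke the standard Gaussian bound $\sigma_{\max}(\mmat{A}) \le \sigma_a(2\sqrt{m} + \sqrt{n})$ (taking the deviation parameter of order $\sqrt{m}$, which fails only with probability exponentially small in $m$); with $\eta = 1/(m\sigma_a^2)$ this collapses to exactly $2(2 + \sqrt{n/m})^2\delta$. For the noise term, conditioning on $\mmat{A}$ makes $\LPd{\mmat{A}\mvec{u}, \mvec{z}}$ Gaussian with variance $\sigma_z^2\Lp{\mmat{A}\mvec{u}}{2}{2}$; a Gaussian maximal inequality over the $\le 2^{r}$ directions $\mvec{u}$, combined with the bound $\Lp{\mmat{A}\mvec{u}}{2}{2} \lesssim m\sigma_a^2$ already secured in the contraction step, yields $2\eta\LPd{\mmat{A}\mvec{e}^{k+1}, \mvec{z}} \le (\sigma_z/\sigma_a)\sqrt{8(1+\eps)r/m}\,\Lp{\mvec{e}^{k+1}}{2}{}$, the claimed offset. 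Collecting the three bounds, dividing by $\Lp{\mvec{e}^{k+1}}{2}{}$, and taking a union bound over the (at most three) failure events---each of probability $O(2^{-2\eps r})$---delivers the stated recursion with probability $1 - 2^{-2\eps r+1}$.

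The step I expect to be the main obstacle is the uniform control of the contraction term: unlike in a fixed-vector analysis, $\mvec{e}^{k+1}$ is itself data-dependent, since it is produced by the very matrix $\mmat{A}$ whose randomness we are exploiting, so the concentration must hold \emph{simultaneously} for every codeword direction, and the bookkeeping must be tight enough that the chi-squared exponent at $m = 80r(1+\eps)$ both beats the $2^{2r}$ union-bound factor and leaves the residual $2^{-2\eps r}$ probability. Forcing the numerical constants (the $0.45 \to 0.9$ passage, the $80$, and the $8(1+\eps)$) to line up simultaneously across all three terms is where the real care is required; by comparison, the distortion and noise terms are routine Cauchy--Schwarz-plus-concentration arguments.
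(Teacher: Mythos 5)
Your proposal is correct and shares the paper's overall skeleton --- the same projection-optimality inequality $\Lp{\mvec{e}^{k+1}}{2}{2}\le 2\LPd{\mvec{e}^{k+1},\mvec{w}^k-\tilde{\mvec{x}}}$, the same three-term decomposition into contraction, distortion, and noise, and an identical treatment of the distortion term via $\sigma_{\max}(\mmat{A})\le \sigma_a(2\sqrt{m}+\sqrt{n})$ --- but it substitutes genuinely different concentration tools for two of the three terms. For the contraction term, the paper invokes a specialized concentration result for the bilinear form $\LPd{\mvec{u},\mvec{v}}-\frac{1}{m\sigma_a^2}\LPd{\mmat{A}\mvec{u},\mmat{A}\mvec{v}}$ imported from prior work (Lemma \ref{thm_dense_Gauss_mtx_Lemma} and Corollary \ref{lem_dense_Gauss_mtx}, giving the $0.45$ threshold with failure probability $2^{-m/20}$ per pair), unioned over pairs from the set $\C'$ of normalized codeword differences, i.e.\ over $|\C'|^2\le 2^{4r}$ events; you instead reduce to quadratic forms by polarization and apply Laurent--Massart $\chi^2_m$ tails over the $O(2^{2r})$ normalized sums and differences built from $\C_r-\tilde{\mvec{x}}$. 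Your route is more elementary and self-contained (no implicit rate function $f^*(t)$ to evaluate), and your union-bound set is actually smaller than the paper's, since you exploit that $\tilde{\mvec{x}}$ is fixed rather than ranging over all codeword pairs; the paper's lemma, in exchange, handles the bilinear form directly without the factor-of-four bookkeeping that polarization forces on the deviation parameter. For the noise term, the paper uses the rotation-invariance identity of Lemma \ref{same_distribution_lemma} to write $\LPd{\underline{\mvec{\theta}}^{k+1},\mmat{A}^T\mvec{z}}\stackrel{d}{=}\sigma_a\Lp{\mvec{z}}{2}{}\LPd{\underline{\mvec{\theta}}^{k+1},\mvec{g}}$ and then controls the two factors by separate $\chi^2$ events ($\E_3$, $\E_4$); your conditioning-on-$\mmat{A}$ argument with a Gaussian maximal inequality over the $\le 2^r$ codeword directions, recycling the bound $\Lp{\mmat{A}\mvec{u}}{2}{2}\lesssim m\sigma_a^2$ from the contraction step, is an equally valid alternative that avoids the distributional identity, though as you note the constant $8(1+\epsilon)$ then requires balancing the restricted-isometry slack against the maximal-inequality term rather than falling out of the clean choices $\tau_1'=1$, $\tau_2'=4(1+\epsilon)r-1$ made in the paper. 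One shared caveat worth recording: both your argument and the paper's cover $\underline{\mvec{e}}^{k}$ by a union bound over codeword directions, which for $k=0$ implicitly requires the initialization $\mvec{x}^0$ to lie in $\C_r$ (or to be handled by enlarging the union-bound set by one direction), since the algorithm permits arbitrary $\mvec{x}^0\in\setR^n$.
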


The proof of Theorem \ref{noisy_dense_thm_iid_gauss} is given in Section \ref{proof_noisy_dense_thm_iid_gauss}.%

A few important features of this theorem are discussed in the following remarks.

\begin{rem}
According to Theorem \ref{noisy_dense_thm_iid_gauss}, C-GD requires $\Omega(r(\delta))$ measurements (for small values of $\delta$) to obtain an accurate estimate. Hence, according to this theorem, even in the noiseless setting we should not let $\delta \rightarrow 0$. Otherwise, for many signal classes $r(\delta) \rightarrow \infty$ and hence C-GD will require more measurements than the ambient dimension of $\mvec{x}_o$. As we will demonstrate in several examples in Section \ref{sec:standardsignals}, one can set $\delta$ to a small dimension-dependent value, e.g. $\delta = 1/n$, to ensure that C-GD can obtain a very good estimate with a few observations. Section \ref{sec:standardsignals} studies how $\delta$ is set and connects Theorem \ref{noisy_dense_thm_iid_gauss} to some classical results in compressed sensing.    
\end{rem}

\begin{rem}
According to Theorem \ref{thm:CSP}, CSP requires $\Omega\Big(\frac{r(\delta)}{\log(1/\delta)}\Big)$. This implies that, in the noiseless setting, for fixed $n$ and $m$, the estimate of CSP improves as $\delta$ decreases. However, this seems not to be the case for C-GD. According to Theorem \ref{noisy_dense_thm_iid_gauss}, C-GD requires more than $\Omega(r(\delta))$.  Hence, as $\delta$ decreases, C-GD requires more measurements. This is not an issue in almost all the applications of compressed sensing, where $n$ is very large and hence there is not much difference between setting $\delta = 1/n$ and $\delta=0$. However, from a theoretical perspective it is interesting to discover whether this mismatch is an artifact of our proof technique or it is a fundamental loss that is incurred by the reduction in the computational complexity. This question is left for future research. 
\end{rem}

We postpone the discussion of the convergence rate and the reconstruction error to Section \ref{sec:standardsignals}, where we discuss some classical examples and compare the conclusion of this theorem with some classical results in compressed sensing. In the setup considered in  Theorem \ref{noisy_dense_thm_iid_gauss}, as $n$ increases the per measurement's signal to noise ratio (SNR) is approaching infinity. Note that by scaling the measurement matrix, we can also obtain results for fixed SNR. The next corollary clarifies our claim.

\begin{coro}
\label{Corollary_Gauss}
Consider the  setup  of Theorem \ref{noisy_dense_thm_iid_gauss}, where now  $\mmat{A} \in \setR^{m\times n}$ is a random Gaussian measurement matrix with i.i.d entries ${A}_{i,j} \sim \N\LPr{0, \frac{\sigma_a^2}{n}}$ and $\eta={n \over \sigma_a^2 m}$. If Assumption 1 holds, then, given $\epsilon>0$, for  $m\ge 80r\LPr{1+ \epsilon}$, with a probability larger than $1-2^{-2\epsilon r+1}$, for $k=0,1,2,\ldots$, we have
\begin{align}\label{eq:noisyfromcor}
\frac{1}{\sqrt{n}}\|\mvec{x}^{k+1} - \tilde{\mvec{x}}\|_2 \le  \frac{0.9}{\sqrt{n}}\|\mvec{x}^k - \tilde{\mvec{x}}\|_2 +  2\LPr{2+\sqrt{\frac{n}{m}}}^2{\delta \over \sqrt{n}} + \frac{\sigma_z}{\sigma_a}\sqrt{\frac{8(1+\epsilon)r}{m}}.
\end{align}
\end{coro}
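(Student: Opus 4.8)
The plan is to prove Corollary~\ref{Corollary_Gauss} by a direct reduction to Theorem~\ref{noisy_dense_thm_iid_gauss} through a deterministic rescaling, rather than redoing the analysis. The key observation is that the only difference between the two statements is the entry variance of $\mmat{A}$ (here $\sigma_a^2/n$ versus $\sigma_a^2$ there) together with a compensating change of step size; everything else is identical.

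First I would introduce the rescaled quantities $\mmat{A}'\triangleq\sqrt{n}\,\mmat{A}$, $\mvec{y}'\triangleq\sqrt{n}\,\mvec{y}$, and $\mvec{z}'\triangleq\sqrt{n}\,\mvec{z}$. Since $A_{i,j}\sim\N(0,\sigma_a^2/n)$, the entries of $\mmat{A}'$ are i.i.d.\ $\N(0,\sigma_a^2)$, which matches the hypothesis of Theorem~\ref{noisy_dense_thm_iid_gauss}; likewise $\mvec{z}'$ is i.i.d.\ Gaussian with standard deviation $\sigma_{z'}=\sqrt{n}\,\sigma_z$, and the model $\mvec{y}=\mmat{A}\mvec{x}+\mvec{z}$ becomes $\mvec{y}'=\mmat{A}'\mvec{x}+\mvec{z}'$. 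Note that $\mvec{x}$, and hence its projection $\tilde{\mvec{x}}=\P_{\C_r}(\mvec{x})$, is untouched by this rescaling.

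The one step that requires care is verifying that the C-GD iterates coincide under the two parametrizations. Substituting $\mmat{A}=\mmat{A}'/\sqrt{n}$ and $\mvec{y}=\mvec{y}'/\sqrt{n}$ into the corollary's update with $\eta=n/(\sigma_a^2 m)$ gives
\begin{align*}
\mvec{x}^{k}+\eta\,\mmat{A}^{T}\LPr{\mvec{y}-\mmat{A}\mvec{x}^{k}}=\mvec{x}^{k}+\frac{1}{\sigma_a^2 m}\,(\mmat{A}')^{T}\LPr{\mvec{y}'-\mmat{A}'\mvec{x}^{k}},
\end{align*}
so the pre-projection point equals the gradient step of Theorem~\ref{noisy_dense_thm_iid_gauss} run with $\eta'=1/(\sigma_a^2 m)$. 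Because $\P_{\C_r}$ then acts on the same point, the entire sequence $\{\mvec{x}^k\}$ produced by the two setups is identical. The factor $\eta=n\eta'$ is exactly what absorbs the $n^{-1}$ in $\mmat{A}^{T}\mmat{A}$ caused by the variance scaling, and this is the only place where a miscalculation could creep in.

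Finally I would invoke Theorem~\ref{noisy_dense_thm_iid_gauss} for the primed system. The sample-size condition $m\ge 80r(1+\epsilon)$ and the success probability $1-2^{-2\epsilon r+1}$ involve only $m$, $r$, and $\epsilon$, none of which change, so they transfer verbatim. The theorem's bound for $(\mmat{A}',\mvec{y}',\eta')$, after substituting $\sigma_{z'}=\sqrt{n}\,\sigma_z$, reads
\begin{align*}
\|\mvec{x}^{k+1}-\tilde{\mvec{x}}\|_2\le 0.9\|\mvec{x}^{k}-\tilde{\mvec{x}}\|_2+2\LPr{2+\sqrt{\tfrac{n}{m}}}^2\delta+\frac{\sqrt{n}\,\sigma_z}{\sigma_a}\sqrt{\frac{8(1+\epsilon)r}{m}}.
\end{align*}
Dividing both sides by $\sqrt{n}$ yields precisely \eqref{eq:noisyfromcor}. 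I do not expect a genuine obstacle here: the whole content is bookkeeping of the three scalings, and the only subtlety is confirming that the step-size choice makes the iterate sequences coincide exactly rather than merely approximately.
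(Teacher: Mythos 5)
Your proof is correct and is essentially the paper's own argument: the paper presents Corollary \ref{Corollary_Gauss} without a separate proof, as an immediate rescaling consequence of Theorem \ref{noisy_dense_thm_iid_gauss} (``by scaling the measurement matrix''), which is precisely your reduction via $\mmat{A}'=\sqrt{n}\,\mmat{A}$, $\mvec{y}'=\sqrt{n}\,\mvec{y}$, $\mvec{z}'=\sqrt{n}\,\mvec{z}$ and $\eta=n\eta'$. Your verification that the iterate sequences coincide exactly and that the noise standard deviation $\sqrt{n}\,\sigma_z$ cancels the division by $\sqrt{n}$ fills in the bookkeeping the paper leaves implicit, with no gaps.
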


Finally, Assumption 1 seems to play a critical role in Theorem \ref{noisy_dense_thm_iid_gauss} and Corollary \ref{Corollary_Gauss}. However, thanks to the linear convergence of C-GD one can relax Assumption 1 in several ways and still obtain recovery guarantees for C-GD. For the sake of brevity we only mention one such result in the paper.

\begin{thm}\label{thm:imperfectproject}
Consider the  setup  of Theorem \ref{noisy_dense_thm_iid_gauss}, where now  $\mmat{A} \in \setR^{m\times n}$ is a random Gaussian measurement matrix with i.i.d entries ${A}_{i,j} \sim \N\LPr{0, \frac{\sigma_a^2}{n}}$ and $\eta={n \over \sigma_a^2 m}$. Suppose that $\sup_{\mvec{x}} \|g_r(f_r(\mvec{x})) - \mathcal{P}_{\mathcal{C}_r}(\mvec{x})\|_2 \leq \xi$. Then, given $\epsilon>0$, for  $m\ge 80r\LPr{1+ \epsilon}$, with a probability larger than $1-2^{-2\epsilon r+1}$, for $k=0,1,2,\ldots$, we have
\begin{align}\label{eq:noisyfromcor}
\frac{1}{\sqrt{n}}\|\mvec{x}^{k+1} - \tilde{\mvec{x}}\|_2 \le  \frac{0.9}{\sqrt{n}}\|\mvec{x}^k - \tilde{\mvec{x}}\|_2 +  2\LPr{2+\sqrt{\frac{n}{m}}}^2{\delta \over \sqrt{n}} + \frac{\sigma_z}{\sigma_a}\sqrt{\frac{8(1+\epsilon)r}{m}} + \frac{\xi}{\sqrt{n}}.
\end{align}
\end{thm}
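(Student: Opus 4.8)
The plan is to reduce Theorem \ref{thm:imperfectproject} to the one-step contraction already established in Corollary \ref{Corollary_Gauss} by decoupling the imperfect projection into an ideal projection plus a uniformly bounded perturbation. Under the hypothesis here, the update actually computed by C-GD is $\mvec{x}^{k+1} = g_r\LPr{f_r\LPr{\mvec{w}^k}}$, where I write $\mvec{w}^k \triangleq \mvec{x}^k + \eta\,\mmat{A}^T\LPr{\mvec{y}-\mmat{A}\mvec{x}^k}$ for the gradient step. I introduce the \emph{ideal} iterate $\hat{\mvec{x}}^{k+1} \triangleq \P_{\C_r}\LPr{\mvec{w}^k}$, which is exactly the quantity Corollary \ref{Corollary_Gauss} analyzes. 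Applying the hypothesis $\sup_{\mvec{x}}\Lp{g_r(f_r(\mvec{x}))-\P_{\C_r}(\mvec{x})}{2}{}\le\xi$ at the point $\mvec{w}^k$ yields $\Lp{\mvec{x}^{k+1}-\hat{\mvec{x}}^{k+1}}{2}{}\le\xi$, so the computed and ideal iterates never separate by more than $\xi$ in any single step.

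First I would record that the proof of Corollary \ref{Corollary_Gauss} (equivalently Theorem \ref{noisy_dense_thm_iid_gauss} after the variance rescaling) is really a per-iteration statement: on a single event of probability at least $1-2^{-2\epsilon r+1}$ over $\mmat{A}$ and $\mvec{z}$, the bound
\begin{align*}
\frac{1}{\sqrt{n}}\Lp{\P_{\C_r}(\mvec{w})-\tilde{\mvec{x}}}{2}{} \le \frac{0.9}{\sqrt{n}}\Lp{\mvec{u}-\tilde{\mvec{x}}}{2}{} + 2\LPr{2+\sqrt{\tfrac{n}{m}}}^2\frac{\delta}{\sqrt{n}} + \frac{\sigma_z}{\sigma_a}\sqrt{\frac{8(1+\epsilon)r}{m}}
\end{align*}
holds for every current point $\mvec{u}\in\setR^n$, with $\mvec{w}=\mvec{u}+\eta\,\mmat{A}^T(\mvec{y}-\mmat{A}\mvec{u})$. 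The crucial observation is that this good event depends only on $\mmat{A}$ and $\mvec{z}$ and controls $\mmat{A}$ uniformly over the finitely many codeword directions $\LKr{\mvec{c}-\tilde{\mvec{x}}:\mvec{c}\in\C_r}$; since $\P_{\C_r}(\mvec{w})\in\C_r$ no matter how $\mvec{u}$ was produced, the inequality is insensitive to the history of the iterates. I would therefore instantiate it with $\mvec{u}=\mvec{x}^k$ and $\mvec{w}=\mvec{w}^k$, obtaining the contraction bound for $\hat{\mvec{x}}^{k+1}$ expressed in terms of the actual, imperfectly projected iterate $\mvec{x}^k$.

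Combining the two ingredients through the triangle inequality
\begin{align*}
\frac{1}{\sqrt{n}}\Lp{\mvec{x}^{k+1}-\tilde{\mvec{x}}}{2}{} \le \frac{1}{\sqrt{n}}\Lp{\hat{\mvec{x}}^{k+1}-\tilde{\mvec{x}}}{2}{} + \frac{1}{\sqrt{n}}\Lp{\mvec{x}^{k+1}-\hat{\mvec{x}}^{k+1}}{2}{},
\end{align*}
and then substituting the contraction bound for the first term and $\xi/\sqrt{n}$ for the second, produces exactly the recursion claimed in Theorem \ref{thm:imperfectproject}. This is essentially the whole argument: no new concentration estimate is required, since the additional error enters additively and is already uniformly controlled by the hypothesis.

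The single step that needs care — and which I expect to be the only genuine obstacle — is justifying that Corollary \ref{Corollary_Gauss} may be invoked with $\mvec{x}^k$ as the current point even though $\mvec{x}^k$ is no longer an exact projection of $\mvec{w}^{k-1}$. This amounts to verifying two facts about the proof of Theorem \ref{noisy_dense_thm_iid_gauss}: that its one-step bound is derived for an \emph{arbitrary} current iterate (which it is, because the underlying union-bound/covering argument quantifies over the codebook $\C_r$ rather than over a prescribed trajectory), and that the imperfect iterates remain admissible current points. The latter holds because $g_r(f_r(\cdot))$ always returns an element of the range of $g_r$, hence a codeword in $\C_r\subseteq\setR^n$, so $\mvec{x}^k$ is an admissible current point for every $k\ge 1$, while $\mvec{x}^0\in\setR^n$ is admissible for the first step. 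Once this uniformity is confirmed, the same high-probability event carries all iterations simultaneously and the stated recursion follows.
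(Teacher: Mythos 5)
Your proposal is correct and follows essentially the same route as the paper: the paper's proof likewise introduces the ideal projection $\P_{\C_r}(\mvec{s}^{k+1})$ of the gradient step, uses the triangle inequality together with the hypothesis on $g_r\circ f_r$ to get $\|\mvec{x}^{k+1}-\tilde{\mvec{x}}\|_2 \leq \|\P_{\C_r}(\mvec{s}^{k+1})-\tilde{\mvec{x}}\|_2 + \xi$, and then reruns the one-step argument of Theorem \ref{noisy_dense_thm_iid_gauss} on the ideal projection. Your explicit verification that the high-probability event is uniform over codeword differences, so the contraction applies to the imperfectly projected trajectory, is precisely the (implicit) justification behind the paper's ``follow exactly the same steps'' remark.
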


The proof can be found in Section \ref{ssec:proof:thm:imperfectproject}. Note that at every iteration the imperfect projection introduces an error. These errors accumulate as the algorithm proceeds. However, thanks to the linear convergence of the algorithm the overall error caused by the imperfect projection remains at the order of $O(\xi/\sqrt{n})$. 

 All our results so far have been stated for Gaussian measurement matrices. However, they can be generalized to subgaussian matrices too. To prove our claim we extend one of our results, i.e. Theorem \ref{noisy_dense_thm_iid_gauss},  to sub-Gaussian matrices below. 
Our next theorem shows that Theorem \ref{noisy_dense_thm_iid_gauss} can be extended to sub-Gaussian measurement matrices. 
 
 \begin{thm}
\label{noisy_dense_thm_iid_sub}
Let $\mmat{A} \in \setR^{m\times n}$ be a zero-mean random sub-Gaussian measurement matrix with i.i.d entries, such that $\Lp{{A}_{i,j}}{\psi_2}{} \le K$ and $\Eox{{A}_{i,j}^2}=\sigma_a^2$. The noise vector $\mvec{z} $ is distributed as $\N\LPr{\mvec{0}, \sigma_z^2\mmat{I}_{m\times m}}$. Set $\eta = \frac{1}{m\sigma_a^2}$. Then, given $\epsilon>0$ and $\mu_0\in(0,1)$, such that $\mu_0 \sigma_a^2\leq 2K^2$, for 
$$
m >\LPr{\frac{16K^4(1+\epsilon)}{\mu_0^2  \sigma_a^4 \log {\rm e}   }}r,
$$ 
with probability at least 
\[
1- 2^{-4r\epsilon}-{\rm e}^{-{m\over 4}}- 2^{-2r\epsilon},
\]
for $k=0,1,2,\ldots$, 
\begin{align}
\|\mvec{x}^{k+1} - \tilde{\mvec{x}}\|_2 \le \mu_0\|\mvec{x}^k - \tilde{\mvec{x}}\|_2 +   8\LPr{1+\frac{3Kn}{\sigma_a^2m}}\delta  + {9K\sigma_z\over \sigma_a^2} \sqrt{  r(1+\epsilon)\over m }.
\end{align}
\end{thm}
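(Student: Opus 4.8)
The plan is to follow the architecture of the proof of Theorem \ref{noisy_dense_thm_iid_gauss}, replacing its Gaussian concentration estimates with sub-Gaussian ones built from Lemma \ref{prod_subgaus} and a Bernstein-type bound. Writing $\mvec{u}^k = \mvec{x}^k + \eta\mmat{A}^T(\mvec{y} - \mmat{A}\mvec{x}^k)$ so that $\mvec{x}^{k+1} = \P_{\C_r}(\mvec{u}^k)$, I would start from the defining inequality of the projection: since $\tilde{\mvec{x}}\in\C_r$, we have $\|\mvec{x}^{k+1} - \mvec{u}^k\|_2^2 \le \|\tilde{\mvec{x}} - \mvec{u}^k\|_2^2$, which after expanding gives
\begin{align*}
\|\mvec{x}^{k+1} - \tilde{\mvec{x}}\|_2 \le 2\langle \mvec{v}, \mvec{u}^k - \tilde{\mvec{x}}\rangle, \qquad \mvec{v} := \frac{\mvec{x}^{k+1} - \tilde{\mvec{x}}}{\|\mvec{x}^{k+1} - \tilde{\mvec{x}}\|_2}.
\end{align*}
Substituting $\mvec{y} = \mmat{A}\mvec{x} + \mvec{z}$ and inserting $\pm\tilde{\mvec{x}}$ in the gradient term decomposes $\mvec{u}^k - \tilde{\mvec{x}} = (\mmat{I} - \eta\mmat{A}^T\mmat{A})(\mvec{x}^k - \tilde{\mvec{x}}) + \eta\mmat{A}^T\mmat{A}(\mvec{x} - \tilde{\mvec{x}}) + \eta\mmat{A}^T\mvec{z}$, yielding
\begin{align*}
\|\mvec{x}^{k+1} - \tilde{\mvec{x}}\|_2 \le \underbrace{2\langle\mvec{v}, (\mmat{I} - \eta\mmat{A}^T\mmat{A})(\mvec{x}^k - \tilde{\mvec{x}})\rangle}_{T_1} + \underbrace{2\eta\langle\mvec{v}, \mmat{A}^T\mmat{A}(\mvec{x} - \tilde{\mvec{x}})\rangle}_{T_2} + \underbrace{2\eta\langle\mvec{v}, \mmat{A}^T\mvec{z}\rangle}_{T_3},
\end{align*}
whose three terms I expect to produce the contraction, distortion, and noise terms of the claim.

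The observation that makes a union bound feasible is that $\mvec{x}^{k+1}=\P_{\C_r}(\mvec{u}^k)\in\C_r$, and (taking $\mvec{x}^0\in\C_r$, so that all iterates are codewords) both $\mvec{x}^{k+1}-\tilde{\mvec{x}}$ and $\mvec{w}:=\mvec{x}^k-\tilde{\mvec{x}}$ are differences of codewords from the \emph{fixed} codebook $\C_r$. Since $|\C_r|\le 2^r$, the unit vector $\mvec{v}$ and the normalized version $\hat{\mvec{w}}$ of $\mvec{w}$ each range over a deterministic set of at most $2^r$ unit vectors that are independent of $\mmat{A}$, so each term can be controlled uniformly over codeword directions and then combined by union bound.

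The main obstacle is the contraction term $T_1$. With $\eta=1/(m\sigma_a^2)$ one has $\EE[\eta\mmat{A}^T\mmat{A}]=\mmat{I}$, so for a fixed pair $(\mvec{v},\hat{\mvec{w}})$ the quantity $\langle\mvec{v}, (\mmat{I} - \eta\mmat{A}^T\mmat{A})\hat{\mvec{w}}\rangle = \langle\mvec{v},\hat{\mvec{w}}\rangle - \eta\sum_{i=1}^m (\mmat{A}\mvec{v})_i(\mmat{A}\hat{\mvec{w}})_i$ has mean zero. Each coordinate $(\mmat{A}\mvec{v})_i$ is a sum of independent zero-mean sub-Gaussian entries, hence sub-Gaussian with $\psi_2$-norm $O(K)$, and by Lemma \ref{prod_subgaus} the product $(\mmat{A}\mvec{v})_i(\mmat{A}\hat{\mvec{w}})_i$ is sub-exponential with $\psi_1$-norm $O(K^2)$. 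Applying a Bernstein inequality to the sum of these $m$ independent centered sub-exponential variables, the target deviation $\mu_0/2$ corresponds to $|S-\EE S|\le \mu_0 m\sigma_a^2/2$, and the hypothesis $\mu_0\sigma_a^2\le 2K^2$ is exactly what places this deviation in the quadratic (sub-Gaussian) regime of Bernstein, giving a single-pair failure probability of the form $2\exp(-c\,\mu_0^2\sigma_a^4 m/K^4)$. A union bound over the $\le 2^{2r}$ pairs, together with $m>16K^4(1+\epsilon)r/(\mu_0^2\sigma_a^4\log{\rm e})$, then forces $T_1\le \mu_0\|\mvec{x}^k-\tilde{\mvec{x}}\|_2$ with failure probability $\le 2^{-4r\epsilon}$. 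The delicate parts here are tracking absolute constants through the sub-Gaussian/sub-exponential norm inequalities and checking that the regime condition makes the exponent scale linearly in $r$ so that it beats the $2^{2r}$ cardinality.

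The remaining two terms are routine. For $T_2$ I would bound $T_2\le 2\eta\,\sigma_{\max}(\mmat{A})^2\,\|\mvec{x}-\tilde{\mvec{x}}\|_2\le 2\eta\,\sigma_{\max}(\mmat{A})^2\delta$ using $\|\mvec{x}-\tilde{\mvec{x}}\|_2\le\delta$, and invoke the standard operator-norm bound for sub-Gaussian matrices $\sigma_{\max}(\mmat{A})^2\le 4m\sigma_a^2+12K^2 n$, valid with probability at least $1-{\rm e}^{-m/4}$; substituting $\eta=1/(m\sigma_a^2)$ reproduces the distortion factor $8(1+3K^2 n/(\sigma_a^2 m))$. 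For $T_3=2\eta\langle\mmat{A}\mvec{v},\mvec{z}\rangle$ I would condition on $\mmat{A}$: since $\mvec{z}\sim\N(\mvec{0},\sigma_z^2\mmat{I})$ is independent of $\mmat{A}$, this inner product is, for fixed $\mvec{v}$, Gaussian with standard deviation $\sigma_z\|\mmat{A}\mvec{v}\|_2$; combining a uniform bound $\|\mmat{A}\mvec{v}\|_2\lesssim K\sqrt{m}$ over the $\le 2^r$ codeword directions (again via sub-exponential concentration of $\|\mmat{A}\mvec{v}\|_2^2$) with a Gaussian tail bound and a union bound over those directions gives $T_3\lesssim (K\sigma_z/\sigma_a^2)\sqrt{r(1+\epsilon)/m}$ with failure probability $\le 2^{-2r\epsilon}$. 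Finally, a union bound over the three failure events yields the stated probability $1-2^{-4r\epsilon}-{\rm e}^{-m/4}-2^{-2r\epsilon}$, and summing the three bounds through the triangle inequality recovers the recursion.
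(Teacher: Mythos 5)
Your proposal is correct and follows essentially the same route as the paper's proof: the same projection inequality and three-term decomposition as in Theorem \ref{noisy_dense_thm_iid_gauss}, the same handling of the contraction term via Lemma \ref{prod_subgaus} plus a Bernstein bound (this is precisely the paper's Lemma \ref{thm_dense_subGauss_mtx}), and the same operator-norm bound for the distortion term (Corollary \ref{bound_SV_matrix_lemma}). The one genuine divergence is the noise term: the paper does not condition on $\mmat{A}$, but instead treats each product $(\mmat{A}\mvec{u})_i z_i$ directly as a sub-exponential variable (using $\Lp{z_i}{\psi_2}{}=\sigma_z\sqrt{8/3}$) and applies the same Bernstein inequality with a union over $\C'$; your conditioning argument, exploiting the exact Gaussianity of $\mvec{z}$ together with a uniform bound on $\|\mmat{A}\mvec{v}\|_2$, is an equally valid variant yielding the same order, at the cost of one extra uniform event. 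Two bookkeeping points deserve correction. First, the set of normalized codeword differences has cardinality at most $|\C|^2\le 2^{2r}$, not $2^r$, so the union for the contraction term runs over at most $2^{4r}$ pairs; this larger count is exactly what makes the constant $16K^4$ in the measurement threshold necessary (a per-pair failure probability of $2^{-(\log{\rm e})\mu_0^2\sigma_a^4 m/(4K^4)}$ against $2^{4r}$ pairs gives $2^{-4r\epsilon}$ precisely when $m>16K^4(1+\epsilon)r/(\mu_0^2\sigma_a^4\log{\rm e})$), whereas your count would misleadingly suggest a smaller constant suffices. Second, your distortion factor $8\bigl(1+3K^2n/(\sigma_a^2 m)\bigr)$ disagrees with the stated $8\bigl(1+3Kn/(\sigma_a^2 m)\bigr)$; your $K^2$ is the dimensionally consistent outcome of the Bernstein computation, and the mismatch traces to the paper's own operator-norm lemma (Theorem \ref{bound_SV_subgmatrix_thm}), so this is not a flaw in your reasoning. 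You also correctly make explicit the requirement $\mvec{x}^0\in\C_r$, so that all normalized error directions lie in the fixed codeword-difference set, a point the paper leaves implicit.
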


Proof is given in Section \ref{proof_noisy_dense_thm_iid_sub}. Since all the terms in Theorem \ref{noisy_dense_thm_iid_sub} are similar to the corresponding terms in Theorem \ref{noisy_dense_thm_iid_gauss} we do not discuss them here.  We only discuss the convergence rate. Note that the convergence rate $\mu_0$ has a direct impact on the number of measurements. If we want the convergence to be fast ($\mu_0$ to be small), we should either increase the number of measurements or decrease the rate $r$. If we decrease the rate, the two error terms $ 8\LPr{1+\frac{3Kn}{\sigma_a^2m}}\delta  + {9K\sigma_z\over \sigma_a^2} \sqrt{  r(1+\epsilon) \over m }$ grow.

%

\section{Standard signal classes}\label{sec:standardsignals}
In this section we discuss the corollaries of our main theorem for the following two standard signal classes that have been studied extensively in the literature: (i) sparse signals, and (ii) piecewise polynomials. For each class, we first construct a simple compression algorithm that can be efficiently implemented in practice, and then explain the implications of C-GD and its analysis for that class. These examples enable us to shed light on different aspects of C-GD, such as (i) convergence rate, (ii) number of measurements, (iii) reconstruction error in noiseless setting, and (iv) reconstruction error in the presence of noise.
\subsection{Sparse signals}
 Let $\mathcal{B}_p^n(\rho) \triangleq \{\mvec{x} \in \mathds{R}^n: \ \|\mvec{x}\|_p \leq \rho \}$ represent a ball of radius $\rho$ in $\mathds{R}^n$. Also, let $\Gamma_k^n$ denote the set of all $k$-sparse signals in $\mathcal{B}_p^n(\rho) $, i.e.,
\begin{equation}\label{eq:ksparse}
\Gamma_k^n \triangleq \{ \mvec{x} \in \mathcal{B}_p^n(1)  \ : \ \|\mvec{x}\|_0 \leq k\}.
\end{equation}
In order to apply C-GD to this class of signals, we  first need to construct a family of compression codes for such sparse bounded signals. Consider a family of compression codes for set $\Gamma_k^n$ defined as follows. For $\mvec{x}\in \Gamma_k^n$, (i) encode the locations of  its at most $k$ non-zero entries ($\approx \log{n \choose k})$ bits)  and (ii) apply a uniform quantizer to the magnitudes of the  non-zero entries. (Using $b$ bits for the magnitude of each entry and one bit for its sign, this step spends $(b+1)k$ bits.)  Using this specific compression algorithm in the C-GD framework yields an algorithm which is very similar to iterative hard thresholding (IHT)  \cite{blumensath2009iterative}.  At every iteration, after moving in the opposite direction of the gradient, the standard IHT algorithm   keeps the  $k$ largest elements and sets the rest to zero. The C-GD algorithm on the other hand, while having the same first step, performs the projection step in a slightly different manner. For the projection onto codewords, similar to  IHT,  it first finds the $k$ largest entries. Then, for each entry $x_i$, it first limits it between $[-1,1]$ by computing $x_i\mvec{1}_{x_i\in(-1,1)}+\mvec{1}_{x_i\geq1}-\mvec{1}_{x_i\leq -1}$. Then, it  quantizes the result by $b+1$ bits. The following corollary enables us to compare our results with that of hard thresholding. 

Consider  $\mvec{x}\in \Gamma_k^n$ and let $\mvec{y}=\mmat{A}{\mvec{x}}+\mvec{z}$, where ${A}_{i,j} \stackrel{\rm i.i.d.}{\sim} \N\LPr{0, {\sigma_a^2}/{n}}$ and ${z}_{i} \stackrel{\rm i.i.d.}{\sim} \N\LPr{0, {\sigma_z^2}}$. Let $\tilde{\mvec{x}}$ denote the projection of $\mvec{x}$ onto the codebook of the above-described code. The following corollary of Theorem \ref{noisy_dense_thm_iid_gauss} characterizes the convergence performance of C-GD applied to $\mvec{y}$ when using  this code.

\begin{coro}\label{cor:k-spaese}
Given $\gamma>0$, set the quantization level of the compression code  to $b+1=\lceil \gamma \log n +{1\over 2}\log k\rceil+1$ bits. Also, set $\eta={n \over \sigma_a^2 m}$. Then, given $\epsilon>0$,   for $m\ge 80 \tilde{r} (1+ \epsilon)$, where $\tilde{r} = (1+\gamma) k \log n + {k\over2}\log k + 2k$, 
\begin{align}\label{eq:noisyfromcor_ksparse}
\frac{1}{\sqrt{n}}\|\mvec{x}^{t+1} - \tilde{\mvec{x}}\|_2 \le  \frac{0.9}{\sqrt{n}}\|\mvec{x}^t - \tilde{\mvec{x}}\|_2 +  2\LPr{2+\sqrt{\frac{n}{m}}}^2{n^{-1/2-\gamma}} + \frac{\sigma_z}{\sigma_a}\sqrt{\frac{8(1+\epsilon)\tilde{r}}{m}},
\end{align}
for $t=1,2,\ldots$,   with probability larger than $1- 2^{-2\epsilon \tilde{r}}$.
\end{coro}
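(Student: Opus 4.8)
The claim is a specialization of Corollary \ref{Corollary_Gauss} to the explicit sparse-signal code, so the plan is to check that this code fits the hypotheses of that corollary and then substitute the appropriate rate and distortion. The measurement model ($A_{i,j}\sim\N(0,\sigma_a^2/n)$, Gaussian noise, $\eta=n/(\sigma_a^2 m)$) is exactly the one in Corollary \ref{Corollary_Gauss} (itself a rescaling of Theorem \ref{noisy_dense_thm_iid_gauss}), so no new probabilistic analysis is needed. The whole content is to compute (i) the rate $r$ of the code, (ii) its distortion $\delta$, and (iii) to verify Assumption 1, after which the recursion \eqref{eq:noisyfromcor_ksparse} drops out of the generic bound of Corollary \ref{Corollary_Gauss}.

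First I would verify Assumption 1 for the described code. Because the target $\mvec{x}\in\Gamma_k^n$ is itself $k$-sparse with $\|\mvec{x}\|_\infty\le\|\mvec{x}\|_p\le1$, its nearest codeword keeps exactly its (at most $k$) nonzero coordinates---the clipping to $[-1,1]$ is inactive on $\mvec{x}$---and replaces each by its nearest quantization level, which is precisely $g_r(f_r(\mvec{x}))$. More generally, for an arbitrary input the $\ell_2$-projection decouples into choosing a size-$k$ support and then independently quantizing each retained coordinate, which is what the top-$k$-then-quantize rule computes. I expect this identification (that the greedy rule realizes the exact $\ell_2$-projection, not merely an approximation) to be the \emph{main point requiring care}; if one can only guarantee an approximate projection, one would instead invoke Theorem \ref{thm:imperfectproject} and carry the extra $\xi/\sqrt n$ term.

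Next I would bound the rate and distortion. The code spends $\log\binom{n}{k}$ bits on the support and $(b+1)k$ bits on the clipped, quantized magnitudes, so $r=\log\binom{n}{k}+(b+1)k$. Using $\binom{n}{k}\le n^k$ and $b+1=\lceil\gamma\log n+\tfrac12\log k\rceil+1\le\gamma\log n+\tfrac12\log k+2$ gives $r\le(1+\gamma)k\log n+\tfrac k2\log k+2k=\tilde r$. For the distortion, the uniform $(b+1)$-bit quantizer of $[-1,1]$ has $2^{b+1}$ levels, hence cell width $2^{-b}$ and per-coordinate error at most $2^{-(b+1)}\le\tfrac12 n^{-\gamma}k^{-1/2}$; summing over the at most $k$ nonzero coordinates yields $\delta=\sup_{\mvec{x}\in\Gamma_k^n}\Lp{\mvec{x}-g_r(f_r(\mvec{x}))}{2}{}\le\sqrt k\,2^{-(b+1)}\le\tfrac12 n^{-\gamma}\le n^{-\gamma}$, so that $\delta/\sqrt n\le n^{-1/2-\gamma}$.

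Finally I would substitute into Corollary \ref{Corollary_Gauss}. Since $r\le\tilde r$, the codebook has at most $2^{\tilde r}$ elements and the hypothesis $m\ge80\tilde r(1+\epsilon)$ implies $m\ge80r(1+\epsilon)$, so the corollary applies; replacing the generic $\delta/\sqrt n$ by $n^{-1/2-\gamma}$ produces exactly \eqref{eq:noisyfromcor_ksparse}. The stated measurement count and success probability then follow from the monotonicity of the guarantee in the rate: using $\tilde r$ in place of the true rate $r$ only strengthens the sample-complexity requirement, and the probability bound $1-2^{-2\epsilon\tilde r}$ follows by tracking the exponent through $r\le\tilde r$ together with the slack between $m$ and $80r(1+\epsilon)$. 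This exponent bookkeeping is routine, so the only genuinely substantive step is the projection/distortion analysis of the preceding paragraph.
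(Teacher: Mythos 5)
Your proposal is correct and follows essentially the same route as the paper's proof: upper-bound the code's rate by $\tilde{r}$ (your accounting via $\binom{n}{k}\le n^k$ is in fact slightly cleaner than the paper's $\sum_{i=0}^k\binom{n}{i}\le n^{k+1}$, which overshoots $\tilde r$ by $\log n$), bound the distortion by $\delta\le\sqrt{k}\,2^{-(b+1)}\le n^{-\gamma}$, and substitute into the scaled Gaussian guarantee (the paper cites Theorem \ref{noisy_dense_thm_iid_gauss}, but the scaling $A_{i,j}\sim\mathcal{N}(0,\sigma_a^2/n)$, $\eta=n/(\sigma_a^2 m)$ is indeed that of Corollary \ref{Corollary_Gauss}, as you use). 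The one caveat is your side-claim that top-$k$-then-quantize realizes the \emph{exact} $\ell_2$-projection: the true projection selects the support by the gain $z_i^2-e(z_i)^2$, where $e(z_i)$ is the clip-and-quantize error, which can reorder coordinates whose magnitudes differ by less than $O(2^{-2b})$ relative to selection by $|z_i|$ alone; the paper sidesteps this entirely by working under Assumption 1, and your stated fallback to Theorem \ref{thm:imperfectproject} with $\xi=O(2^{-b}\sqrt{k})$ absorbs any such discrepancy, so this does not affect the validity of the argument.
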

\begin{proof}
Consider $u\in[-1,1]$. Quantizing $u$ by a uniform quantizer that uses $b+1$ bits yields $\hat{u}$, which satisfies $|u-\hat{u}|<2^{-b}$. Therefore, using $b+1$ bits to quantize each non-zero element of $\mvec{x}\in\Gamma_k^n$ yields a code which achieves distortion $\delta\leq 2^{-b}\sqrt{k}$. Hence, for $b+1=\lceil \gamma \log n +{1\over 2}\log k\rceil+1$, 
\[
\delta\leq n^{-\gamma}.
\]
On the other hand, the code rate $r$ can be upper-bounded as
\[
r \leq \sum_{i=0}^k\log {n \choose i} + k(b+1)\leq \log  n^{k+1} + k(b+1)=(k+1)\log n+k(b+1),
\]
where the last inequality holds for all $n$ large enough. The rest of the proof follows directly from inserting  these numbers in the statement of   Theorem \ref{noisy_dense_thm_iid_gauss}.
\end{proof}

This corollary enables us to provide further intuition on the performance of C-GD.  We start with the noiseless observations and for the moment we only study the required number of measurements and the reconstruction error in the absence of noise. The number of measurements required by the C-GD algorithm is $m= \Omega(k \log n)$. For the final reconstruction error, we can use \eqref{eq:noisyfromcor_ksparse} and obtain 
\[
\lim_{t \rightarrow \infty} \frac{1}{\sqrt{n}}\|\mvec{x}^{t+1} - \tilde{\mvec{x}}\|_2 = O\left(\LPr{2+\sqrt{\frac{n}{m}}}^2n^{-1/2-\gamma} \right). 
\]
 This implies that  the recovery error satisfies 
\[
\lim_{t \rightarrow \infty} \frac{1}{\sqrt{n}}\|\mvec{x}^{t+1} - \tilde{\mvec{x}}\|_2 = O\left( \frac{n^{\frac{1}{2} - \gamma}}{m} \right).
\]
Hence, if $\gamma> 0.5$, the error vanishes as the dimension grows. Regarding the number of measurements, there are two  points that we would like to emphasize here:
\begin{enumerate}
\item The $\alpha$-dimension of the above-described code  is $k$. Hence, CSP is able to accurately recover the signal from only $k$ measurements.  However, solving CSP  requires an exhaustive search over all the codebooks. On the other hand, C-GD requires $k\log n$ measurements. Therefore, it seems that  the extra  $\log n$ factor is the price for having  an efficient recovery algorithm. 

\item For large values of $n$, $n^{-\gamma}$ is very small, and hence C-GD becomes very similar to IHT. The results we have obtained for C-GD in this case are slightly weaker than those provided for IHT. First, our reconstruction is not exact even in the noiseless setting. Second, the number of measurements C-GD requires is $O(k\log n)$ compared to $O(k \log( n/k))$ required by IHT. These minor differences seem to be the price of the generality of the bounds derived for C-GD. 
\end{enumerate}

So far, we have studied two important quantities in Corollary \ref{Corollary_Gauss}, i.e., (i) required number of measurements, and (ii) reconstruction error in the absence of noise. The last important term, is the reconstruction error in the presence of the noise. From Corollary \ref{cor:k-spaese}, the distortion caused due to the presence of a Gaussian noise is $O\left(\frac{ \sigma_z}{\sigma_a}\sqrt{\frac{\tilde{r}}{m}}\right)$, or $ O\left(\frac{ \sigma_z}{\sigma_a}\sqrt{\frac{k \log n}{m}}\right)$. Note that there is no result on the performance of IHT in the presence of stochastic measurement noise. However, this noise sensitivity is comparable with the performance of algorithms that are based on convex optimization such as LASSO and Dantzig selector \cite{candes2007dantzig, bickel2009simultaneous}. 

\subsection{Piecewise polynomial functions}

Let ${\rm Poly}_N^Q$ denote the class of  piecewise-polynomial functions $p(\cdot):[0,1] \rightarrow [0,1]$  with at most $Q$  singularities\footnote{A singularity is a point at which the  function is not infinitely differentiable.}, where each polynomial has a maximum degree of $N$. For $p\in{\rm Poly}_N^Q$, let  $(x_{1}, x_{2}, \ldots,x_{n})$ be the samples of $p$ at 
\[
0, {1 \over n}, \ldots,{n-1\over n}.
\]  
For $\ell=1,\ldots,Q$, let $\{a_i^{\ell}\}_{i=0}^{N_{\ell}}$ denote the set of coefficients of the $\ell^{\rm th}$ polynomial in $p$. Here, $N_{\ell}\leq N $ denotes the degree of the $\ell$-th polynomial. For  notational simplicity,  assume that the coefficients of each polynomial belong to $[0,1]$ interval and that $\sum_{i=0}^{N_{\ell}} a^{\ell}_i <1$, for every $\ell$. Define 
\begin{align}
\mathcal{P} \triangleq \left\{\mvec{x} \in \mathds{R}^n \ | \ x_{i} = p(i/n), \ p \in {\rm Poly}_N^Q \right\}.\label{eq:P-def}
\end{align}
Note that this class of functions is a generalization of the class of piecewise-constant functions that are popular in many applications including imaging. To apply C-GD to this class of signals,  we need to design an efficient compression code for signals in $\mathcal{P}$ and describe how to project signals on the codewords of $\cal{P}$. For the first part, consider a simple code which for any signal $\mvec{x}\in\cal{P}$, it first   describes the locations of its discontinuities and then, using a uniform quantizer that spends $b$ bits per coefficient,  describes  the quantized   coefficients of the polynomials. For the other task, which is projecting a signal $\mvec{x}\in\mathds{R}^n$, Appendix  \ref{app:dynamicprogram} describes how  we can find the closest signal to $\mvec{x}$ in $\mathcal{P} $, i.e., \begin{eqnarray}\label{eq:dynamprog}
\tilde{\mvec{x}} = \arg\min_{\mvec{z} \in \mathcal{P}} \|\mvec{x}- \mvec{z}\|_2^2,
\end{eqnarray}
 using dynamic programing. Once that signal is found, its quantized version using the described code represents the desired projection. 

Note that C-GD combined with the described compression code is an extension of IHT to piecewise-polynomial  functions. At every iteration, C-GD projects its current estimate of the signal   to space of piecewise-polynomial functions. 

%
%
%
Consider $\mvec{x}\in\P$ and $\mvec{y}=\mmat{A}\mvec{x}+\mvec{z}$, where  ${A}_{i,j} \stackrel{\rm i.i.d.}{\sim} \N\LPr{0, {\sigma_a^2}/{n}}$ and ${z}_{i} \stackrel{\rm i.i.d.}{\sim} \N\LPr{0, {\sigma_z^2}}$. Similar to Corollary \ref{cor:k-spaese}, the following corollary characterizes the convergence performance of C-GD combined with the described compression code, when applied to measurements $\mvec{y}$.

\begin{coro}\label{cor:Qpiecepoly}
 Set the step size in C-GD as $\eta={n \over \sigma_a^2 m}$ and the quantization level in the compression code as 
 \[
 b=\lceil (\gamma+0.5)\log n+\log(N+1) \rceil,
 \]  
 where $\gamma>0$ is given.  Set $\tilde{r} =\left((\gamma+0.5)(N+1)(Q+1)+Q\right)\log n+(N+1)(Q+1)(\log(N+1)+1)+1$. Then, given $\epsilon>0$, for  $m\ge 80\tilde{r}\LPr{1+ \epsilon}$,  for $t=0,1,2,\ldots$, we have
\begin{align}\label{eq:noisyfromcor}
\frac{1}{\sqrt{n}}\|\mvec{x}^{k+1} - \tilde{\mvec{x}}\|_2 \le  \frac{0.9}{\sqrt{n}}\|\mvec{x}^k - \tilde{\mvec{x}}\|_2 +  2\LPr{2+\sqrt{\frac{n}{m}}}^2n^{-0.5-\gamma} + \frac{\sigma_z}{\sigma_a}\sqrt{\frac{8(1+\epsilon)\tilde{r}}{m}},
\end{align}
with a probability larger than $1-2^{-2\epsilon \tilde{r}+1}$.
\end{coro}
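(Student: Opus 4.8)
The plan is to follow the template of the proof of Corollary~\ref{cor:k-spaese} essentially verbatim: the statement is a direct specialization of Theorem~\ref{noisy_dense_thm_iid_gauss} in its scaled form, Corollary~\ref{Corollary_Gauss}. Thus the entire task reduces to computing the two quantities that enter that theorem for the piecewise-polynomial code described above --- its maximum distortion $\delta$ and its rate $r$ --- and then substituting $\delta = n^{-\gamma}$ and $r \le \tilde r$ into Corollary~\ref{Corollary_Gauss}, whose probability and measurement-count guarantees then carry over unchanged (with $r$ replaced by $\tilde r$).

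First I would bound the distortion. Since $\delta = \sup_{\mvec{x}\in\mathcal{P}}\Lp{\mvec{x}-g_r(f_r(\mvec{x}))}{2}{}$ is a supremum over $\mvec{x}\in\mathcal{P}$, and any such $\mvec{x}$ is already piecewise polynomial, the only error incurred by the code comes from quantizing the coefficients --- there is no model-mismatch error. The encoder records the exact discontinuity locations and quantizes each of the at most $(N+1)(Q+1)$ coefficients with $b$ bits, so every coefficient error is at most $2^{-b}$. On a piece with polynomial $\sum_i a_i t^i$, the reconstructed polynomial differs from the true one by $\sum_i (a_i-\hat a_i)t^i$, and because the samples are taken at $t = j/n\in[0,1)$ we have $t^i\le 1$; hence each of the $n$ samples is perturbed by at most $(N+1)2^{-b}$. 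Aggregating in $\ell_2$ over the $n$ samples gives $\delta \le \sqrt{n}\,(N+1)2^{-b}$, and with $b=\lceil(\gamma+0.5)\log n+\log(N+1)\rceil$ this yields $\delta\le n^{-\gamma}$, so that $\delta/\sqrt{n}\le n^{-0.5-\gamma}$, matching the middle term of the displayed bound.

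Next I would bound the rate. Describing the at most $Q$ discontinuity locations among the $n$ sample points costs at most $Q\log n$ bits, while the at most $Q+1$ pieces each carry at most $N+1$ coefficients quantized with $b$ bits, costing $(N+1)(Q+1)b$ bits. Substituting $b\le (\gamma+0.5)\log n+\log(N+1)+1$ and collecting the coefficient of $\log n$ gives $r \le [(\gamma+0.5)(N+1)(Q+1)+Q]\log n + (N+1)(Q+1)(\log(N+1)+1) \le \tilde r$, with the extra additive $1$ in $\tilde r$ absorbing the ceiling. Plugging $\delta=n^{-\gamma}$ and $r\le\tilde r$ into Corollary~\ref{Corollary_Gauss} then produces the claimed inequality together with the stated threshold $m\ge 80\tilde r(1+\epsilon)$ and probability $1-2^{-2\epsilon\tilde r+1}$.

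The one step needing genuine care is the distortion propagation in the second paragraph: in principle, a quantization error in a high-degree coefficient could be amplified when the polynomial is evaluated. What saves the argument is that the domain is normalized to $[0,1]$, so every power $t^i$ stays bounded by $1$ and the per-sample error is controlled uniformly by $(N+1)2^{-b}$ regardless of the degree. Everything else --- the location/coefficient bit count and the final substitution into Corollary~\ref{Corollary_Gauss} --- is routine bookkeeping identical in spirit to the sparse case in Corollary~\ref{cor:k-spaese}.
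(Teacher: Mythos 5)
Your proposal is correct and follows essentially the same route as the paper's proof: bound the distortion by $\delta\le\sqrt{n}(N+1)2^{-b}\le n^{-\gamma}$ via the triangle inequality on quantized coefficients with $t^i\le 1$, bound the rate by $r\le(N+1)(Q+1)b+Q\log n+O(1)\le\tilde r$, and substitute into the Gaussian C-GD guarantee (you cite the scaled Corollary~\ref{Corollary_Gauss}, the paper cites Theorem~\ref{noisy_dense_thm_iid_gauss}, but with $\eta=n/(\sigma_a^2 m)$ these are the same specialization). Your explicit remark that the powers $t^i$ remain bounded on $[0,1]$ so high-degree coefficient errors are not amplified is exactly the point the paper's displayed per-sample estimate relies on.
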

\begin{proof}
To apply Theorem \ref{noisy_dense_thm_iid_gauss}, we need to find the rate-distortion performance of the described compression code. Using $b$ bits per coefficient, the described code, in total, spends at most $r$ bits, where
\[
r\leq (N+1)(Q+1)b+Q(\log n +1).
\]
For $ b=\lceil (\gamma+0.5)\log n+\log(N+1) \rceil$,
\[
r\leq \left((\gamma+0.5)(N+1)(Q+1)+Q\right)\log n+(N+1)(Q+1)(\log(N+1)+1)+1.
\]
On the other hand, using $b$ bits per coefficient, the distortion in approximating each point can be bounded as
          \begin{align}
              \left|\sum_{i=0}^{N_{\ell}} a^{\ell}_i t^n-\sum_{i=0}^{N_{\ell}} [a^{\ell}_i ]_{b} t^n \right| &\leq \sum_{i=0}^{N_{\ell}} |a^{\ell}_i- [a^{\ell}_i]_{b}|\nonumber\\
              & \leq (N_{\ell}+1) 2^{-b}\leq (N+1) 2^{-b},
              \end{align}
where $[a^{\ell}_i ]_{b}$ denotes the $b$-bit quantized version of $a^{\ell}_i$. Therefore, the overall error is bounded as
\[
\delta\leq \sqrt{n}(N+1) 2^{-b}.
\]
Choosing $ b=\lceil (\gamma+0.5)\log n+\log(N+1) \rceil$, as prescribed by the corollary,  ensures that 
\[
\delta\leq n^{-\gamma}.
\]
Inserting these numbers in  Theorem \ref{noisy_dense_thm_iid_gauss} yields the desired result. 

\end{proof}

The important quantities in the above corollary are explained in the following.
\begin{enumerate}
\item Required number of measurements: If we assume that $n$ is much larger than $N$ and $Q$, then the required number of measurements is $\Omega((N+1)(Q+1) \log n)$. Note that given the degrees of freedom of piecewise polynomial functions, we do not expect to be able to recovery $\mvec{x} \in {\cal P}$ with fewer than $(N+1)(Q+1)$ observations. 

\item Reconstruction error in the absence of measurement noise: Similar to the discussion of the previous section we can argue that 
\[
\lim_{k \rightarrow \infty} \frac{1}{\sqrt{n}}\|\mvec{x}^{k+1} - \tilde{\mvec{x}}\|_2 = O\left( \frac{n^{\frac{1}{2} - \gamma}}{m} \right).
\]
Hence, the error goes to zero for every $\gamma>0.5$.  

\item Reconstruction error in the presence of measurement noise: In this case, the impact of Gaussian noise in the upper bound is $O\left(\frac{\sigma_z}{\sigma_a} \sqrt{\frac{(Q+1)(N+1) \log n}{m}}\right)$. 
\end{enumerate}

\section{Simulation Results and Discussion}\label{sec:simulations}

In this section, we assess the performance of the C-GD algorithm in various settings. Furthermore, we compare our results with the state-of-the-art recovery algorithms, such as  \emph{denoising-based approximate message passing} (D-AMP) \cite{metzler2014denoising} and \emph{nonlocal low-rank regularization} (NLR-CS) \cite{dong2014compressive}. Throughout this section,  when  compression algorithm $\mathcal{X}$ is used in the platform of C-GD, the resulting algorithm is referred to $\mathcal{X}$-GD. For instance, the recovery algorithm that employs the JPEG code is called JPEG-GD. 

\subsection{Parameters setting}\label{sec:simulations_parameter_setting}
Running C-GD involves specifying   three free parameters: (i) step-size $\eta$, (ii) compression rate $r$, and (iii) the number of iterations we run the algorithm. The success of C-GD relies on proper tuning of the first two parameters, \etc  step-size and compression rate. In this section, we  explain how we tune these parameters in practice. In Section \ref{sec:converge_analysis}, we theoretically showed that the algorithm converges to the optimal solution for $\eta = \frac{1}{m \sigma_a}.$
However, this choice of step size might yield a very slow convergence in practice.  Hence, in our simulations we follow an adaptive strategy for setting $\eta$. Let $\eta_k$ denote the step size at the $k^{\rm th}$ iteration. Then, we set $\eta_{k}$ to
\begin{align}
\label{adaptive_alpha_update}
\eta_{k} =  \argmin{\eta}\,\, l\LPr{ \P_{\C_r}\LPr{\mvec{x}^{k} + \eta \,\mmat{A}^{T}\LPr{\mvec{y} - \mmat{A}\mvec{x}^k}}},
\end{align}
where, for $\mvec{u}\in\mathds{R}^n$, $l\LPr{\mvec{u}} \triangleq  \Lp{\mvec{y}-\mmat{A}\mvec{u}}{2}{}$. In other words, $\eta_k$ is set such that the next estimate is moved as close as possible to the subspace $ \mathcal{V} = \{ \mvec{u} \ | \ \mvec{y}= \mmat{A} \mvec{u}\}$. Note that regardless of the value of $\eta$,  the next estimate will be a codeword. Hence, intuitively speaking, the closer this codeword is to the subspace $\mathcal{V}$, a better estimate $\mvec{x}^{k+1}$ will be.  The optimization problem proposed in \eqref{adaptive_alpha_update} is a simple scalar optimization problem, and we use derivative-free methods,
 such as Nelder-Mead method (or downhill simplex) method \cite{nelder1965simplex}, to solve it.  In our simulations we noticed that this strategy speeds up the convergence rate of C-GD. 

Finding the optimal choice of $r$ is an instance of the model selection problem in statistics and machine learning. (See Chapter 7 of \cite{friedman2009elements}.) Hence, standard techniques such as multi-fold cross validation can be used. Note that multi-fold cross validation increases the computational complexity of our recovery algorithm. Reducing the computational complexity of such model selection techniques is left for future research. 

To control the number of iterations in the C-GD method, we consider two standard stopping rules.  One is to limit the maximum number of iterations, which is defined as the parameter $K_{1, \max}$ in Algorithm \ref{euclid}.  The second stopping rule is a predefined threshold on the reduction of squared-error in each iteration, \etc $\Lp{ \mvec{x}^{k+1}-\mvec{x}^{k} }{2}{}$. In our numerical simulation we set $K_{1, \max} = 50$ and $\eps_T = 0.001$. The specific  algorithm that is employed in our simulations is presented below in Algorithm \ref{euclid}. 

\begin{algorithm}\label{alg:1}
\caption{C-GD: Compression-based (projected) gradient descent}\label{euclid}
\begin{algorithmic}[1]
\State {\bf Inputs}: compression code $\LPr{f_r, g_r}$, $\mvec{y}$, $\mmat{A}$
\State {\bf Initialize:} $\mvec{x}^0$, $\eta_{0}$, $K_{1, \max}$, $K_{2,\max}$, $\epsilon_T$
\For{ $k \le K_{1, \max}$}
        \State $\mvec{x}^{k+1} \leftarrow\,   g_r\left(f_r \LPr{\mvec{x}^{k} + \eta_k \,\mmat{A}^{T}\LPr{\mvec{y} - \mmat{A}\mvec{x}^k}}\right)$
        	\State $k \baks k+1$
	  	\State $\eta_{k} \baks$  \text{Apply Nelder-Mead method with maximum} $K_{2, \max}$ \text{iterations to solve} \eqref{adaptive_alpha_update}.
	   \State \If{ $\frac{1}{\sqrt{n}}\Lp{ \mvec{x}^{k+1}-\mvec{x}^{k} }{2}{} < \epsilon_T$} 
	            return $\mvec{x}^{k+1}$
\EndIf
\EndFor
\State{\bf Output:} $\mvec{x}^{k+1}$

\end{algorithmic}
\end{algorithm}

\subsection{Algorithms and comparison criteria}

We explore the performance of our C-GD algorithm for the compressive imaging application. We employ the standard image compression algorithms JPEG and JPEG2000 in our C-GD framework and obtain JPEG-GD and JP2K-GD recovery schemes. In our numerical simulation, we use the  implementation of JPEG2000 and JPEG codecs in the \emph{Matlab}-R2016b Image and Video processing package. 

\begin{figure}[t!]
\begin{center}
\includegraphics[width= 3.5in]{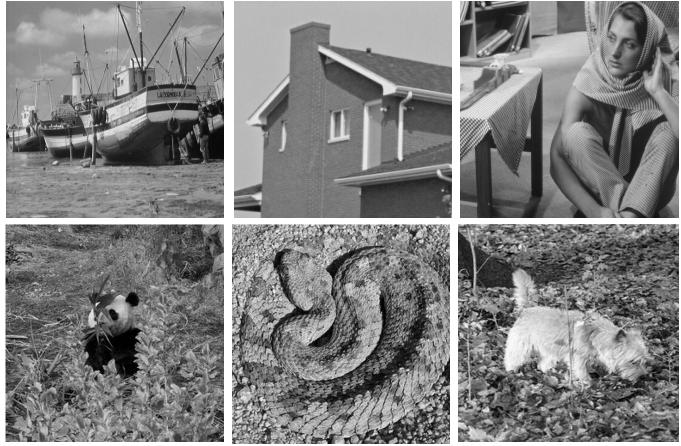}
\caption{Test images used in our simulations.}
\label{fig:TestImages}
\end{center}
\end{figure}

We compare the performance of our algorithm on six standard test images that are shown in Fig.~\ref{fig:TestImages}. To quantitatively evaluate the quality of an estimated image,  we use the peak signal-to-noise ratio (PSNR) defined as 
\begin{align}
{\rm PSNR} = 20\log \LPr{\frac{255}{\sqrt{{\rm MSE}}}},
\end{align}
where for a noise-free grayscale image $\mvec{x}$ and its reconstructed image $\hat{\mvec{x}}$, the mean square error (MSE) is defined as ${\rm MSE} = \frac{1}{n}\Lp{\mvec{x} - \hat{\mvec{x}}}{2}{2}$. 
Furthermore, we define the signal-to-noise ratio (${\rm SNR}$) in the measurement vector $\mvec{y}$ as 
\begin{align}
{\rm SNR} = 20\log  \LPr{\frac{\Lp{\mmat{A}\mvec{x}}{2}{}}{ \Lp{\mvec{z}}{2}{}} }
\end{align}


Even though our theoretical results consider i.i.d.~Gaussian  matrices, we evaluate the performance of our algorithm with both i.i.d.~Gaussian and   partial-DCT measurement matrices, which are closer to the matrices used in radar and magnetic resonance imaging applications. We summarize the results of our simulations for Gaussian matrices and partial-DCT matrices in Sections  \ref{ssec:Gaussiansimulation} and \ref{ssec:FourierSim}, respectively. 

\subsection{Compressive imaging with i.i.d.~measurement matrices}\label{ssec:Gaussiansimulation}

\subsubsection{Noiseless}
In Table \ref{reconstruction_Gaussian}, we compare the results of JPEG-GD and JP2K-GD with that of the state-of-the-art  BM3D-AMP method in reconstructing several test images from their compressive measurements. In these simulations, the  measurement matrices are i.i.d.~Gaussian. Furthermore, the test images are resized to $128\times 128$ pixels. We consider two sampling ratios $\frac{m}{n} = $ $30\%$ and $50\%$. At each iteration, the step-size parameter $\eta$ is set by solving \eqref{adaptive_alpha_update}. To find the solution of this optimization problem we used $K_{2,\max} = 25$ iterations of Nelder-Mead method algorithm. Furthermore, we used the stopping criteria discussed in Section \ref{sec:simulations_parameter_setting}. For BM3D-AMP we also used the default setting proposed in \cite{metzler2014denoising}. 

The results of this simulation are presented in Table \ref{reconstruction_Gaussian}.  Interestingly, results in Table \ref{reconstruction_partialDCT_nonoise} indicate that JP2K-GD considerably outperforms JPEG-GD. The main reason is that JP2K codec exploits more complex structures of natural images compared with JPEG codec.  Therefore, intuitively we can state that for a same amount of measurements,  JP2K-GD can perform better.  Also the performance of JP2K-GD is comparable with that of BM3D-AMP.  When an image has more geometry (as in House), BM3D-AMP outperforms JP2K-GD. However, when an image has more irregular structures and texture such as the Dog image or the Panda image, then JP2K-GD seems to often outperform BM3D-AMP. 

\begin{table}[]
\centering
\caption{PSNR of $128\times 128$ reconstructions with no measurement noise - Sampled by a random Gaussian measurement matrix.}
\label{reconstruction_Gaussian}
{\begin{tabular}{|c|c|c|c|c|c|c|c|}
\hline
Method & \begin{tabular}[c]{@{}c@{}}${m}/{n}$ \end{tabular} 
& Boat  & House & Barbara &  Dog & Panda & Snake \\ \hline \hline \hline 
\multirow{2}{*}{BM3D-AMP} & 30\%                                                                         
& 29.66 & {\bf 39.71} & {\bf 31.3} &  21.30 & 23.90 & {\bf 20.87}\\ \cline{2-8}& 50\%                                                                         
& 34.19 & {\bf 43.70} & {33.70} &  24.35 & 26.76 & 22.76\\
 \hline \hline \hline
\multirow{2}{*}{JPEG-GD}  & 30\%                                                                         
& 23.77& 30.61 & 24.34 &   18.01 & 19.35 & 18.23\\ \cline{2-8} & 50\%                                                                         
& 26.46 & 33.25 & 27.01 &   22.78 & 23.11 & 19.98\\ \hline \hline \hline
\multirow{2}{*}{JP2K-GD}  & 30\%                                                                         
& {\bf 30.68} & 35.22  & 29.96 &  {\bf 22.45} & {\bf 24.00} &  {20.44}  \\ \cline{2-8} & 50\%                                                                         
& {\bf 35.28} & 40.18  & {\bf 34.67} &  {\bf 26.35} & {\bf 27.13}& {\bf 23.03} 
\\ \hline
\end{tabular}}
\end{table}

\subsubsection{Noisy}
In Table \ref{denoising_Gaussian}, we present the performance results of our proposed JPEG-GD and JP2K-GD, and compare them with the performance of the  BM3D-AMP method for image reconstruction from {\em noisy} compressive measurements. Similar to the previous section, the measurement matrix is i.i.d.~ Gaussian,  the images  are resized to $128\times 128$, and two sampling ratios of $30\%$ and $50\%$ are considered.  Unlike before, the measurements are  corrupted by i.i.d.~ Gaussian noise. We consider two different values of ${\rm SNR} = 10\,dB$ and ${\rm SNR} = 30\,dB$. The results of our simulations are presented in Table \ref{denoising_Gaussian}. As is again clear from this table our results are comparable and in most cases  better than the results of the state-of-the-art BM3D-AMP.

\begin{table}[]
\centering
\caption{PSNR of reconstruction of $128\times 128$ test images with Gaussian measurement noise with various SNR values - sampled by a random Gaussian measurement matrix}
\label{denoising_Gaussian}
{\begin{tabular}{cc|c|c|c|c|c|c|}\cline{3-8}
& & \multicolumn{2}{c|}{Barbara} & \multicolumn{2}{c|}{Boat}  &  \multicolumn{2}{c|}{Panda} \\ \hline
\multicolumn{1}{|c|}{Method}    & \multicolumn{1}{l|}{$m/n$} & \multicolumn{1}{l|}{SNR=10} & \multicolumn{1}{l|}{SNR=30} & \multicolumn{1}{l|}{SNR=10} & \multicolumn{1}{l|}{SNR=30} & \multicolumn{1}{l|}{SNR=10} & \multicolumn{1}{l|}{SNR=30} \\ \hline \hline \hline
\multicolumn{1}{|c|}{\multirow{2}{*}{BM3D-AMP}} & 30\%  
& {\bf 19.15}  & {\bf 28.20}  & 20.12  & {\bf 28.50}  &   13.67 & 18.82 \\ \cline{2-8} \multicolumn{1}{|c|}{} & 50\%                                
& {\bf 21.38}  & {     30.16}  & 22.72  &      33.65   &  18.44  & 21.17 \\ \hline
\hline \hline
\multicolumn{1}{|c|}{\multirow{2}{*}{JPEG-CG}}  & 30\%                                
& 14.87 & 22. 71 & 13.50  & 22.01 &  10.50 &  18.79 \\ \cline{2-8} \multicolumn{1}{|c|}{} & 50\%                                
& 18.44 & 24.60  & 19.21  & 25.36 &  15.83 & 22.01 \\ \hline \hline \hline
\multicolumn{1}{|c|}{\multirow{2}{*}{JP2K-CG}}  & 30\%                                
& 16.82   & 26.23 & {\bf 21.79}  & {     28.43} &    {\bf 15.63} & {\bf 22.93} \\ \cline{2-8} \multicolumn{1}{|c|}{} & 50\%                                
& 20.78   & {\bf 30.93} & {\bf 24.82}  & {\bf 34.13} &    {\bf 20.40} & {\bf 25.85} \\ \hline  
\end{tabular}}
\end{table}

\subsection{Compressive imaging with partial-DCT matrices}\label{ssec:FourierSim}
In many application areas, measurement matrices such as partial-DCT matrices are employed. In this section we evaluate the performance of our algorithm on partial-DCT matrices. We note that in our numerical results we observe that even though BM3D-AMP performs well for i.i.d. Gaussian measurements,  its performance degrades dramatically for partial-DCT matrices.  Hence, we compare the performance of our algorithm with the state-of-the-art algorithm for partial-DCT (or partial-Fourier) matrices, \etc NLR-CS \cite{dong2014compressive}. As in the previous section, we consider both noiseless and noisy measurements. 

\subsubsection{Noiseless}

In Table \ref{reconstruction_partialDCT_nonoise}, we present the performance results of our proposed JPEG-GD and JP2K-GD, and compare them with the performance of the NLR-CS method in image reconstruction from compressive samples, sampled by a random partial-DCT measurement matrix. Images in this numerical comparison are resized to $512\times 512$. We consider two sampling ratios $10\%$ and $30\%$.  For $\frac{m}{n} = 10\%$ (sampling rate), JP2K-GD performs comparable and in some cases, \eg Dog and Snake, even better than the  state-of-the-art NLR-CS.   Increasing the sampling ratio to $\frac{m}{n} = 30\%$,  JP2K-GD  outperform both JPEG-GD and NLR-CS methods. 

Note that NLR-CS method has two main steps. In the first step, it estimates an initial image $\hat{\mvec{x}}$ using a standard compressed sensing (CS) recovery method based on the sparsity of image coefficients in DCT/Wavelet domain. Then, in the second step, it enforces a low-rankness and group-sparsity constraint  on the group of similar patches detected in the estimated image \cite{dong2014compressive}.  This step involves   singular value decomposition of a matrix and hence  is computationally expensive for large images. Furthermore, since in the second step, detection of similar patches is performed using an estimated image, exploiting structures  in the second step heavily relies on the performance of the first step. For this particular reason, as observed  in Section \ref{fourier_noisy_numerical}, the performance of NLR-CS method  degrades significantly once noise is added to the observations. On the other hand,  results in Section \ref{fourier_noisy_numerical} show that both JPEG-GD and JP2K-GD are robust to the measurement noise.  

\begin{table}[]
\centering
\caption{PSNR of $512\times 512$ reconstructions with no noise - sampled by a random partial-DCT measurement matrix.}
\label{reconstruction_partialDCT_nonoise}
{\begin{tabular}{|c|c|c|c|c|c|c|c|}
\hline
Method & \begin{tabular}[c]{@{}c@{}}${m}/{n}$ \end{tabular} 
& Boat  & House & Barbara &  Dog & Panda & Snake \\ 
\hline  \hline  \hline
\multirow{2}{*}{NLR-CS} & 10\%                                                                         
& {\bf 23.06} & {\bf 27.26} & {\bf 20.34} & 19.53  & {\bf 21.61} & 18.20\\ \cline{2-8} & 30\%                                                                         
& 26.38 & 30.74 & 23.67 & 23.04 & 25.60 & 21.80\\
\hline  \hline  \hline
\multirow{2}{*}{JPEG-CG}  & 10\%                                                                         
& 18.38 & 24.11 &  16.36 &  16.30 & 17.00 & 15.10 \\ \cline{2-8} & 30\%                                                                         
& 24.70 & 30.51 & 20.37 & 21.10 & 22.01 & 21.63 \\ 
\hline  \hline  \hline
\multirow{2}{*}{JP2K-CG}  & 10\%                                                                         
 & 20.75 & 26.30 & 18.64 & {\bf 19.74}  &  18.24 & {\bf 18.36}\\ \cline{2-8} & 30\%                                                                         
& {\bf 27.73} & {\bf 38.07} & {\bf 24.89} & {\bf 24.82} & {\bf 25.70} & {\bf 24.37} \\
 \hline
\end{tabular}}
\end{table}

\subsubsection{Noisy} \label{fourier_noisy_numerical}
In Table \ref{reconstruction_partialDCT_noisy}, we present the performance results of  JPEG-GD and JP2K-GD, and NLR-CS methods for image reconstruction from {\em noisy} compressive measurements. Similar to the previous section, the measurement matrix is a random partial-DCT matrix. Images in this numerical comparison are resized to $512\times 512$, and we consider two sampling ratios $10\%$ and $30\%$.  The measurements are corrupted by i.i.d.~ Gaussian noise. We consider two different values of ${\rm SNR} = 10\,dB$ and ${\rm SNR} = 30\,dB$. The results of our simulations are presented in Table \ref{reconstruction_partialDCT_noisy}. As is again clear from this table JP2K-GD method outperforms both JPEG-GD and NLR-CS  for all SNRs and sampling ratios ($m/n$). Interestingly we observe that for low SNRs, e.g., SNR = $10$ dB, even the JPEG-GD algorithm performs much better than NLR-CS.

Our next goal is to visually compare the reconstruction of JP2K-GD with that of NLR-CS. Fig.~\ref{fig_codec_visual_comp} shows the reconstructed images for three different sampling ratios $m/n$. The size of the test image for all scenario is $512\times 512$ and the measurement SNR is set to $30$ dB. As is clear from Fig.~\ref{fig_codec_visual_comp}, in  all cases,  the reconstruction from JP2K-GD looks more appealing than the reconstruction from NLR-CS.   
\begin{figure}[htbp]
\begin{center}
\includegraphics[width = 3.5in]{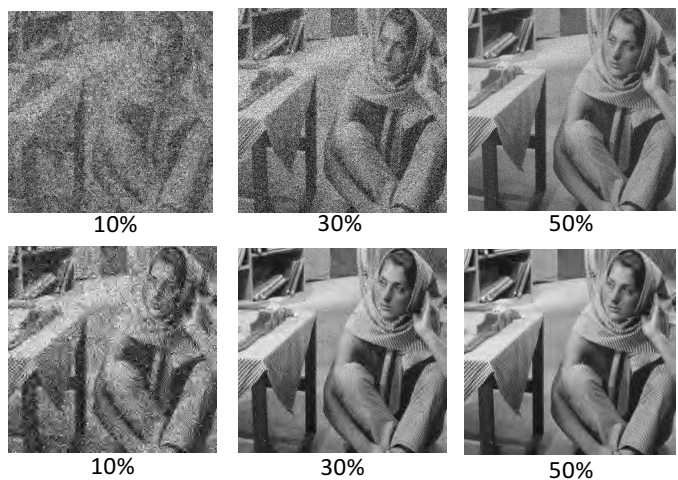}
\caption{Image reconstruction using partial-DCT matrices via using $\frac{m}{n} = 10\%$, $30\%$, and $50\%$   noisy measurements with SNR = 30 dB. The first row illustrates the images reconstructed by NLR-CS method and the second row illustrates reconstructed images by JP2K-GD method. The test image Barbara is resized to $512\times 512$.}
\label{fig_codec_visual_comp}
\end{center}
\end{figure}

\subsection{Convergence rate evaluation}
As proved in our main theorems, we expect the convergence to be linear when the measurement matrix is Gaussian. It turns out that the convergence is also fast for partial-DCT matrices. Fig.~\ref{fig_codec_convergence_nmse} depicts the normalized mean-squared-error (MSE) of image reconstruction using JP2K-GD method. We consider two different sampling ratios $\frac{m}{n} = 5\%$ and $10\%$ in this test. Results in Fig.~\ref{fig_codec_convergence_nmse} and \ref{fig_codec_iteration_quality} show that (i) the algorithm converges very fast (often in less than 50 iterations), (ii) by increasing the number of measurements the convergence of JP2K-GD improves, and  (iii) the final reconstructed image has a better PSNR. Note that all these conclusions are consistent with the results we proved for sub-Gaussian matrices. 
\begin{figure}[htbp]
\begin{center}
\includegraphics[width = 3.5in]{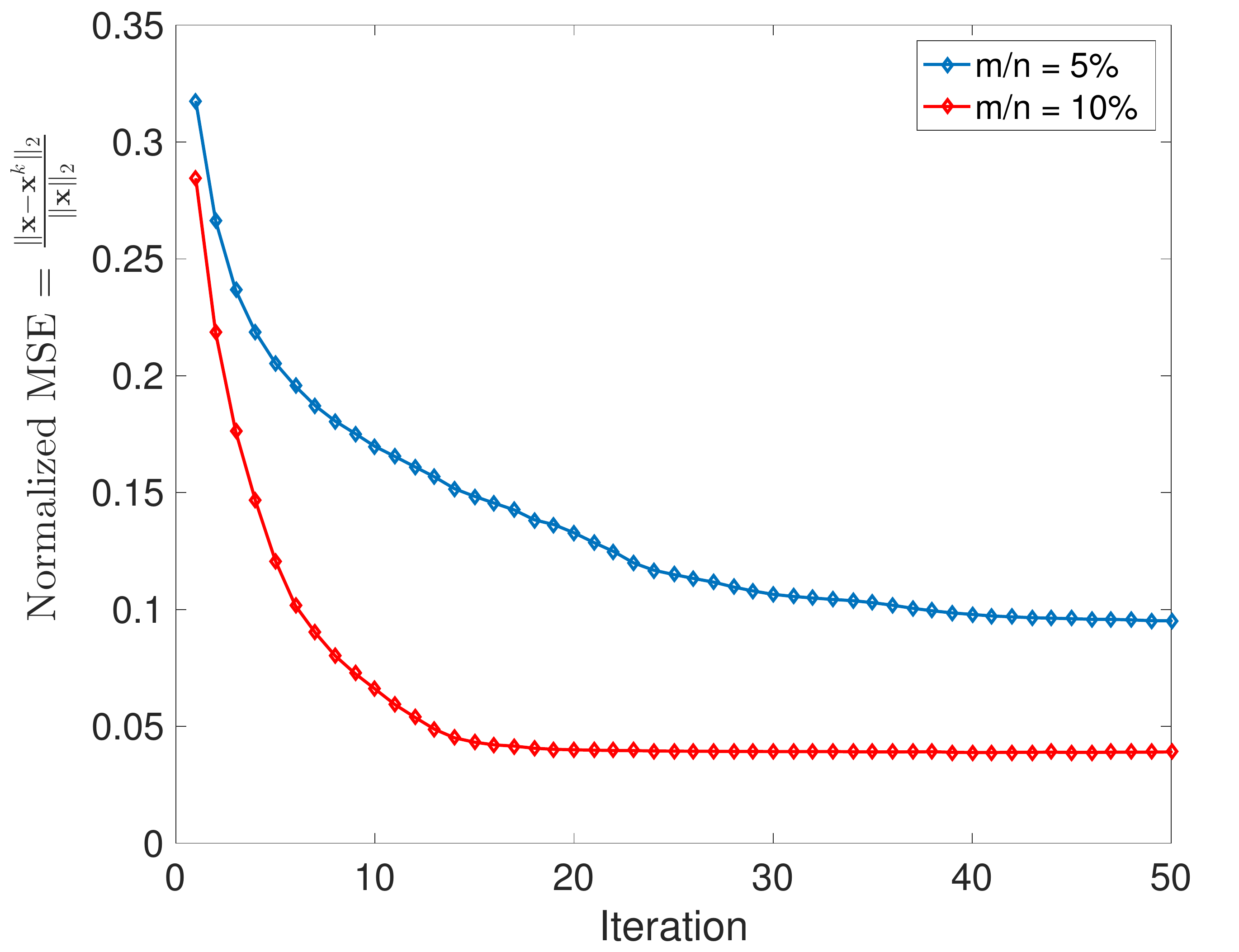}
\caption{Normalized reconstruction error in each iteration of JP2K-GD method on compressive measurements, sampled by a random partial-DCT measurement matrix. House $512 \times 512$-test image.}
\label{fig_codec_convergence_nmse}
\end{center}
\end{figure}

\begin{figure}[htbp]
\begin{center}
\includegraphics[width = 5in]{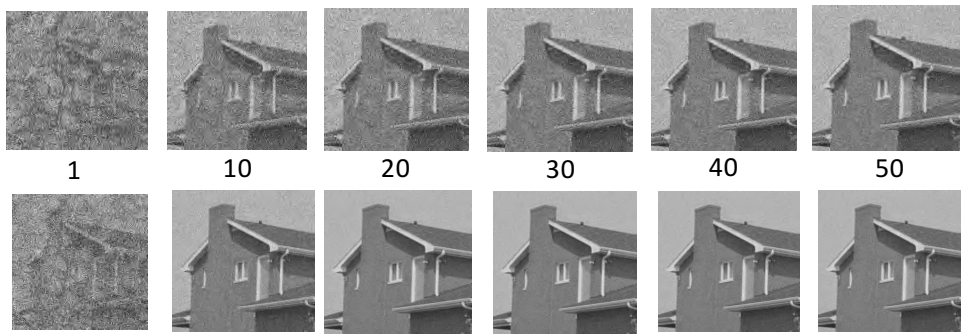}
\caption{Reconstructed image in different itetrations using JP2K-GD method via compressive measurements, sampled by a random partial-DCT measurement matrix. First row of images associated with $m/n = 5\%$ and the second row is associated with $m/n=10\%$ scenario. Numbers between the figures indicate the corresponding iteration number.}
\label{fig_codec_iteration_quality}
\end{center}
\end{figure}

\begin{table}[]
\centering
\caption{PSNR of $512\times 512$ reconstructions with Gaussian measurement noise with various SNR value - sampled by a random partial-DCT measurement matrix.}
\label{reconstruction_partialDCT_noisy}
{\begin{tabular}{cc|c|c|c|c|c|c|}\cline{3-8}
& & \multicolumn{2}{c|}{MRI}  & \multicolumn{2}{c|}{Barbara} & \multicolumn{2}{c|}{Snake} \\ \hline
\multicolumn{1}{|c|}{Method}    & \multicolumn{1}{l|}{$m/n$} & \multicolumn{1}{l|}{SNR=10} & \multicolumn{1}{l|}{SNR=30} & \multicolumn{1}{l|}{SNR=10} & \multicolumn{1}{l|}{SNR=30} & \multicolumn{1}{l|}{SNR=10} & \multicolumn{1}{l|}{SNR=30} \\ \hline  \hline  \hline
\multicolumn{1}{|c|}{\multirow{2}{*}{NLR-CS}} & 10\%  
& 11.66 & 24.14  & 12.10 & 19.83 &  10.50 & 18.75  \\ \cline{2-8} \multicolumn{1}{|c|}{} & 30\%                                
& 12.60 & 26.84 & 13.32 & 24.05 & 11.98 & 24.82  \\ \hline \hline  \hline
\multicolumn{1}{|c|}{\multirow{2}{*}{JPEG-GD}}  & 10\%                                
&14.34 & 20.50 & 15.60 &  18.60 &  12.33 & 15.67   \\ \cline{2-8} \multicolumn{1}{|c|}{} & 30\%                                
&19.20 & 24.70 & 18.17 &  22.89 &  14.40 & 22.37   \\ \hline  \hline  \hline
\multicolumn{1}{|c|}{\multirow{2}{*}{JP2K-GD}}  & 10\%                                
& {\bf 17.33} & {\bf 25.40} & {\bf 16.53} & {\bf 21.65} & {\bf 18.00}  &  {\bf 23.12} \\ \cline{2-8} \multicolumn{1}{|c|}{} & 30\%                                
& {\bf 21.56} & {\bf 35.38} & {\bf 21.82} & {\bf 28.19} & {\bf 21.06} & {\bf 29.30}  \\ \hline 
\end{tabular}}
\end{table}

\section{Proofs} \label{sec:proofs}
\subsection{Background}

In this section we briefly review some useful results that are going to be used in the proofs. 

\begin{lem} [see Lemma 5.9 in \cite{vershynin2010introduction}]
\label{sum_sub-Gaussian}
Let $\LKr{X_i}_{i=1}^n$ be independent, mean zero, sub-Gaussian random variables and $\LKr{a_i}_{i=1}^n$ are real numbers. Then  $\sum_{i=1}^{n} a_iX_i$ is also a sub-Gaussian random variable, and
\begin{align}
\Lp{\sum_{i=1}^{n} a_iX_i}{\psi_2}{} \le \sqrt{\sum_{i=1}^na_i^2 \Lp{X_i}{\psi_2}{2}}.
\end{align}
\end{lem}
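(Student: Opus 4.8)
The plan is to prove the statement through the exponential-moment (moment generating function) characterization of sub-Gaussianity, since this is the property that interacts cleanly with sums of independent variables. Write $S = \sum_{i=1}^n a_i X_i$, set $\sigma_i = \Lp{X_i}{\psi_2}{}$, and let $\tau^2 = \sum_{i=1}^n a_i^2 \sigma_i^2$ be the quantity that should control $\Lp{S}{\psi_2}{2}$. The two workhorses are: (i) the equivalence between the $\psi_2$ Orlicz norm introduced above and a Gaussian-type bound on the moment generating function, and (ii) the factorization of the moment generating function across independent summands.

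First I would recall this equivalence: there is an absolute constant $c$ such that any mean-zero random variable $X$ satisfies $\Eox{\exp(\lambda X)} \le \exp\LPr{c\lambda^2 \Lp{X}{\psi_2}{2}}$ for every $\lambda \in \setR$. The centering hypothesis is essential here, since it eliminates the first-order term $\lambda\Eox{X}$. Applying this to each summand, together with the homogeneity $\Lp{a_i X_i}{\psi_2}{} = |a_i|\,\sigma_i$ (which follows directly from the definition by rescaling $L$), gives $\Eox{\exp(\lambda a_i X_i)} \le \exp\LPr{c\lambda^2 a_i^2 \sigma_i^2}$.

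Next I would exploit independence to factorize the moment generating function of the sum,
\begin{align}
\Eox{\exp(\lambda S)} = \prod_{i=1}^n \Eox{\exp(\lambda a_i X_i)} \le \prod_{i=1}^n \exp\LPr{c\lambda^2 a_i^2 \sigma_i^2} = \exp\LPr{c\lambda^2 \tau^2},
\end{align}
valid for all $\lambda \in \setR$. This Gaussian-type domination of the moment generating function already establishes that $S$ is sub-Gaussian. To finish, I would invoke the reverse direction of the equivalence: a centered variable whose moment generating function is bounded by $\exp\LPr{c\lambda^2 \tau^2}$ has $\Lp{S}{\psi_2}{} \le c' \tau$ for an absolute constant $c'$, yielding $\Lp{S}{\psi_2}{} \le c' \sqrt{\sum_{i=1}^n a_i^2 \sigma_i^2}$.

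The main obstacle is constant tracking. Each direction of the Orlicz-norm/moment-generating-function equivalence costs an absolute constant, and recovering the clean bound with no prefactor requires these constants to combine to exactly one. In fact the prefactor cannot literally be one for the normalization $\Eox{\exp(|X|^2/L^2)}\le 2$ (a sum of two independent Rademachers already violates it), so the sharp form of the inequality carries an absolute constant $C$ hidden in the statement. I would therefore either perform a sharpened version of the two conversions tailored to this normalization and absorb the constant explicitly, or simply appeal to Lemma~5.9 of \cite{vershynin2010introduction}, whose proof settles precisely this bookkeeping.
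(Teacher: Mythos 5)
Your proposal is correct and follows essentially the same route as the proof in the paper's cited source (Lemma 5.9 of \cite{vershynin2010introduction}): the equivalence between the $\psi_2$ norm and a Gaussian-type moment generating function bound, factorization over independent summands via independence, and conversion back to a $\psi_2$ bound; the paper itself offers no proof beyond this citation. Your closing observation is also well taken — with the normalization $\Eox{\exp(|X|^2/L^2)}\le 2$ the inequality cannot hold with prefactor exactly one (your two-Rademacher example gives $\Lp{S}{\psi_2}{2} = 4/\ln 3 > 2/\ln 2$), so the lemma as stated in the paper tacitly suppresses the absolute constant that both Vershynin's version and your argument carry.
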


%

\begin{thm}[Bernstein Type Inequality, see \eg \cite{vershynin2010introduction}]
\label{thm_bound_sub_exp}
Suppose that  $\LKr{X_i}_{i=1}^n$ are independent, and that, for $i = 1, \cdots, n$,  $X_i$ is a sub-exponential random variable.  Let $\max_i \Lp{X_i}{\psi_1}{}\le K$, for some $K>0$. Then for every $t\ge 0$ and every $\mvec{w} = \LCr{w_1, \cdots, w_n}^T \in \setR^{n\times 1}$, we have
\begin{align}
P\LPr{\sum_{i=1}^n w_i\LPr{X_i- \Eox{X_i}} \ge t} \le \exp\LKr{-\min \LPr{\frac{t^2}{4K^2\Lp{\mvec{w}}{2}{2}}, \frac{t}{2K\Lp{\mvec{w}}{\infty}{} } } } .
\end{align}

\end{thm}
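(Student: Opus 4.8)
The plan is to prove the inequality by the classical exponential-moment (Chernoff) method: reduce the weighted sum to a product of single-variable moment generating functions (MGFs) via independence, bound each MGF using the sub-exponential hypothesis, and then minimize a free parameter over a constrained range so that the two branches of the $\min$ emerge naturally.

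First I would center the variables, writing $Y_i = X_i - \Eox{X_i}$, and apply Markov's inequality to the exponential: for any $\lambda \ge 0$,
\begin{align}
P\LPr{\sum_{i=1}^n w_i Y_i \ge t} \le {\rm e}^{-\lambda t}\, \Eox{\exp\LPr{\lambda \sum_{i=1}^n w_i Y_i}} = {\rm e}^{-\lambda t}\prod_{i=1}^n \Eox{{\rm e}^{\lambda w_i Y_i}},
\end{align}
where the factorization uses independence of the $X_i$. Everything now hinges on a good bound for each factor.

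Second, the crux is the per-coordinate estimate $\Eox{{\rm e}^{\theta Y_i}} \le {\rm e}^{K^2\theta^2}$, to be established on the range $\Labs{\theta} \le 1/K$. I would derive it from $\Lp{X_i}{\psi_1}{}\le K$ as follows: the defining bound $\Eox{{\rm e}^{|X_i|/K}}\le 2$ yields polynomial moment estimates $\Eox{|X_i|^p} \le c\,K^p p!$, which, after centering (controlling $\Labs{\Eox{X_i}}$ in terms of $K$), give matching moment bounds for $Y_i$; expanding $\Eox{{\rm e}^{\theta Y_i}} = 1 + \sum_{p\ge 2}\theta^p \Eox{Y_i^p}/p!$ (the linear term vanishes since $\Eox{Y_i}=0$) and summing the resulting geometric series for $\Labs{\theta}K$ bounded away from $1$ produces a bound of the form $1 + c\,K^2\theta^2 \le {\rm e}^{c K^2\theta^2}$. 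Substituting $\theta = \lambda w_i$ and multiplying gives
\begin{align}
\prod_{i=1}^n \Eox{{\rm e}^{\lambda w_i Y_i}} \le \exp\LPr{K^2\lambda^2\Lp{\mvec{w}}{2}{2}},
\end{align}
valid provided $\lambda\,\Labs{w_i}\le 1/K$ for every $i$, \etc for $0\le\lambda\le 1/\LPr{K\Lp{\mvec{w}}{\infty}{}}$.

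Third, combining the two displays yields $P(\cdot \ge t)\le \exp\LPr{-\lambda t + K^2\lambda^2\Lp{\mvec{w}}{2}{2}}$ on the admissible interval, and I would minimize the exponent over $\lambda \in \LCr{0,\ 1/\LPr{K\Lp{\mvec{w}}{\infty}{}}}$. The unconstrained minimizer is $\lambda_\star = t/\LPr{2K^2\Lp{\mvec{w}}{2}{2}}$ with value $-t^2/\LPr{4K^2\Lp{\mvec{w}}{2}{2}}$. If $\lambda_\star$ lies in the admissible interval, I recover the sub-Gaussian term; otherwise the exponent is decreasing on the interval, so I evaluate at the endpoint $\lambda = 1/\LPr{K\Lp{\mvec{w}}{\infty}{}}$, where the condition $\lambda_\star > \lambda$ forces the quadratic term to be strictly below half the linear term, giving an exponent below $-t/\LPr{2K\Lp{\mvec{w}}{\infty}{}}$. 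A short check confirms that in each regime the selected term is the smaller of the two, so the cases assemble into the single expression $\exp\LKr{-\min\LPr{t^2/\LPr{4K^2\Lp{\mvec{w}}{2}{2}},\ t/\LPr{2K\Lp{\mvec{w}}{\infty}{}}}}$.

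The main obstacle I expect is constant bookkeeping in the second step. The naive geometric-series argument above delivers only a bound of type $c\,K^2\theta^2$ with $c>1$ on a range $\Labs{\theta}\le c'/K$ with $c'<1$; to land on exactly the constants $4K^2$ and $2K$ of the stated $\min$, the single-variable MGF estimate must be sharpened to precisely $\Eox{{\rm e}^{\theta Y_i}}\le {\rm e}^{K^2\theta^2}$ on exactly $\Labs{\theta}\le 1/K$, and the degradation of the $\psi_1$-norm under centering (passing from $X_i$ to $X_i-\Eox{X_i}$) must be absorbed without loosening these constants. All remaining steps—the Chernoff bound, the independence factorization, and the constrained endpoint optimization—are routine once this sharp per-coordinate estimate is in hand.
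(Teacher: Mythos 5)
Your Chernoff-plus-MGF argument is the standard proof of this inequality, and in fact the paper gives no proof of its own—it imports the statement from Vershynin's survey (Proposition 5.16 there), whose proof follows exactly your three steps: a per-coordinate bound $\Eox{{\rm e}^{\theta Y_i}}\le {\rm e}^{CK^2\theta^2}$ valid for $|\theta|\le c/K$, factorization by independence, and minimization of $-\lambda t + CK^2\lambda^2\Lp{\mvec{w}}{2}{2}$ over the constrained interval $\lambda\in\bigl[0,\,c/(K\Lp{\mvec{w}}{\infty}{})\bigr]$, yielding the two regimes of the $\min$. Your closing caveat is the one point of substance: Vershynin states the result with unspecified absolute constants, and the explicit $4K^2$ and $2K$ in the paper's restatement are precisely what your optimization delivers if one grants the sharp per-coordinate estimate $\Eox{{\rm e}^{\theta Y_i}}\le {\rm e}^{K^2\theta^2}$ on $|\theta|\le 1/K$, which, as you correctly note, does not fall out of the naive moment expansion without careful bookkeeping of the centering loss.
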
 


\begin{lem}[Lemma 3 from \cite{JalaliM:14}] 
\label{same_distribution_lemma}
Consider two independent random vectors $\mvec{X} = [X_1, \cdots, X_n]^T \in \setR^n$ and $\mvec{Y}= [Y_1, \cdots, Y_n]^T \in \setR^{n}$. Assume that $X_i\stackrel{\rm i.i.d.}{\sim}\N(0,1)$ and $Y_i\stackrel{\rm i.i.d.}{\sim}\N(0,1)$.  Then $\LPd{\mvec{X}, \mvec{Y}}$ and $G\Lp{\mvec{X}}{2}{}$ have the same distribution, where $G\sim \N(0,1)$ and is independent of $\Lp{\mvec{X}}{2}{}$.
\end{lem}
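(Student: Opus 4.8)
The plan is to prove this distributional identity by conditioning on $\mvec{X}$, after which $\LPd{\mvec{X},\mvec{Y}}$ becomes a one-dimensional Gaussian whose variance is exactly the squared norm appearing in the claim. First I would fix $\mvec{X}=\mvec{x}$. Because $\mvec{Y}$ is independent of $\mvec{X}$ and has i.i.d.\ $\N(0,1)$ entries, the quantity $\LPd{\mvec{x},\mvec{Y}}=\sum_{i=1}^n x_i Y_i$ is a fixed linear combination of independent standard Gaussians, hence zero-mean Gaussian with variance $\sum_{i=1}^n x_i^2=\Lp{\mvec{x}}{2}{2}$:
\begin{align}
\LPd{\mvec{x},\mvec{Y}}=\sum_{i=1}^n x_i Y_i \sim \N\LPr{0,\Lp{\mvec{x}}{2}{2}}.
\end{align}
Since $\N(0,\Lp{\mvec{x}}{2}{2})$ is precisely the law of $\Lp{\mvec{x}}{2}{}\,G$ for $G\sim\N(0,1)$, the conditional law of $\LPd{\mvec{X},\mvec{Y}}$ given $\mvec{X}$ depends on $\mvec{X}$ only through $\Lp{\mvec{X}}{2}{}$ and matches the conditional law of $\Lp{\mvec{X}}{2}{}\,G$ with $G$ an independent standard Gaussian.

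To convert this conditional statement into the asserted equality of distributions rigorously, I would compare characteristic functions. Integrating the conditional Gaussian characteristic function over $\mvec{X}$ gives, for every $t\in\setR$,
\begin{align}
\Eox{{\rm e}^{\iunit t\LPd{\mvec{X},\mvec{Y}}}}=\Eox{\Eox{{\rm e}^{\iunit t\LPd{\mvec{X},\mvec{Y}}}\mid \mvec{X}}}=\Eox{{\rm e}^{-\frac{1}{2}t^2\Lp{\mvec{X}}{2}{2}}}.
\end{align}
Applying the same two-step conditioning to $\Lp{\mvec{X}}{2}{}\,G$ (first integrating over the independent $G$, then over $\mvec{X}$) produces the identical expression $\Eox{{\rm e}^{-\frac{1}{2}t^2\Lp{\mvec{X}}{2}{2}}}$. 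Equality of characteristic functions for all $t$ then yields $\LPd{\mvec{X},\mvec{Y}}\stackrel{d}{=}\Lp{\mvec{X}}{2}{}\,G$.

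A more geometric alternative, which I would mention for intuition, uses the rotational invariance of the standard Gaussian: conditioned on $\mvec{X}$, pick an orthogonal $\mmat{R}$ with $\mmat{R}\mvec{X}=\Lp{\mvec{X}}{2}{}\,\mvec{e}_1$, so that $\LPd{\mvec{X},\mvec{Y}}=\Lp{\mvec{X}}{2}{}\,(\mmat{R}\mvec{Y})_1$; since $\mmat{R}\mvec{Y}\stackrel{d}{=}\mvec{Y}$ is again standard Gaussian, $(\mmat{R}\mvec{Y})_1\sim\N(0,1)$. Either route reduces the problem to routine Gaussian facts, so the only point that genuinely requires care --- and the one I would be most careful to pin down --- is the \emph{independence} clause: one must certify that the representing Gaussian $G$ may be taken independent of $\Lp{\mvec{X}}{2}{}$. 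This follows immediately from the conditioning argument, since the conditional law of $\LPd{\mvec{X},\mvec{Y}}$ given $\mvec{X}$ is $\N(0,\Lp{\mvec{X}}{2}{2})$; equivalently, on $\{\mvec{X}\neq\mvec{0}\}$ the normalized inner product $\LPd{\mvec{X},\mvec{Y}}/\Lp{\mvec{X}}{2}{}$ is standard Gaussian regardless of the value of $\mvec{X}$, hence independent of $\mvec{X}$ and in particular of $\Lp{\mvec{X}}{2}{}$. Beyond this bookkeeping I anticipate no real obstacle.
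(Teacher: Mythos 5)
Your proof is correct; note that the paper does not prove this lemma at all but imports it verbatim as Lemma 3 of \cite{JalaliM:14}, and your conditioning argument (conditional on $\mvec{X}=\mvec{x}$, the inner product is $\N\LPr{0,\Lp{\mvec{x}}{2}{2}}$, then match characteristic functions) is the standard proof of that result. Your closing observation --- that $G \triangleq \LPd{\mvec{X},\mvec{Y}}/\Lp{\mvec{X}}{2}{}$ is $\N(0,1)$ and independent of $\mvec{X}$ on the almost-sure event $\{\mvec{X}\neq \mvec{0}\}$ --- is exactly the right way to certify the independence clause, and it delivers the lemma in the joint form actually needed where it is invoked in Section \ref{proof_noisy_dense_thm_iid_gauss}, namely so that $\Lp{\mvec{z}}{2}{2}$ and $\left|\LPd{\underline{\mvec{\theta}},\mvec{g}}\right|^2$ can be bounded on separate events and combined.
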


\begin{lem}[Lemma 2 from \cite{JalaliM:14}]
\label{X2_concentaration}
Let $G_i$,  $i=1, 2, \cdots, m$,  be i.i.d.~$\N(0,1)$. Then, for $\tau\in(0,1)$,
\[P\LPr{\sum_{i=1}^{m} G_i^2 \le m(1-\tau)} \le \exp\LKr{\frac{m}{2} \LPr{\tau+\ln(1-\tau)}}\]
and for $\tau>0$,
\[P\LPr{\sum_{i=1}^{m} G_i^2 > m(1-\tau)} \le \exp\LKr{-\frac{m}{2} \LPr{\tau-\ln(1-\tau)}}.\]

\end{lem}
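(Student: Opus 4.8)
The plan is to prove both bounds by the Cram\'er--Chernoff (exponential Markov) method, exploiting the fact that $\sum_{i=1}^m G_i^2$ is a chi-squared variable with an explicit moment generating function. First I would record that for a single $G\sim\N(0,1)$ and any $s<\tfrac12$,
\[
\Eox{{\rm e}^{sG^2}} = \frac{1}{\sqrt{2\pi}}\int_{\setR}{\rm e}^{sg^2-g^2/2}\,dg = (1-2s)^{-1/2},
\]
so by independence $\Eox{{\rm e}^{s\sum_{i=1}^m G_i^2}} = (1-2s)^{-m/2}$ for $s<\tfrac12$. Both tail bounds then reduce to optimizing a one-parameter exponential bound.

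For the lower tail, fix $\lambda>0$ and apply Markov's inequality to ${\rm e}^{-\lambda\sum_i G_i^2}$:
\[
P\LPr{\sum_{i=1}^m G_i^2 \le m(1-\tau)} \le {\rm e}^{\lambda m(1-\tau)}\,\Eox{{\rm e}^{-\lambda\sum_i G_i^2}} = {\rm e}^{\lambda m(1-\tau)}(1+2\lambda)^{-m/2}.
\]
I would minimize the exponent $\lambda(1-\tau)-\tfrac12\ln(1+2\lambda)$ over $\lambda>0$; setting its derivative to zero gives $1+2\lambda=(1-\tau)^{-1}$, i.e.\ $\lambda=\tau/(2(1-\tau))>0$ for $\tau\in(0,1)$. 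Substituting this value makes $\lambda m(1-\tau)=m\tau/2$ and $-\tfrac{m}{2}\ln(1+2\lambda)=\tfrac{m}{2}\ln(1-\tau)$, so the bound collapses to $\exp\{\tfrac{m}{2}(\tau+\ln(1-\tau))\}$, which is exactly the first claimed inequality.

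For the upper tail I would run the symmetric argument with a positive exponent: for $\lambda\in(0,\tfrac12)$,
\[
P\LPr{\sum_{i=1}^m G_i^2 \ge m(1+\tau)} \le {\rm e}^{-\lambda m(1+\tau)}(1-2\lambda)^{-m/2},
\]
and minimizing $-\lambda(1+\tau)-\tfrac12\ln(1-2\lambda)$ yields $1-2\lambda=(1+\tau)^{-1}$, i.e.\ $\lambda=\tau/(2(1+\tau))\in(0,\tfrac12)$ for $\tau>0$; substituting gives $\exp\{-\tfrac{m}{2}(\tau-\ln(1+\tau))\}$. I note that, as transcribed, the second displayed inequality should read $m(1+\tau)$ and $\ln(1+\tau)$ rather than $m(1-\tau)$ and $\ln(1-\tau)$: since $\Eox{\sum_i G_i^2}=m>m(1-\tau)$, the event $\{\sum_i G_i^2>m(1-\tau)\}$ has probability close to one and cannot obey an exponentially small bound, so this is a typographical slip and the upper-tail Chernoff estimate above is the intended statement.

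There is no genuine obstacle here; both halves are the standard large-deviations computation for a chi-squared variable, and the closed forms $\tau+\ln(1-\tau)$ and $\tau-\ln(1+\tau)$ are precisely the Legendre--Fenchel transforms that drop out of the optimization. The only technical care required is to keep the optimizing $\lambda$ inside the domain of the moment generating function (the constraint $\lambda<\tfrac12$ in the upper-tail case is automatically met by $\lambda=\tau/(2(1+\tau))$), so no auxiliary estimates are needed.
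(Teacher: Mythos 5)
Your proposal is correct: the paper states this lemma without proof (it is imported as Lemma 2 of \cite{JalaliM:14}), and your Cram\'er--Chernoff computation via the chi-squared moment generating function $(1-2s)^{-m/2}$, with optimizers $\lambda=\tau/(2(1-\tau))$ for the lower tail and $\lambda=\tau/(2(1+\tau))$ for the upper tail, is exactly the standard argument behind the cited result. Your typo diagnosis is also right --- the second display should read $m(1+\tau)$ and $\ln(1+\tau)$, since the event $\bigl\{\sum_i G_i^2 > m(1-\tau)\bigr\}$ has probability near one (and $\ln(1-\tau)$ is undefined for $\tau\ge 1$), and indeed the paper itself applies the bound in this corrected form in the proof of Theorem \ref{noisy_dense_thm_iid_gauss}, where $P(\E_3^c)\le \exp\LKr{-\frac{m}{2}\LPr{\tau_1'-\ln(1+\tau_1')}}$ for the event $\E_3=\LKr{\|\mvec{z}\|_2^2/\sigma_z^2\le(1+\tau_1')m}$.
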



\begin{thm}[see \eg \cite{rudelson2014recent}]
\label{bound_SV_densmatrix}
Let $\mmat{A} \in \setR^{m\times n}$ be a  dense random matrix whose entries are i.i.d.~zero-mean Gaussian random variables  with unit variance. Then, $t>0$,
\begin{align}
P\LPr{\sigma_{\max}(\mmat{A})\ge \sqrt{m}+\sqrt{n}+t} \le {\rm e}^{-\frac{t^2}{2}}.
\end{align}
\end{thm}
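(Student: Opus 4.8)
The plan is to decouple the two features of the asserted bound—its centering at $\sqrt{m}+\sqrt{n}$ and its sub-Gaussian tail ${\rm e}^{-t^2/2}$—and establish each separately, then combine. First I would invoke the variational characterization of the operator norm,
\[
\sigma_{\max}(\mmat{A}) = \max_{\mvec{u}\in S^{n-1}}\max_{\mvec{v}\in S^{m-1}} \LPd{\mmat{A}\mvec{u}, \mvec{v}},
\]
which exhibits $\sigma_{\max}(\mmat{A})$ as the supremum of the centered Gaussian process $X_{\mvec{u},\mvec{v}} = \LPd{\mmat{A}\mvec{u},\mvec{v}} = \sum_{i,j} {A}_{i,j}\, u_j v_i$ indexed by $(\mvec{u},\mvec{v})\in S^{n-1}\ti S^{m-1}$. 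This single identity feeds both halves of the argument.

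For the tail, I would view $\mmat{A}$ as a standard Gaussian vector in $\setR^{mn}$ (its entries being i.i.d.~$\N(0,1)$) and observe that $\mmat{A}\mapsto \sigma_{\max}(\mmat{A})$ is $1$-Lipschitz with respect to the Frobenius (Euclidean) norm: for any $\mmat{A},\mmat{B}$,
\[
\Labs{\sigma_{\max}(\mmat{A})-\sigma_{\max}(\mmat{B})}\le \sigma_{\max}(\mmat{A}-\mmat{B})\le \Lp{\mmat{A}-\mmat{B}}{F}{},
\]
the first inequality by the triangle inequality for the operator norm and the second because the operator norm is dominated by the Frobenius norm. The standard Gaussian concentration inequality for Lipschitz functions then yields
\[
P\LPr{\sigma_{\max}(\mmat{A})\ge \Eox{\sigma_{\max}(\mmat{A})}+t}\le {\rm e}^{-t^2/2},\qquad t>0.
\]

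It remains to center the bound, i.e.~to show $\Eox{\sigma_{\max}(\mmat{A})}\le \sqrt{m}+\sqrt{n}$, for which I would use a Gaussian comparison (Sudakov--Fernique) argument. Introduce the auxiliary process $Y_{\mvec{u},\mvec{v}} = \LPd{\mvec{g},\mvec{u}}+\LPd{\mvec{h},\mvec{v}}$, where $\mvec{g}\in\setR^n$ and $\mvec{h}\in\setR^m$ are independent standard Gaussian vectors. A direct covariance computation on the product of spheres gives $\Eox{(X_{\mvec{u},\mvec{v}}-X_{\mvec{u}',\mvec{v}'})^2}=2-2\LPd{\mvec{u},\mvec{u}'}\LPd{\mvec{v},\mvec{v}'}$ and $\Eox{(Y_{\mvec{u},\mvec{v}}-Y_{\mvec{u}',\mvec{v}'})^2}=4-2\LPd{\mvec{u},\mvec{u}'}-2\LPd{\mvec{v},\mvec{v}'}$, whose difference equals $2\LPr{1-\LPd{\mvec{u},\mvec{u}'}}\LPr{1-\LPd{\mvec{v},\mvec{v}'}}\ge 0$. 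Hence the increments of $Y$ dominate those of $X$, and Sudakov--Fernique gives $\Eox{\max X}\le \Eox{\max Y}$. Since the maximum over the two spheres decouples, $\Eox{\max_{\mvec{u},\mvec{v}} Y_{\mvec{u},\mvec{v}}}=\Eox{\Lp{\mvec{g}}{2}{}}+\Eox{\Lp{\mvec{h}}{2}{}}\le\sqrt{n}+\sqrt{m}$ by Jensen's inequality applied to $\Eox{\Lp{\mvec{g}}{2}{}}\le\sqrt{\Eox{\Lp{\mvec{g}}{2}{2}}}=\sqrt{n}$. Substituting $\Eox{\sigma_{\max}(\mmat{A})}\le\sqrt{m}+\sqrt{n}$ into the concentration bound proves the claim.

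The main obstacle is the expectation bound, and specifically the verification of the increment-domination hypothesis licensing Sudakov--Fernique: the covariance algebra is elementary, but the fact that the difference factors as $2\LPr{1-\LPd{\mvec{u},\mvec{u}'}}\LPr{1-\LPd{\mvec{v},\mvec{v}'}}$ with both factors nonnegative on the spheres is exactly what makes the comparison valid, so the sign must be handled with care. I note that a self-contained alternative replaces this step by an $\eps$-net and union-bound argument over covers of $S^{n-1}$ and $S^{m-1}$, estimating $\LPd{\mmat{A}\mvec{u},\mvec{v}}\sim\N(0,1)$ at each net point; however, that route degrades the constants and does not reproduce the sharp centering $\sqrt{m}+\sqrt{n}$ together with the clean ${\rm e}^{-t^2/2}$ tail, so the Gaussian-comparison route is preferable here.
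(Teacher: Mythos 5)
Your proof is correct, and it is essentially the canonical argument: the paper itself states this theorem without proof, citing \cite{rudelson2014recent}, and the proof given there (going back to Gordon and to Davidson--Szarek) is exactly your two-step route --- bound $\Eox{\sigma_{\max}(\mmat{A})}\le\sqrt{m}+\sqrt{n}$ via the Sudakov--Fernique comparison with the decoupled process $\LPd{\mvec{g},\mvec{u}}+\LPd{\mvec{h},\mvec{v}}$, then apply Borell--TIS concentration for the $1$-Lipschitz function $\mmat{A}\mapsto\sigma_{\max}(\mmat{A})$ with respect to the Frobenius norm. Your covariance computation and the factorization $2\LPr{1-\LPd{\mvec{u},\mvec{u}'}}\LPr{1-\LPd{\mvec{v},\mvec{v}'}}\ge 0$ check out, and your remark that an $\eps$-net argument would lose the sharp centering and constant is also accurate.
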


If we substitute $t \baks t\sqrt{m}$, then Theorem \ref{bound_SV_densmatrix} results in Corollary \ref{sup_sigular_dense}.   
\begin{coro}
\label{sup_sigular_dense}
Let $\mmat{A} \in \setR^{m\times n}$ be a random matrix whose entries are independent,  zero-mean Gaussian random variables  with unit variance. Then
\begin{align}
P\LPr{\sigma_{\max}(\mmat{A})\ge (1+t)\sqrt{m}+\sqrt{n}} \le {\rm e}^{-\frac{mt^2}{2}},
\end{align}
$t>0$ is arbitrary variable.
\end{coro}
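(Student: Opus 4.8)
The plan is to obtain Corollary \ref{sup_sigular_dense} as a direct reparametrization of Theorem \ref{bound_SV_densmatrix}, exploiting the fact that the deviation parameter in the theorem ranges over all positive reals. Since the entries of $\mmat{A}$ are i.i.d.\ zero-mean Gaussian with unit variance, the hypotheses of Theorem \ref{bound_SV_densmatrix} are met verbatim, so it yields, for every $s>0$,
\[
P\LPr{\sigma_{\max}(\mmat{A}) \ge \sqrt{m} + \sqrt{n} + s} \le {\rm e}^{-\frac{s^2}{2}}.
\]

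The key (and essentially only) step is to choose the free parameter so that the additive deviation $s$ is written as a multiple of $\sqrt{m}$. Concretely, I would set $s = t\sqrt{m}$ for an arbitrary $t>0$, which is admissible because $s$ sweeps out all of $(0,\infty)$ as $t$ does. This substitution simultaneously rewrites the threshold on the left-hand side, via $\sqrt{m} + t\sqrt{m} = (1+t)\sqrt{m}$, and the exponent on the right-hand side, via $s^2/2 = (t\sqrt{m})^2/2 = mt^2/2$. Combining the two simplifications produces exactly
\[
P\LPr{\sigma_{\max}(\mmat{A}) \ge (1+t)\sqrt{m} + \sqrt{n}} \le {\rm e}^{-\frac{mt^2}{2}},
\]
which is the claimed bound.

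There is no genuine analytic obstacle here: the corollary is a cosmetic restatement whose purpose is to package the singular-value tail bound in a form where the exponent carries an explicit factor of $m$. This is the form that will be convenient later, since in the convergence analysis of C-GD one repeatedly needs the failure probability to decay geometrically in the number of measurements $m$. The one point I would explicitly verify is that the hypotheses line up — the theorem assumes i.i.d.\ unit-variance zero-mean Gaussian entries while the corollary states independent zero-mean unit-variance Gaussian entries — and note that these are the same assumption, so the reparametrization introduces no new condition.
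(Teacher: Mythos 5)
Your proposal is correct and coincides exactly with the paper's own derivation, which obtains the corollary from Theorem \ref{bound_SV_densmatrix} by the single substitution $t \leftarrow t\sqrt{m}$. Your additional check that the hypotheses match (independent zero-mean unit-variance Gaussian entries being the same as the theorem's i.i.d.\ assumption) is a fine point of care, but nothing beyond the paper's one-line argument is needed.
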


\begin{thm}
\label{bound_SV_subgmatrix_thm}
Let $\mmat{A} \in \setR^{m\times n}$ be an i.i.d.~matrix such that $A_{i,j}$ is a zero-mean sub-Gaussian random variable with $\Lp{{A}_{i,j}}{\psi_2}{} \le K$ and $\Eox{{A}_{i,j}^2}=\sigma_a^2$. Then, if $m<n$, for any $t>0$,
\begin{align}
P\LPr{ \sigma_{\max}(\mmat{A})\ge \sqrt{2m\sigma_a^2+12nK(1+t)}} \le {\rm e}^{-3nt},
\end{align}
\end{thm}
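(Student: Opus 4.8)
The plan is to control the spectral norm $\sigma_{\max}(\mmat{A}) = \max_{\mvec{u}\in S^{n-1}}\Lp{\mmat{A}\mvec{u}}{2}{}$ by combining a concentration bound for $\Lp{\mmat{A}\mvec{u}}{2}{2}$ at a \emph{fixed} direction $\mvec{u}$ with a covering ($\epsilon$-net) argument over the sphere. First I would fix $\mvec{u}\in S^{n-1}$ and write $\Lp{\mmat{A}\mvec{u}}{2}{2} = \sum_{i=1}^m \LPd{\mvec{a}_i,\mvec{u}}^2$, where $\mvec{a}_i$ is the $i$-th row of $\mmat{A}$. Since the entries $A_{i,j}$ are independent, zero-mean, sub-Gaussian with $\Lp{A_{i,j}}{\psi_2}{}\le K$, Lemma \ref{sum_sub-Gaussian} gives $\Lp{\LPd{\mvec{a}_i,\mvec{u}}}{\psi_2}{}\le K\Lp{\mvec{u}}{2}{}=K$, and then Lemma \ref{prod_subgaus} (with $X=Y=\LPd{\mvec{a}_i,\mvec{u}}$) shows that each $\LPd{\mvec{a}_i,\mvec{u}}^2$ is sub-exponential with $\Lp{\LPd{\mvec{a}_i,\mvec{u}}^2}{\psi_1}{}\le K^2$. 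A direct computation using independence and $\Eox{A_{i,j}^2}=\sigma_a^2$ yields $\Eox{\Lp{\mmat{A}\mvec{u}}{2}{2}}=m\sigma_a^2$.

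Second, I would apply the Bernstein-type inequality of Theorem \ref{thm_bound_sub_exp} to the centered sum $\sum_{i=1}^m(\LPd{\mvec{a}_i,\mvec{u}}^2-\Eox{\LPd{\mvec{a}_i,\mvec{u}}^2})$ with the uniform weight vector $\mvec{w}=\mathbf{1}$ (so $\Lp{\mvec{w}}{2}{2}=m$ and $\Lp{\mvec{w}}{\infty}{}=1$) and sub-exponential parameter $K^2$. This gives, for each fixed $\mvec{u}$, a tail bound of the form $P(\Lp{\mmat{A}\mvec{u}}{2}{2}\ge m\sigma_a^2+s)\le \exp(-\min(s^2/(4K^4 m),\,s/(2K^2)))$.

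Third, take $\N$ to be a minimal $\epsilon$-net of $S^{n-1}$, of cardinality $\Labs{\N}\le(1+2/\epsilon)^n$, and use the standard distortion bound $\sigma_{\max}(\mmat{A})\le (1-\epsilon)^{-1}\max_{\mvec{u}\in\N}\Lp{\mmat{A}\mvec{u}}{2}{}$. A union bound over $\N$ then controls $\max_{\mvec{u}\in\N}\Lp{\mmat{A}\mvec{u}}{2}{2}$, at the cost of the factor $(1+2/\epsilon)^n=\exp(n\ln(1+2/\epsilon))$. To beat this cardinality I would operate in the \emph{linear} regime of Bernstein, i.e.\ take the deviation $s$ proportional to $n$ so that the active branch is $\exp(-s/(2K^2))$; choosing $s$ of order $nK^2(1+t)$ large enough that $s/(2K^2)-n\ln(1+2/\epsilon)\ge 3nt$ converts the union bound into the claimed $e^{-3nt}$. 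Fixing $\epsilon$ near $1-1/\sqrt{2}$ makes the prefactor $(1-\epsilon)^{-2}$ acting on the mean term equal to $2$, and collecting the remaining constants produces the additive term of the stated form $12nK(1+t)$ inside the square root.

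I expect the main obstacle to be the bookkeeping in this last step: one must simultaneously (i) stay in the linear tail of Bernstein so that the exponent scales like $n$ rather than $\sqrt{n}$, (ii) choose the net radius $\epsilon$ so that both the covering exponent $n\ln(1+2/\epsilon)$ is absorbed and the distortion factor $(1-\epsilon)^{-2}$ yields exactly the coefficient $2$ on $m\sigma_a^2$, and (iii) verify that the resulting $s$ keeps $s^2/(4K^4m)\ge s/(2K^2)$ so that the linear branch is indeed the binding one (this is where the hypothesis $m<n$ enters, guaranteeing $s\gtrsim m K^2$). Balancing these constraints is what pins down the numerical constants $2$, $12$, and $3$; the probabilistic content is entirely contained in Lemma \ref{sum_sub-Gaussian}, Lemma \ref{prod_subgaus}, and Theorem \ref{thm_bound_sub_exp}.
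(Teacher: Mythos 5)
Your plan is essentially the paper's own proof: a Bernstein bound (Theorem \ref{thm_bound_sub_exp}) for $\|\mmat{A}\mvec{u}\|_2^2$ at a fixed direction, obtained via Lemma \ref{sum_sub-Gaussian} and Lemma \ref{prod_subgaus}, followed by a net over $S^{n-1}$ of cardinality $(1+2/\epsilon)^n$, a union bound, and a deliberate choice of deviation $s \propto n(1+t)$ so that the \emph{linear} branch of Bernstein is active and absorbs both the covering exponent and the target $3nt$. Even your side remarks match the paper: the paper checks exactly your condition (iii), noting that with its substitution the linear branch binds whenever $m<3n$, which is implied by $m<n$. The only cosmetic difference is the net variant: you use the norm comparison $\sigma_{\max}(\mmat{A})\le(1-\epsilon)^{-1}\max_{\mvec{u}\in\N}\|\mmat{A}\mvec{u}\|_2$ with $\epsilon$ near $1-1/\sqrt{2}$ to make $(1-\epsilon)^{-2}=2$, while the paper compares quadratic forms directly, getting $\sigma_{\max}^2(\mmat{A})\le(1-2\varepsilon)^{-1}\max_{\mvec{x}\in\N_\varepsilon}\left|\LPd{\mmat{A}\mvec{x},\mmat{A}\mvec{x}}\right|$ with $\varepsilon=1/4$; both yield the factor $2$ on $m\sigma_a^2$ and a covering exponent below $3n$, so nothing hinges on this choice.

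There is one place where your bookkeeping, carried out as you describe, will \emph{not} reproduce the stated constant, and it is worth flagging because your version is the rigorous one. Having correctly identified the sub-exponential parameter as $K^2$ (via Lemma \ref{prod_subgaus}), your linear-branch exponent is $s/(2K^2)$, so beating $n\ln(1+2/\epsilon)+3nt$ forces $s\asymp nK^2(1+t)$ and delivers $\sqrt{2m\sigma_a^2+12nK^2(1+t)}$ inside the root --- with $K^2$, not the theorem's $K$. Your closing claim that "collecting the remaining constants" yields $12nK(1+t)$ cannot be realized from $s/(2K^2)$ scaling. The paper arrives at $12nK(1+t)$ only because its display \eqref{eq_bound_blkd_subGauss_mtx} applies Theorem \ref{thm_bound_sub_exp} with parameter $K$ to variables whose $\psi_1$-norm bound from Lemma \ref{prod_subgaus} is $K^2$; this is a slip (or an implicit assumption $K\le 1$, under which the parameter-$K$ bound is a valid weakening of the parameter-$K^2$ bound). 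Note also that your $K^2$ version is the dimensionally consistent one, since $m\sigma_a^2$ and $nK^2(1+t)$ then carry the same units. So: same architecture as the paper throughout, and the one mismatch in your final constant traces to an inconsistency in the paper's own statement rather than to a flaw in your argument.
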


\begin{proof} 
Let $\N_\varepsilon$ denote a maximal $\varepsilon$-separated subset of $S^{n-1}$. It is straightforward to show that [Lemma 5.2 from \cite{rudelson2014recent}]
\begin{align}
|\N_\varepsilon|\leq \LPr{1 + \frac{2}{\varepsilon}}^n.\label{bd-epsilon-net}
\end{align}
Consider vector  $\mvec{u} \in S^{n-1}$ that satisfies  $\sigma_{\max}(A)=  \max_{\mvec{u}} \|A\mvec{u}\|$. Let   $\mvec{u}_0 \in \N_{n,\varepsilon}$ be such that $\Lp{\mvec{u} - \mvec{u}_0}{2}{}\le \varepsilon$.  Then, by the triangle inequality, we have
\begin{align}						
 \Labs{\LPd{\mmat{A}\mvec{u}, \mmat{A}\mvec{u}}- \LPd{\mmat{A}\mvec{u}_0, \mmat{A}\mvec{u}_0}}
&=  \Labs{\LPd{\mmat{A}\mvec{u}, \mmat{A}(\mvec{u}-\mvec{u}_0)}- \LPd{\mmat{A}\mvec{u}_0, \mmat{A}(\mvec{u}_0-\mvec{u})}}\nonumber\\
&\le\Labs{\LPd{\mmat{A}\mvec{u}, \mmat{A}(\mvec{u}-\mvec{u}_0)}} +\Labs{ \LPd{\mmat{A}\mvec{u}_0, \mmat{A}(\mvec{u}_0-\mvec{u})}}\nonumber \\
&\le 2(\sigma_{\max}(A))^2 \|\mvec{u}-\mvec{u}_0\|\nonumber\\
&\leq 2\varepsilon(\sigma_{\max}(A))^2.\label{eq:upper-bd-62}
\end{align}
On the other hand, again by the triangle inequality, 
\begin{align}	
 \Labs{\LPd{\mmat{A}\mvec{u}, \mmat{A}\mvec{u}}- \LPd{\mmat{A}\mvec{u}_0, \mmat{A}\mvec{u}_0}} & \ge  \Labs{\LPd{\mmat{A}\mvec{u}, \mmat{A}\mvec{u}}}- \Labs{\LPd{\mmat{A}\mvec{u}_0, \mmat{A}\mvec{u}_0}}\nonumber\\
 &= (\sigma_{\max}(A))^2-  \Labs{\LPd{\mmat{A}\mvec{u}_0, \mmat{A}\mvec{u}_0}}\nonumber\\
 &\ge (\sigma_{\max}(A))^2- \max_{\mvec{x}\in\N_{n,\varepsilon}} \Labs{\LPd{\mmat{A}\mvec{x}, \mmat{A}\mvec{x}}}.\label{eq:lower-bd-63}
\end{align}
Combining \eqref{eq:upper-bd-62} and \eqref{eq:lower-bd-63} yields 
\begin{align}
 (\sigma_{\max}(A))^2\leq  (1 - 2\varepsilon)^{-1}     \max_{\mvec{x} \in \N_{n,\eps}}\Labs{ \LPd{\mmat{A}\mvec{x}, \mmat{A}\mvec{x}}}.\label{eq:sigma-max}
\end{align}
To finish the proof, we need to upper bound $\max_{\mvec{x} \in \N_{n,\eps}} \LPd{\mmat{A}\mvec{x}, \mmat{A}\mvec{x}}$. For a fixed $\mvec{x}\in\N_\varepsilon$, by Theorem \ref{thm_bound_sub_exp},  
\begin{align}
\label{eq_bound_blkd_subGauss_mtx}
		P\LPr{ \frac{1}{m} \Lp{\mmat{A}\mvec{x}}{2}{2}-\sigma_a^2\Lp{\mvec{x}}{2}{2} \ge t }  \le 
		 \exp\LKr{-\min \LPr{\frac{mt^2}{4K^2}, \frac{mt}{2K } } }.
\end{align}
Therefore, 
\begin{align*}
P \LPr{\max_{\mvec{x} \in \N_{\varepsilon}} \LPr{ \frac{1}{m} \Lp{\mmat{A}\mvec{x}}{2}{2}-\sigma_a^2\Lp{\mvec{x}}{2}{2}} \ge t}
&\le \Labs{\N_{\varepsilon}} \exp\LKr{-\min \LPr{\frac{mt^2}{4K^2}, \frac{mt}{2K } } }.\\
\end{align*}
Let $\varepsilon={1\over 4}$. Then, from \eqref{bd-epsilon-net}, 
\begin{align}
\Labs{\N_{\varepsilon}} \exp\LKr{-\min \LPr{\frac{mt^2}{4K^2}, \frac{mt}{2K } } }&\le 9^n \exp\LKr{-\min \LPr{\frac{mt^2}{4K^2}, \frac{mt}{2K } } }\nonumber\\
&= \exp\LKr{-\frac{mt}{2K}\min \LPr{\frac{t}{2K},1 }+n\ln 9 }\nonumber\\
&\le \exp\LKr{-\frac{mt}{2K}\min \LPr{\frac{t}{2K},1 }+3n }.
\end{align}
Substituting $t$ by $6n(1+t)K/m$ and noting that for this value of $t$, if $m<3n$, ${t\over 2K}$ is always larger than 1, it follows that
\begin{align*}
P \LPr{\max_{\mvec{x} \in \N_{\varepsilon}} \LPr{ \frac{1}{m} \Lp{\mmat{A}\mvec{x}}{2}{2}-\sigma_a^2\Lp{\mvec{x}}{2}{2}} \ge {6n(1+t)K\over m}}
&\le \exp\LKr{-3nt}.
\end{align*}
Therefore, in summary, form \eqref{eq:sigma-max}, with probability larger than $1-{\rm e}^{-3nt}$,
\[
\sigma_{\max}(\mmat{A})\leq \sqrt{2m\sigma_a^2+12nK(1+t)}.
\]
\end{proof}
Substituting $t$ by $\frac{m\sigma_a^2}{6nK}$ in Theorem \ref{bound_SV_subgmatrix_thm} results in the following corollary. 

\begin{coro}
\label{bound_SV_matrix_lemma}
Let $\mmat{A} \in \setR^{m\times n}$ be an i.i.d.~matrix such $A_{i,j}$ is a zero-mean sub-Gaussian random variable with $\Lp{{A}_{i,j}}{\psi_2}{} \le K$ and $\Eox{{A}_{i,j}^2}=\sigma_a^2$. Then, for $m<n$, \begin{align}
P\LPr{ \sigma_{\max}(\mmat{A})\ge 2\sigma_a\sqrt{m+\frac{3K}{\sigma_a^2}n}} \le {\rm e}^{-\frac{\sigma_a^2}{2K}m}.
\end{align}
\end{coro}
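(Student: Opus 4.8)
The plan is to observe that Corollary \ref{bound_SV_matrix_lemma} is a direct specialization of Theorem \ref{bound_SV_subgmatrix_thm}: since that theorem is stated for \emph{every} $t>0$, it suffices to substitute the particular value $t=\frac{m\sigma_a^2}{6nK}$ and simplify both the event threshold and the tail exponent. First I would check that this choice is admissible. Since $m,n,\sigma_a^2>0$ and $K>0$ (the latter because $A_{i,j}$ is a nondegenerate random variable, so its sub-Gaussian norm is strictly positive), the chosen $t$ is strictly positive, and hence Theorem \ref{bound_SV_subgmatrix_thm} applies verbatim; the hypothesis $m<n$ carries over unchanged, so no additional regime condition needs to be verified.

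Next I would reduce the threshold inside the probability. With this $t$, the cross term is $12nK\,t = 12nK\cdot\frac{m\sigma_a^2}{6nK} = 2m\sigma_a^2$, so the radicand $2m\sigma_a^2+12nK(1+t)$ collapses to $2m\sigma_a^2 + 12nK + 2m\sigma_a^2 = 4m\sigma_a^2 + 12nK$. Pulling out the factor $4$ and then $\sigma_a^2$ inside the square root gives $\sqrt{4m\sigma_a^2+12nK} = 2\sigma_a\sqrt{m+\frac{3K}{\sigma_a^2}n}$, which is precisely the threshold appearing in the statement of the corollary.

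Finally I would simplify the tail exponent: $3nt = 3n\cdot\frac{m\sigma_a^2}{6nK} = \frac{\sigma_a^2}{2K}m$, so that $\mathrm{e}^{-3nt} = \mathrm{e}^{-\frac{\sigma_a^2}{2K}m}$, matching the claimed probability bound. The argument presents no genuine obstacle, since the entire content is a single substitution into an already-proved inequality; the only thing requiring care is the elementary bookkeeping of the constants inside the square root, together with the sanity check that the specialized value of $t$ respects the regime $m<n$ under which Theorem \ref{bound_SV_subgmatrix_thm} was established.
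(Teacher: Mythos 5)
Your proposal is correct and coincides with the paper's own argument: the paper derives Corollary \ref{bound_SV_matrix_lemma} precisely by substituting $t=\frac{m\sigma_a^2}{6nK}$ into Theorem \ref{bound_SV_subgmatrix_thm}, and your simplifications of the radicand ($2m\sigma_a^2+12nK(1+t)=4m\sigma_a^2+12nK$, giving the threshold $2\sigma_a\sqrt{m+\frac{3K}{\sigma_a^2}n}$) and of the exponent ($3nt=\frac{\sigma_a^2}{2K}m$) are exactly the bookkeeping the paper leaves implicit. Your added sanity checks that $t>0$ and that the regime $m<n$ is unchanged are fine and require no further justification.
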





\begin{lem}[Lemma 5 in \cite{jalali2016new}.]
	\label{thm_dense_Gauss_mtx_Lemma}
	Consider $\mvec{u}, \mvec{v} \in S^{n-1}$ and dense random Gaussian matrix $\mmat{A} \in \setR^{m\times n}$ with i.i.d zero mean Gaussian entries as $\N\LPr{0, \sigma_a^2}$. Then, for any $t>0$
	\begin{align}
	\label{eq_concent_dense_gauss}
		P\LPr{ \LPd{\mvec{u}, \mvec{v}} - \frac{1}{m\sigma_a^2}\LPd{\mmat{A}\mvec{u}, \mmat{A}\mvec{v} }  \ge t } \le   {\rm e}^{-m f^*(t)}   \end{align}
	where 
$
		 f^*(t)  = \min_{u \in [-1,1]} \max_{s \in \LPr{0, \frac{1}{1-u} }} 
		  \LKr{s(t-u)+\frac{1}{2}\ln\LCr{ \LPr{1+su}^2 - s^2}}.$
\end{lem}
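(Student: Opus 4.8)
The plan is to recognize the left-hand quantity as a normalized sum of i.i.d.\ products of jointly Gaussian variables and then apply an exponential Chernoff bound whose large-deviation rate is exactly the expression defining $f^*(t)$. First, by scale invariance I would write $A_{i,j}=\sigma_a G_{i,j}$ with $G_{i,j}\sim\N(0,1)$ i.i.d., so that $\frac{1}{m\sigma_a^2}\LPd{\mmat{A}\mvec{u},\mmat{A}\mvec{v}}=\frac{1}{m}\LPd{\mmat{G}\mvec{u},\mmat{G}\mvec{v}}$ and the problem reduces to $\sigma_a=1$. Writing $\mvec{g}_i^T$ for the $i$-th row of $\mmat{G}$ and setting $X_i=\LPd{\mvec{g}_i,\mvec{u}}$ and $Y_i=\LPd{\mvec{g}_i,\mvec{v}}$, the pairs $(X_i,Y_i)$, $i=1,\dots,m$, are i.i.d.\ centered bivariate Gaussian with $\Eox{X_i^2}=\Lp{\mvec{u}}{2}{2}=1$, $\Eox{Y_i^2}=\Lp{\mvec{v}}{2}{2}=1$, and $\Eox{X_iY_i}=\LPd{\mvec{u},\mvec{v}}=:\rho$, while $\LPd{\mmat{G}\mvec{u},\mmat{G}\mvec{v}}=\sum_{i=1}^m X_iY_i$. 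The event of interest is therefore $\sum_{i=1}^m(\rho-X_iY_i)\ge mt$.

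Next I would apply the exponential Markov inequality: for any $s>0$, using independence across $i$,
\[ P\LPr{\sum_{i=1}^m\LPr{\rho-X_iY_i}\ge mt}\le {\rm e}^{-smt}\LPr{\Eox{{\rm e}^{s\LPr{\rho-X_1Y_1}}}}^m={\rm e}^{-sm(t-\rho)}\LPr{\Eox{{\rm e}^{-sX_1Y_1}}}^m. \]
Everything now reduces to the moment generating function of the product $X_1Y_1$ of two correlated standard Gaussians.

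To evaluate it I would diagonalize the quadratic form by rotating to $U=(X_1+Y_1)/\sqrt2$ and $V=(X_1-Y_1)/\sqrt2$, which are \emph{independent} with $U\sim\N(0,1+\rho)$ and $V\sim\N(0,1-\rho)$ and satisfy $X_1Y_1=(U^2-V^2)/2$. Using the scalar identity $\Eox{{\rm e}^{aW^2}}=(1-2a\sigma^2)^{-1/2}$ for $W\sim\N(0,\sigma^2)$ with $2a\sigma^2<1$, I obtain $\Eox{{\rm e}^{-sX_1Y_1}}=\LPr{(1+s(1+\rho))(1-s(1-\rho))}^{-1/2}=\LPr{(1+s\rho)^2-s^2}^{-1/2}$, valid precisely when $s(1-\rho)<1$, i.e.\ for $s\in(0,\frac{1}{1-\rho})$ (the $V^2$ factor is the one constraining $s$). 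Substituting back gives, for every admissible $s$,
\[ P\LPr{\rho-\frac{1}{m}\sum_i X_iY_i\ge t}\le\exp\LPr{-m\LKr{s(t-\rho)+\frac{1}{2}\ln\LCr{(1+s\rho)^2-s^2}}}. \]
Optimizing the exponent over $s\in(0,\frac{1}{1-\rho})$ yields the sharpest bound at the true correlation $\rho=\LPd{\mvec{u},\mvec{v}}$; to match the stated form (whose scalar $u\in[-1,1]$ plays the role of $\rho$ and which minimizes over it), I would then replace this correlation-dependent rate by its minimum over $\rho\in[-1,1]$. Since $\min_{\rho}g(\rho)\le g(\LPd{\mvec{u},\mvec{v}})$, this only weakens the inequality, and it produces exactly $f^*(t)$, giving \eqref{eq_concent_dense_gauss}.

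The routine but delicate part is the moment generating function computation together with correct bookkeeping of its domain of finiteness: the rotation to $(U,V)$ must produce the variances $1\pm\rho$, and the upper limit $\frac{1}{1-\rho}$ on $s$ arises solely from requiring the $V^2$ factor to have a finite MGF, so it must be carried through the optimization. A secondary check worth recording—obtained by differentiating the exponent at $s=0$, where its $s$-derivative equals $t$—is that $f^*(t)>0$ for every $t>0$, so the bound decays genuinely exponentially; and one should confirm that passing from the correlation-dependent rate to $\min_{u\in[-1,1]}$ preserves validity, which it does because it can only decrease the exponent.
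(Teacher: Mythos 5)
Your proof is correct: the reduction to $\sigma_a=1$, the identification of $(X_i,Y_i)$ as i.i.d.\ bivariate Gaussians with correlation $\rho=\LPd{\mvec{u},\mvec{v}}$, the rotation to independent $U\sim\N(0,1+\rho)$, $V\sim\N(0,1-\rho)$ giving $\Eox{{\rm e}^{-sX_1Y_1}}=\LPr{(1+s\rho)^2-s^2}^{-1/2}$ on $s\in\LPr{0,\tfrac{1}{1-\rho}}$, and the final minimization over $\rho\in[-1,1]$ (which only weakens the bound) are all sound. The paper itself states this lemma without proof, importing it from \cite{jalali2016new}, and your Chernoff/moment-generating-function argument is exactly the derivation that the form of $f^*(t)$ encodes, so your reconstruction matches the intended approach.
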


\begin{coro}
	\label{lem_dense_Gauss_mtx}
	Consider $\mvec{u}, \mvec{v} \in S^{n-1}$ and dense random Gaussian matrix $\mmat{A} \in \setR^{m\times n}$ with i.i.d. zero mean Gaussian entries as $\N\LPr{0, \sigma_a^2}$. Then, 
	\begin{align}
		P\LPr{ \LPd{\mvec{u}, \mvec{v}} - \frac{1}{m\sigma_a^2}\LPd{\mmat{A}\mvec{u}, \mmat{A}\mvec{v} }  \ge 0.45 } \le   2^{-\frac{m}{20}} .  \end{align}
\end{coro}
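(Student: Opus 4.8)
The statement is an immediate specialization of Lemma~\ref{thm_dense_Gauss_mtx_Lemma} at $t=0.45$, which bounds the probability in question by ${\rm e}^{-mf^*(0.45)}$. Since $2^{-m/20}={\rm e}^{-(\ln 2)m/20}$, my plan is to reduce everything to the purely deterministic inequality $f^*(0.45)\ge \frac{\ln 2}{20}$. Writing $g(u,s)=s(0.45-u)+\frac{1}{2}\ln[(1+su)^2-s^2]$ and $M(u)=\max_{s\in(0,1/(1-u))}g(u,s)$, the task becomes evaluating $f^*(0.45)=\min_{u\in[-1,1]}M(u)$ and checking that it clears the threshold $\frac{\ln 2}{20}\approx 0.0347$.

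First I would dispose of the inner maximization. With $h(s)=(1+su)^2-s^2=1+2us-(1-u^2)s^2$, for every $u\in[-1,1]$ the map $h$ is concave and positive on the feasible interval $(0,1/(1-u))$, so $\frac{1}{2}\ln h$ is concave (the increasing concave $\ln$ composed with a concave function) and hence $g(u,\cdot)$ is concave. Therefore $M(u)$ is attained at the unique stationary point of $g(u,\cdot)$ and is a well-defined, smooth function of $u$ on $(-1,1)$.

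Next I would locate the outer minimizer. Imposing $\partial_s g=0$ together with $\partial_u g=0$ (the latter being $M'(u)$ by the envelope relation) produces the single interior stationary point $u=0.45$, $s=0.45/(1-0.45^2)$; a second-order check shows this is a local \emph{maximum} of $M$, not a minimum. I therefore expect $\min_{[-1,1]}M$ to be attained at a boundary point, and in fact at $u=-1$. This corresponds to the antipodal case $\mvec{v}=-\mvec{u}$, where $\langle\mvec{u},\mvec{v}\rangle-\frac{1}{m\sigma_a^2}\langle\mmat{A}\mvec{u},\mmat{A}\mvec{v}\rangle$ collapses to $\frac{1}{m\sigma_a^2}\|\mmat{A}\mvec{u}\|_2^2-1$, a normalized chi-square deviation. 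The remaining one-dimensional maximization at $u=-1$ has stationary point $s=\frac{t}{2(1+t)}$ and value $M(-1)=\frac{1}{2}(t-\ln(1+t))$, giving $f^*(0.45)=\frac{1}{2}(0.45-\ln 1.45)\approx 0.0392$, which indeed exceeds $\frac{\ln 2}{20}\approx 0.0347$ — a thin margin that is precisely why the clean constant $20$ is chosen.

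The main obstacle is the step that pins the outer minimum to $u=-1$. Because the only interior stationary point $u=t$ is a local maximum of $M$, I must rule out other interior minimizers and compare the two endpoints. The cleanest route is to establish that $M$ is unimodal on $[-1,1]$ — increasing up to $u=t$ and decreasing thereafter — so that $\min_{[-1,1]}M=\min\{M(-1),M(1)\}$, and then to verify $M(-1)=\frac{1}{2}(0.45-\ln 1.45)< M(1)$ at $t=0.45$. Equivalently, one can argue probabilistically: writing $Z_i=(\mmat{A}_i\mvec{u})(\mmat{A}_i\mvec{v})$ with $\EE Z_i=\sigma_a^2 c$ for $c=\langle\mvec{u},\mvec{v}\rangle$, the Cram\'er rate of $c-\frac{1}{m\sigma_a^2}\sum_i Z_i$ coincides with $M(c)$, and among all correlations $c\in[-1,1]$ this rate is smallest at $c=-1$, reducing the bound to the standard chi-square tail estimate. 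Either route yields $f^*(0.45)=\frac{1}{2}(0.45-\ln 1.45)\ge\frac{\ln 2}{20}$, which is the claim.
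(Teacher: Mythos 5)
Your proposal matches the paper's own treatment: the paper states this corollary with no separate argument, implicitly specializing Lemma~\ref{thm_dense_Gauss_mtx_Lemma} at $t=0.45$ and absorbing the exponent into the clean constant $2^{-m/20}$, which is exactly your plan. Your explicit evaluation is moreover correct — the outer minimum of $f^*$ is indeed attained at $u=-1$ (the antipodal case, reducing to the $\chi^2$ upper tail), giving $f^*(0.45)=\tfrac{1}{2}\left(0.45-\ln 1.45\right)\approx 0.0392 \ge \tfrac{\ln 2}{20}\approx 0.0347$ — so you have in fact supplied the verification the paper leaves unstated.
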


\begin{lem}
	\label{thm_dense_subGauss_mtx}
	Consider $\mvec{u}, \mvec{v} \in S^{n-1}$ and dense matrix $\mmat{A} \in \setR^{m\times n}$ with i.i.d. zero-mean sub-Gaussian entries with $\Lp{\mmat{A}_{i,j}}{\psi_2}{} \le K$ and $\Eox{A_{i,j}^2}=\sigma_a^2$. Then, for any $t>0$,
	\begin{align}
	\label{eq_concent_dense_subgauss}
		P\LPr{ \LPd{\mvec{u}, \mvec{v}} - \frac{1}{m\sigma_a^2}\LPd{\mmat{A}\mvec{u}, \mmat{A}\mvec{v} } \ge t }  \le 
        \exp\LKr{-\frac{mt\sigma_a^2}{2K^2} \min \LPr{1, \frac{t\sigma_a^2}{2K^2}} }. 
\end{align}
\end{lem}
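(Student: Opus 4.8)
The plan is to recognize the left-hand quantity as a centered average of $m$ independent sub-exponential random variables and then invoke the Bernstein-type inequality of Theorem \ref{thm_bound_sub_exp}. First I would expand the inner product as $\LPd{\mmat{A}\mvec{u}, \mmat{A}\mvec{v}} = \sum_{i=1}^m X_i$, where $X_i = (\mmat{A}\mvec{u})_i (\mmat{A}\mvec{v})_i$ is the product of the $i$-th coordinates of $\mmat{A}\mvec{u}$ and $\mmat{A}\mvec{v}$. Since the rows of $\mmat{A}$ are independent, the $X_i$ are mutually independent across $i$. A direct computation using $\Eox{A_{i,j}A_{i,k}} = \sigma_a^2$ for $j=k$ and $0$ otherwise gives $\Eox{X_i} = \sigma_a^2 \LPd{\mvec{u},\mvec{v}}$, so that $\LPd{\mvec{u},\mvec{v}} - \frac{1}{m\sigma_a^2}\LPd{\mmat{A}\mvec{u}, \mmat{A}\mvec{v}} = \frac{1}{m\sigma_a^2}\sum_{i=1}^m \LPr{\Eox{X_i} - X_i}$. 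The target probability then becomes $P\LPr{\sum_{i=1}^m \LPr{\Eox{X_i} - X_i} \ge t m \sigma_a^2}$, i.e.\ an upper tail for a centered weighted sum.

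Next I would control the sub-exponential scale of each $X_i$. Because $\mvec{u}, \mvec{v} \in S^{n-1}$, Lemma \ref{sum_sub-Gaussian} shows that each coordinate $(\mmat{A}\mvec{u})_i = \sum_j A_{i,j} u_j$ is sub-Gaussian with $\Lp{(\mmat{A}\mvec{u})_i}{\psi_2}{} \le K \Lp{\mvec{u}}{2}{} = K$, and identically $\Lp{(\mmat{A}\mvec{v})_i}{\psi_2}{} \le K$. Then Lemma \ref{prod_subgaus} yields that the product $X_i$ is sub-exponential with $\Lp{X_i}{\psi_1}{} \le K^2$, and this bound holds uniformly in $i$, which is exactly what Theorem \ref{thm_bound_sub_exp} requires.

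Finally I would apply Theorem \ref{thm_bound_sub_exp} with the weight vector $\mvec{w} = (-1,\ldots,-1)^T$, for which $\Lp{\mvec{w}}{2}{2} = m$ and $\Lp{\mvec{w}}{\infty}{} = 1$, using the sub-exponential parameter $K^2$ in place of the theorem's $K$ and threshold $t m \sigma_a^2$. This produces the bound $\exp\LKr{-\min\LPr{\frac{(t m \sigma_a^2)^2}{4K^4 m}, \frac{t m \sigma_a^2}{2K^2}}}$, and factoring $\frac{m t \sigma_a^2}{2K^2}$ out of the minimum turns the two arguments into $\frac{t\sigma_a^2}{2K^2}$ and $1$, giving precisely the claimed estimate $\exp\LKr{-\frac{m t \sigma_a^2}{2K^2}\min\LPr{1, \frac{t\sigma_a^2}{2K^2}}}$.

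I expect no substantial analytic obstacle here; the probabilistic content is carried entirely by the Bernstein inequality, and beyond the norm computations above no new estimate is needed. The only delicate points are bookkeeping: correctly identifying that the relevant sub-exponential scale is $K^2$ rather than $K$, and orienting the sign through the choice $w_i = -1$ so that the upper-tail event $\ge t$ in the statement lines up with the direction in which Theorem \ref{thm_bound_sub_exp} is phrased.
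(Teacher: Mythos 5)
Your proposal is correct and follows essentially the same route as the paper's own proof: both decompose $\LPd{\mmat{A}\mvec{u},\mmat{A}\mvec{v}}$ into row-wise products, use Lemma \ref{sum_sub-Gaussian} to bound the coordinates' $\psi_2$ norms by $K$ and Lemma \ref{prod_subgaus} to get $\psi_1$ norm at most $K^2$ for the products, compute the mean $\sigma_a^2\LPd{\mvec{u},\mvec{v}}$, and then apply Theorem \ref{thm_bound_sub_exp} with all weights equal to $-1$. The arithmetic reduction of the resulting minimum to $\frac{mt\sigma_a^2}{2K^2}\min\LPr{1,\frac{t\sigma_a^2}{2K^2}}$ also matches the paper exactly.
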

\begin{proof}
Define $\mvec{y}(u) = \mmat{A}\mvec{u}$ and $\mvec{y}(v) = \mmat{A}\mvec{v}$. Using these definitions, $\LPd{\mmat{A}\mvec{u}, \mmat{A}\mvec{v}}  =\LPd{\mvec{y}(u), \mvec{y}(v)}$. 
Let $\mmat{A}_{i} \in \setR^{1\times n}$ denote the $i$-the row of matrix $\mmat{A}$.  Thus, $\mvec{y}_i(u) = \LPd{\mmat{A}_{i}, \mvec{u}}$ and  $\mvec{y}_{i}(v) = \LPd{\mmat{A}_{i}, \mvec{v}}$ are both sub-Gaussian random variables. Using Lemma \ref{sum_sub-Gaussian}, we have 
$$
\Lp{\mvec{y}_{i}(u) }{\psi_2}{} \le {K}\Lp{\mvec{u}}{2}{} = K, \ \ \ \  \Lp{\mvec{y}_{i}(v) }{\psi_2}{} \le K\Lp{\mvec{v}}{2}{} = K.
$$
Furthermore, $\Eox{\mvec{y}_{i}(u) \mvec{y}_{i}(v)} =  \mvec{u}^T\Eox{\mmat{A}_{i}^T \mmat{A}_{i}} \mvec{v} =\sigma_a^2 \LPd{\mvec{u}, \mvec{v}}.$ Note that 
\begin{align}
		P\LPr{ \LPd{\mvec{u}, \mvec{v}} - \frac{1}{m\sigma_a^2 }\LPd{\mmat{A}\mvec{u}, \mmat{A}\mvec{v} } \ge t }   &=
        		P\LPr{ \sum_{i=1}^m \LPr{\sigma_a^2 \LPd{\mvec{u}, \mvec{v}} - \mvec{y}_{i}(u) \mvec{y}_{i}(v)} \ge m t \sigma_a^2 }  
\end{align}
By Lemma \ref{prod_subgaus}, $\mvec{y}_{i}(u) \mvec{y}_{i}(v)$ is a sub-exponential random variable with 
\[
\Lp{{\mvec{y}}_{i}(u){\mvec{y}}_{i}(v)}{\psi_1}{} \le \Lp{{\mvec{y}}_{i}(u)}{\psi_2}{} \Lp{{\mvec{y}}_{i}(v)}{\psi_2}{} \leq K^2.
\]
Therefore,  by applying Theorem \ref{thm_bound_sub_exp} to sub-exponential random variables $\mvec{y}_{i}(u) \mvec{y}_{i}(v)$, and setting all weights equal to $-1$, we derive 
\begin{align} \nonumber
		P\LPr{ \LPd{\mvec{u}, \mvec{v}} - \frac{1}{m\sigma_a^2 }\LPd{\mmat{A}\mvec{u}, \mmat{A}\mvec{v} } \ge t }  &\leq 
		    \exp\LKr{- \min \LPr{{mt^2\sigma_a^4 \over 4K^4},{mt\sigma_a^2  \over 2K^2}} } . 
\end{align}

\end{proof}

\subsection{Proof of Theorem \ref{noisy_dense_thm_iid_gauss}}
\label{proof_noisy_dense_thm_iid_gauss}
Define  \begin{align}
	\label{grad_dir_upd_dens_gaus}
	\mvec{s}^{k+1} = \mvec{x}^{k} + \eta \,\mmat{A}^{T}\LPr{\mvec{y} - \mmat{A}\mvec{x}^k}.
	\end{align}
Using this notation, we have $ \mvec{x}^{k+1}=\P_{\C}(\mvec{s}^{k+1} )$. But since  $ \tilde{\mvec{x}} =\P_{\C}(\mvec{x})$, $\tilde{\mvec{x}}$ is also in $\C$. Hence,  $\Lp{\mvec{x}^{k+1}-\mvec{s}^{k+1}}{2}{2} \le \Lp{\tilde{\mvec{x}}-\mvec{s}^{k+1}}{2}{2},$  
		 or, equivalently,
	$\Lp{\LPr{\mvec{x}^{k+1}- \tilde{\mvec{x}}} - \LPr{\mvec{s}^{k+1} - \tilde{\mvec{x}}}}{2}{2} \le \Lp{\tilde{\mvec{x}}-\mvec{s}^{k+1}}{2}{2}
	$. By removing the common terms from both sides, we have
	\begin{align}
		\Lp{\mvec{x}^{k+1}- \tilde{\mvec{x}}}{2}{2} \le 2 \LPd{\mvec{x}^{k+1} - \tilde{\mvec{x}}, \mvec{s}^{k+1} - \tilde{\mvec{x}}}  \label{eq:error-step1-42}
	\end{align}
For $k=0,1,\ldots$, define the error vector and its normalized version  as 
	\[
	\mvec{\theta}^{k} \triangleq \mvec{x}^{k}- \tilde{\mvec{x}},
	\]
	and 
		\[
	\underline{\mvec{\theta}}^{k} \triangleq {	\mvec{\theta}^{k}\over \|	\mvec{\theta}^{k}\|},
	\]
respectively. Also, given $\mvec{\theta}^{k}\in\setR^n$, $\mvec{\theta}^{k+1}\in\setR^n$, $\eta\in\setR^+$,  and $\mmat{A}\in\setR^{m\times n}$, define coefficient $\mu$ as
\[
\mu\LPr{\mvec{\theta}^{k+1}, \mvec{\theta}^{k}, \eta} \triangleq  \LPd{\underline{\mvec{\theta}}^{k+1}, \underline{\mvec{\theta}}^{k}} - \eta \LPd{\mmat{A}\underline{\mvec{\theta}}^{k+1}, \mmat{A}\underline{\mvec{\theta}}^{k} }.
\] 
	Using this definition, substituting for $\mvec{s}^{k+1} $ from  \eqref{grad_dir_upd_dens_gaus} and noting that $\mvec{y} = \mmat{A}\mvec{x} + \mvec{z}$, from  \eqref{eq:error-step1-42}, it follows that 
\begin{align}
		\nonumber
		\Lp{\mvec{\theta}^{k+1}}{2}{} &\le 
		\LPr{2 \LPd{\mvec{x}^{k+1} - \tilde{\mvec{x}}, \mvec{x}^{k} + 
		\eta \,\mmat{A}^{T}\LPr{\mmat{A}\mvec{x} + \mvec{z} - \mmat{A}\mvec{x}^{k}}\ - \tilde{\mvec{x}}} }  
		\Lp{\mvec{\theta}^{k+1}}{2}{-1}
		\\\nonumber &= 
		\LPr{2 \LPd{\mvec{x}^{k+1} - \tilde{\mvec{x}}, \mvec{x}^{k} - \tilde{\mvec{x}}} + 
		2\eta \, \LPd{\mvec{x}^{k+1} - \tilde{\mvec{x}}, \mmat{A}^{T}\mmat{A}\LPr{\mvec{x} - \mvec{x}^{k}}} +
		2\eta \, \LPd{\mvec{x}^{k+1} - \tilde{\mvec{x}}, \mmat{A}^{T}\mvec{z}}}\Lp{\mvec{\theta}^{k+1}}{2}{-1}
		\\ \nonumber& = 
		\LPr{2\LPd{\mvec{\theta}^{k+1}, \mvec{\theta}^{k}} - 2\eta \LPd{\mmat{A}\mvec{\theta}^{k+1}, \mmat{A}
		\mvec{\theta}^{k} } + 
		2\eta \, \LPd{\mmat{A}\mvec{\theta}^{k+1}, \mmat{A}\LPr{\mvec{x}-\tilde{\mvec{x}}}}+
		2\eta \, \LPd{ \mvec{\theta}^{k+1}, \mmat{A}^{T}\mvec{z}}}\Lp{\mvec{\theta}^{k+1}}{2}{-1}
		\\ \label{Eq_noisy_error_terms} & \le  
		2\mu\LPr{\underline{\mvec{\theta}}^{k+1}, \underline{\mvec{\theta}}^{k}, \eta}\Lp{\mvec{\theta}^{k}}{2}{} + 
		2\eta \|\mmat{A}\|_{S^{n-1}}^2\Lp{\mvec{x}-\tilde{\mvec{x}}}{2}{} +
		2\eta \LPd{\underline{\mvec{\theta}}^{k+1} , \mmat{A}^T\mvec{z}},
	\end{align}
where $ \|\mmat{A}\|_{S^{n-1}} = \sigma_{\max}\LPr{\mmat{A}}$. We next find upper bounds for the three terms on the right hand side of \eqref{Eq_noisy_error_terms}:
\begin{enumerate}
\item[(i)] Bounding $\mu\LPr{\underline{\mvec{\theta}}^{k+1}, \underline{\mvec{\theta}}^{k}, \eta}$: We show that given the parameter setting of the theorem,  with high probability
\begin{align}
\mu\LPr{\mvec{u}, \mvec{v}, \eta} \le 0.45,\,\,\,\, \text{for}\,\, \forall \, \mvec{u}, \mvec{v} \in \C'
\end{align}
where 
\begin{align}
\C'\triangleq  \LKr{\frac{\hat{\mvec{x}}_1-\hat{\mvec{x}}_2}{\Lp{\hat{\mvec{x}}_1-\hat{\mvec{x}}_2}{2}{}}: \forall \;\hat{\mvec{x}}_1, \hat{\mvec{x}}_2\in \C}.\label{eq:set-C-p}
\end{align}
 To achieve this goal,  we  define event $\E_1$ as 
\begin{align}
	\label{first_term_bound}
\E_1 \triangleq \LKr{ \mu\LPr{\mvec{u}, \mvec{v}, \frac{1}{m\sigma_a^2}} < 0.45  :  \forall \;\mvec{u}, \mvec{v} \in \C'}.
\end{align}
From Corollary \ref{lem_dense_Gauss_mtx} (or Lemma \ref{thm_dense_Gauss_mtx_Lemma}), given $\mvec{u}, \mvec{v}\in\C'$, we have
\begin{align}\label{eq:boundmuthm2}
 P\LPr{\mu\LPr{\mvec{u}, \mvec{v}, \frac{1}{m\sigma_a^2}} \ge 0.45} &\le
2^{-\frac{m}{20}}.
\end{align}
Therefore, by the union bound, 
\begin{align}
 P \LPr{\E_1^c} \le |\C'|^2 2^{-\frac{m}{20}}.
\end{align}
Note that  $ |\C'|\leq  |\C|^2 \leq 2^{2r}$.  Therefore,  
$$
P\LPr{\E_1} \ge 1-|\C'|^2 2^{-\frac{m}{20}} \ge 1 - 2^{\LPr{4r-0.05m}}.
$$
Therefore, for $m\ge 80r\LPr{1 + \epsilon}$, where $\epsilon>0$, with probability at least $1-2^{-40r\epsilon}$, event $\E_1$ happens.

\item[(ii)] Bounding $\|\mmat{A}\|_{S^{n-1}}^2\Lp{\mvec{x}-\tilde{\mvec{x}}}{2}{}$: Define event $\E_2$ as 
\[
\E_2\triangleq \LKr{\sigma_{\max}(\mmat{A})\le 2\sqrt{m}+\sqrt{n}}.
\]
From Corollary \ref{sup_sigular_dense}, for $t=1$ we have
\[
P(\E_2^c)\leq {\rm e}^{-\frac{m}{2}}.
\]
Also, since the compression code has supremum distortion $\delta$, $\Lp{\mvec{x}-\tilde{\mvec{x}}}{2}{} \le \delta$. Therefore, conditioned on $ \E_2$, we have 
\begin{align}\label{eq:secondtermthm2}
\frac{2}{m}\LPr{\sigma_{\max}(\mmat{A})}^2 \Lp{\mvec{x}-\tilde{\mvec{x}}}{2}{} 
\le \frac{2}{m}\LPr{2\sqrt{m}+\sqrt{n}}^2 \delta
= 2\LPr{2+\sqrt{\frac{n}{m}}\;}^2\delta.
\end{align}

\item [(iii)] Bounding $2\eta \LPd{\underline{\mvec{\theta}}^{k+1} , \mmat{A}^T\mvec{z}}$: Note that $2\eta\LPd{\underline{\mvec{\theta}}^{k+1},  \mmat{A}^T\mvec{z}} = \frac{2}{m\sigma_a^2}\LPd{\mmat{A}\underline{\mvec{\theta}}^{k+1},  \mvec{z}}$. Let $\mmat{A}_i \in \setR^n$ be the $i$-th row of matrix $\mmat{A}$. Then, 
$$
\mmat{A}\underline{\mvec{\theta}}^{k+1} =\LCr{\LPd{\mmat{A}_1, \underline{\mvec{\theta}}^{k+1}}, \LPd{\mmat{A}_2, \underline{\mvec{\theta}}^{k+1}},  \cdots, \LPd{\mmat{A}_n, \underline{\mvec{\theta}}^{k+1}} }^T.
$$ 
For any fixed $\underline{\mvec{\theta}}^{k+1}$, $ \left\{ \LPd{\mmat{A}_i, \underline{\mvec{\theta}}^{k+1}} \right\}_{i=1}^n$ are i.i.d.~$\N(0,1)$ random variables. Hence, from Lemma \ref{same_distribution_lemma}, we know that the distribution of $\LPd{\underline{\mvec{\theta}}^{k+1},  \mmat{A}^T\mvec{z}}$ is the same as $  \sigma_a\Lp{\mvec{z}}{2}{} \LPd{\underline{\mvec{\theta}}^{k+1}, \mvec{g}}$, where  $\mvec{g} = \LCr{g_1, \cdots, g_n}^T$ is independent of $\Lp{\mvec{z}}{2}{}$ and $g_i \stackrel{i.i.d.}{\sim} \N(0, 1)$. To bound $  \sigma_a\Lp{\mvec{z}}{2}{} \LPd{\underline{\mvec{\theta}}^{k+1}, \mvec{g}}$ we will bound  $\frac{1}{\sigma_z^2}\Lp{\mvec{z}}{2}{2} $ and $\left|\LPd{\underline{\mvec{\theta}}, \mvec{g}}\right|^2$ separately. Given $\tau'_1>0$ and $\tau'_2>0$, define events $\E_3$ and $\E_4$ as follows
\[
\E_3\triangleq \LKr{\frac{1}{\sigma_z^2}\Lp{\mvec{z}}{2}{2} \leq (1+\tau'_1)m} 
\]
and
\[
\E_4\triangleq \LKr{\left|\LPd{\underline{\mvec{\theta}}, \mvec{g}}\right|^2 \leq 1+\tau'_2, \forall \; \underline{\mvec{\theta}}\in\C'}.
\]
Following Lemma \ref{X2_concentaration}, we have
\begin{align}
\label{step_one_Gaussian_noise}
 P\LPr{\E_3^c} \le {\rm e}^{-\frac{m}{2}\LPr{\tau'_1-\ln\LPr{1+\tau'_1}}},
 \end{align}
and letting $m=1$ in Lemma \ref{X2_concentaration}, for fixed $\underline{\mvec{\theta}}^{k+1}$, it follows that
\begin{align}
 P\LPr{\left|\LPd{\underline{\mvec{\theta}}^{k+1}, \mvec{g}}\right|^2 \ge 1+\tau'_2} \le {\rm e}^{-\frac{1}{2}\LPr{\tau'_2-\ln\LPr{1+\tau'_2}}}.
 \end{align}
Hence, by the union bound,
 \begin{align}
 P\LPr{\E_4^c} \le |\C'| {\rm e}^{-\frac{\tau'_2}{2}} = 2^{2r}{\rm e}^{-\frac{1}{2}\LPr{\tau'_2-\ln\LPr{1+\tau'_2}}} \le 2^{2r-\frac{\tau'_2}{2}},
 \end{align}
where the last inequality holds for $ \tau'_2>7$. Setting $\tau'_2 = 4(1+\epsilon)r-1$, where $\epsilon>0$, ensures that $P\LPr{\E_4^c} \leq  2^{-2\epsilon r+0.5}$.
Setting $\tau'_1 = 1$,  $\P(\E_3^c)\leq {\rm e}^{-0.15 m}$, and conditioned on $\E_3\cap\E_4$, we have
 \begin{align}
2 \eta \LPd{\underline{\mvec{\theta}}^{k+1} , \mmat{A}^T\mvec{z}} &= 
\frac{2}{m\sigma_a} \LPd{\underline{\mvec{\theta}}^{k+1} , \mmat{A}^T\mvec{z}}\nonumber \\
&\le\frac{2}{m\sigma_a} \sqrt{ \sigma_z^2(1+\tau'_1)m(1+\tau'_2)}\nonumber\\
\label{last_step_Gaussian_noise}
&= \frac{2\sigma_z}{m\sigma_a} \sqrt{8m\LPr{1+\epsilon }r} =  \frac{\sigma_z}{\sigma_a}\sqrt{\frac{8(1+\epsilon )r}{m}}.
\end{align}
\end{enumerate}
Combining \eqref{last_step_Gaussian_noise}, \eqref{eq:secondtermthm2}, and \eqref{eq:boundmuthm2} with  \eqref{Eq_noisy_error_terms}  yields the desired bound on the reduction of error. Finally, note that, by the union bound,
\begin{align}
P(\E_1\cap\E_2\cap\E_3\cap\E_4)&\geq 1- \sum_{i=1}^4 P(\E_i)\geq 1-{\rm e}^{-{m\over 2}}-2^{-40r\epsilon} -2^{-2\epsilon r+0.5}-{\rm e}^{-0.15 m}\nonumber\\
&\geq 1-2^{-2\epsilon r+1}.
\end{align}

\subsection{Proof of Theorem \ref{thm:imperfectproject}}\label{ssec:proof:thm:imperfectproject}

The proof of this result is a simple extension of the proof of Theorem \ref{noisy_dense_thm_iid_gauss} presented in Section \ref{proof_noisy_dense_thm_iid_gauss}. Define
\[
\mvec{s}^{k+1} = \mvec{x}^k+ \eta \mmat{A}^T (\mvec{y} - \mmat{A} \mvec{x}^k). 
\]
Note that $\mvec{x}^{k+1} = g_r(f_r(\mvec{s}^{k+1}))$. Hence,
\begin{equation}\label{eq:proofimperfect1}
\|\mvec{x}^{k+1} -\tilde{ \mvec{x}} \|_2 = \|\mvec{x}^{k+1} -{\cal P}_{\cal S} (\mvec{s}^{k+1}) \|_2+ \|{\cal P}_{\cal S} (\mvec{s}^{k+1})-\tilde{ \mvec{x}} \|_2 \leq \|{\cal P}_{\cal S} (\mvec{s}^{k+1})- \tilde{ \mvec{x}} \|_2  + \xi
\end{equation}
 It is now straightforward to follow exactly the same step as the one discussed in the proof of Theorem \ref{noisy_dense_thm_iid_gauss} and show that with probability at least $1- 2^{-4r\epsilon}-{\rm e}^{-{m\over 4}}- 2^{-2r\epsilon}$
\begin{equation}\label{eq:proofimperfect2}
\frac{1}{\sqrt{n}} \|{\cal P}_{\cal S} (\mvec{s}^{k+1})-\tilde{ \mvec{x}} \|_2 \leq \frac{0.9}{\sqrt{n}}\|\mvec{x}^k - \tilde{\mvec{x}}\|_2 +  2\LPr{2+\sqrt{\frac{n}{m}}}^2{\delta \over \sqrt{n}} + \frac{\sigma_z}{\sigma_a}\sqrt{\frac{8(1+\epsilon)r}{m}}.
\end{equation}
Combining \eqref{eq:proofimperfect1} and \eqref{eq:proofimperfect2} completes the proof.

\subsection{Proof of Theorem \ref{noisy_dense_thm_iid_sub}}
\label{proof_noisy_dense_thm_iid_sub}
Following the proof of Theorem \ref{noisy_dense_thm_iid_gauss}, and defining $\mvec{\theta}^{k+1} = \mvec{x}^{k+1}- \tilde{\mvec{x}}$, for $k=0,1,2,\ldots$, it follows from  \eqref{Eq_noisy_error_terms} that
\begin{align}
		\Lp{\mvec{\theta}^{k+1}}{2}{} &\le  
		2\mu\LPr{\underline{\mvec{\theta}}^{k+1}, \underline{\mvec{\theta}}^{k}, \eta} \Lp{\mvec{\theta}^{k}}{2}{} + 
		2\eta \sigma_{\max}^2\LPr{\mmat{A}}\Lp{\mvec{x}_o-\tilde{\mvec{x}}}{2}{} +
		2\eta \LPd{\underline{\mvec{\theta}}^{k+1} , \mmat{A}^T\mvec{z}},\label{eq:th3-main-bound}
	\end{align}
where $\mu\LPr{\underline{\mvec{\theta}}^{k+1}, \underline{\mvec{\theta}}^{k}, \eta} \triangleq \LPd{\underline{\mvec{\theta}}^{k+1}, \underline{\mvec{\theta}}^{k}} - \eta \LPd{\mmat{A}\underline{\mvec{\theta}}^{k+1}, \mmat{A}\underline{\mvec{\theta}}^{k} }$. Let $\C'$ denote the set of normalized distance vectors of the codewords in $\cal C$, which is defined in \eqref{eq:set-C-p}. Define event  $\E_1$ as 
\[
\E_1\triangleq\LKr{\mu\LPr{\mvec{u}, \mvec{v}, \frac{1}{m\sigma_a}} \le \mu_0 :  \forall \mvec{u}, \mvec{v} \in \C' }.
\]
Similar to the proof of Theorem \ref{noisy_dense_thm_iid_gauss}, we show that the probability of occurrence of  $\E_1^c$ approaches  $0$.  
%
Given $ \mvec{u}, \mvec{v} \in \C'$, from Lemma \ref{thm_dense_subGauss_mtx}, we have
	\begin{align}
	\label{eq_concent_dense_subgauss}
		P\LKr{\mu\LPr{\mvec{u}, \mvec{v}, \frac{1}{m\sigma_a^2}} \ge \mu_0} \le   \exp\LKr{-\frac{m\mu_0\sigma_a^2}{2K^2} \min \LPr{1, \frac{\mu_0\sigma_a^2}{2K^2}} } = 2^{-(\log {\rm e})\LPr{\frac{m\mu_0 \sigma_a^2}{2K^2} \min \LPr{1, \frac{\mu_0 \sigma_a^2}{2K^2}} }}.
\end{align}
Therefore, by the union bound, since $ |\C'| \le  |\C|^2 = 2^{2r}$, we have 
$$
P\LPr{\E_1} \ge 1-|\C'|^2 2^{-(\log {\rm e})\LPr{\frac{m\mu_0 \sigma_a^2}{2K^2} \min \LPr{1, \frac{\mu_0\sigma_a^2}{2K^2}} } } \ge 1 - 2^{\LPr{4r-(\log {\rm e})\LPr{\frac{m\mu_0\sigma_a^2}{2K^2} \min \LPr{1, \frac{\mu_0\sigma_a^2}{2K^2}} }}}.
$$
Therefore, for 
$m >\frac{8K^2r(1+\epsilon)}{ \mu_0  \sigma_a^2 \min \LPr{1, \frac{\mu_0 \sigma_a^2}{2K^2}  } \log {\rm e}}$, 
\[
P(\E_1^c)\leq 2^{-4r\epsilon}.
\]
But, since by assumption $\mu_0 \sigma_a^2\leq 2K^2$, we have $ \min \LPr{1, \frac{\mu_0 \sigma_a^2}{2K^2}  }=\frac{\mu_0 \sigma_a^2}{2K^2} $.
\newline
Define event $\E_2$ as $\E_2\triangleq \LKr{\sigma_{\max}(\mmat{A})\le 2\sigma_a\sqrt{m + n\frac{3K}{\sigma_a^2}}}$.  From Corollary \ref{bound_SV_matrix_lemma},  we have
 $P\LPr{\E_2}\geq 1-{\rm e}^{-\frac{m\sigma_a^2}{2K}}$. Since the compression code is such that  $ \Lp{\mvec{x}-\tilde{\mvec{x}}}{2}{}\leq \delta $, conditioned on $\E_2$, we have
\begin{align}
\nonumber
\frac{2}{m\sigma_a^2}\LPr{\sigma_{\max}(\mmat{A})}^2 \Lp{\mvec{x}-\tilde{\mvec{x}}}{2}{} 
&\le \frac{8}{m}\LPr{m+\frac{3K}{\sigma_a^2}n} \delta\nonumber\\
&= 8\LPr{1+\frac{3Kn}{\sigma_a^2m}}\delta.
\end{align}

To complete the proof, we need  to bound $2\eta\LPd{\underline{\mvec{\theta}}^{k+1},  \mmat{A}^T\mvec{z}} = \frac{2}{m\sigma_a^2}\LPd{\mmat{A}\underline{\mvec{\theta}}^{k+1},  \mvec{z}}$, which is the term related to the  noise $\mvec{z}$. Again, let $\mmat{A}_i \in \setR^{1\times n}$ denote the $i$-th row of matrix $\mmat{A}$ and, for a given $\mvec{u}\in\setR^n$, let  $
\mvec{y}_{\mvec{u}} = \mmat{A}{\mvec{u}}.$ Hence, for $i=1,\ldots,m$,
$$
\mvec{y}_{\mvec{u}}(i) = \LPd{\mmat{A}_i, \mvec{u} }.
$$ 
To  upper bound  the term corresponding to noise, given $\tau>0$, define event $\E_3$ as
$$
\E_3 = \LKr{\frac{2}{m\sigma_a^2}\LPd{\mvec{y}_{\mvec{u}},  \mvec{z}} \le \tau: \;\forall \;\mvec{u} \in\C'}.
$$
From Lemma \ref{sum_sub-Gaussian}, for $\mvec{u}\in\C'$, we know that $\LKr{\mvec{y}_{\mvec{u}}(i)}_{i=1}^m$ are independent sub-Gaussian random variables. Also, for $\mvec{u}\in\C'$, Lemma \ref{sum_sub-Gaussian} states that 
$$
\Lp{\mvec{y}_{\mvec{u}}(i)}{\psi_2}{}\le  \Lp{\underline{\mvec{u}}}{2}{} \max_{1\le j \le n} \LPr{\Lp{\mmat{A}_i(j)}{\psi_2}{}}\le K,
$$ 
where the last inequality follows because for $\mvec{u}\in\C'$, $\Lp{\mvec{u}}{2}{}=1$. 
Since every Gaussian random variable is also a sub-Gaussian random variable, $\mvec{z}(i)$ is a sub-Gaussian random variable with $\Lp{\mvec{z}(i)}{\psi_2}{} = \sigma_n\sqrt{8\over 3}$. As a result, 
$\Lp{\mvec{y}_{\mvec{u}}(i)\mvec{z}(i)}{\psi_1}{} \le \Lp{\mvec{y}_{\mvec{u}}(i)}{\psi_2}{}\Lp{\mvec{z}(i)}{\psi_2}{} \le K\sqrt{8\over 3}\sigma_n$. Using Theorem \ref{thm_bound_sub_exp}, for $\mvec{u}\in\C'$, we have
\begin{align*}
P\LPr{\frac{2}{m\sigma_a^2}\LPd{\mvec{y}_{\mvec{u}},  \mvec{z}} \ge \tau}&= P\LPr{ \sum_{i=1}^{m} \mvec{y}_{\mvec{u}}(i)\mvec{z}(i) \ge \frac{m\sigma_a^2 \tau}{2} } \nonumber\\
&\le \exp\LKr{ - \min \LPr{ \frac{3m\sigma_a^4\tau^2}{16\times 8 K^2\sigma_n^2}, \frac{\sqrt{3} m\sigma_a^2\tau}{4K\sqrt{8} \sigma_n} } }\nonumber\\
&\le \exp\LKr{ - \min \LPr{ \frac{m\sigma_a^4\tau^2}{16\times 3 K^2\sigma_n^2}, \frac{ m\sigma_a^2\tau}{4\sqrt{3}K \sigma_n} } },
\end{align*}
where the last line follows because ${3\over 8}>{1\over 3}$.
Therefore, by the union bound, since $|\C'|\leq 2^{2r}$,
\begin{align*}
P\LPr{\E_3^c} \le2^{2r} \exp\LKr{ -\frac{ m\sigma_a^2\tau}{4\sqrt{3}K \sigma_n}  \min \LPr{ \frac{ \sigma_a^2\tau}{4K\sqrt{3} \sigma_n}, 1 } }.
\end{align*}
Choosing 
\[
 \tau={\sigma_n\over \sigma_a^2} \sqrt{96 K^2 r(1+\epsilon) \over m \log {\rm e}  },
 \]
 and given our choice of $m$, it follows that
 \begin{align*}
P\LPr{\E_3^c} \le2^{-2r\epsilon}.
\end{align*}
But, since $ \sqrt{96\over \log {\rm e} }\leq 9$,
 \begin{align*}
 P\LPr{ \exists \;\mvec{u} \in\C'\; {\rm s.t.}\; \frac{2}{m\sigma_a^2}\LPd{\mvec{y}_{\mvec{u}},  \mvec{z}} > {9K\sigma_n\over \sigma_a^2} \sqrt{  r(1+\epsilon) \over m }}\leq P(\E_3^c) \le2^{-2r\epsilon}.
\end{align*}
Finally, combining \eqref{eq:th3-main-bound} with the bounds derived on the three terms on the right hand side of \eqref{eq:th3-main-bound} yields the desired result. 

\section{Conclusions}\label{sec:conclusion}
In this paper, we have studied the problem of designing efficient compression-based compressed sensing recovery algorithms. Specifically, we have proposed C-GD,  an iterative robust-to-noise compression-based compressed sensing algorithm. Given measurements $\mvec{y}=\mmat{A}\mvec{x}+\mvec{z}$ and a compression code with codebook $\mathcal{C}$, at  iteration $k$, C-GD updates its current estimate of $\mvec{x}$, $\mvec{x}^k$, by moving towards the negative of the gradient of the cost function ($f(\mvec{u})=\|\mvec{y}-\mmat{A}\mvec{u}\|^2$) and then projecting  the result onto the set of codewords $\mathcal{C}$. For a given compression code, the projection step can typically   be implemented by applying the compression code's encoder and decoder.   We have proved that, given enough measurements,  with high probability, C-GD has linear convergence rate and is robust to additive Gaussian noise.  In summary, C-GD provides a platform for using  commercial compression code such as JPEG2000 or MPEG4 for compressed sensing of images and videos, respectively. In our simulation results, we have focused on compressed sensing of images and have shown that C-GD combined with the state-of-the-art compression codes yields state-of-the-art compressed sensing performance, both for i.i.d.~Gaussian and partial Fourier measurement matrices. \\


\setcounter{equation}{0}
\renewcommand{\theequation}{\thesection.\arabic{equation}}

\appendices

\section{ Finding the best piecewise polynomial approximation}\label{app:dynamicprogram}
Consider the following problem: given $\mvec{x}\in\mathds{R}^n$, find $\tilde{\mvec{x}}\in\P$,  $\P$ defined in \eqref{eq:P-def}, such that
\[
\tilde{\mvec{x}}=\arg\min_{\mvec{z}\in\P}\|\mvec{x}-\mvec{z}\|^2_2.
\]
In this section, we briefly describe how $\tilde{\mvec{x}}$ can be found using dynamic programing. Note that, given singularity points $s_1, s_2, \ldots, s_Q$, one can easily find  the best polynomial fit in each piece. Hence, the challenge is to find the optimal singularity points. Each singularity point  $s_i$ is a point in the set $\{{1\over n},\ldots, {n-1\over n}\}$. Given $i_1,i_2\in\{0,\ldots,n\}$,  $i_1\leq i_2$, let $e(i_1,i_2)$ denote the minimum error achievable in approximating $(x_{i_1},\ldots,x_{i_2})$ as samples of $\sum_{j=0}^N a_j y^j$ at ${i_1\over n},\ldots,{i_2\over n}$, where $\sum_{j=0}^n a_j\leq 1$ and $a_j\in(0,1)$, $j=0,\ldots,N$.  That is,
\begin{align}
e(i_1,i_2)=\min_{a_0,\ldots,a_N: a_j\in(0,1), \sum_{j=0}^Na_j\leq 1}\sum_{k=i_1}^{i_2}\left(x_k-\sum_{j=0}^N a_j \left({k\over n }\right)^j\right)^2.\label{eq:e-i1-i2}
\end{align}
Using this definition, given singularity points $s_0=0,s_1,\ldots,s_Q,s_{Q+1}=1\in\{0,{1\over n},\ldots,1\}$, the minimum achievable error in approximating $\mvec{x}$ by signals in $\P$ whose singularities happen at $s_1,\ldots,s_Q$ can be written as
\[
\sum_{i=1}^{Q+1} e(ns_{i-1},ns_i).
\]
This representation suggests that the minimizer $\tilde{\mvec{x}}$ can be found using the Viterbi algorithm. In summary, the Viterbi algorithm will operate on a Trellis diagram with $Q$ full stages corresponding to the possible $Q$ singularities and two single-state stages corresponding to the start and the end of the interval. Each intermediate stage has $n-1$ states, which correspond to the possible $n-1$ singularity points. State $s$ at stage $t\in{1,\ldots,Q}$ is connected to state $s'$ at state $t+1$, if $s\leq s'$. The weight of this edge is set as $e(ns,ns')$, defined in \eqref{eq:e-i1-i2}. Otherwise, if $s'>s$, there is no edge between the two states. 
Let $E_i(s)$ denote the  minimum cost associated with state $s$ at stage $i$. Also, let $E_0(s_0)=0$. The goal is to find the path on the Trellis diagram that achieves $E_{Q+1}(s_{Q+1})=E_{Q+1}(1)$. It is straightforward to show that, for $t=1,\ldots,Q$,  
\[
E_{t+1}(s)=\min_{s'} (E_t(s')+e(s',s)),
\] 
where the minimum is taken over all states $s'$   which are connected to  $s$, i.e., $s'<s$. This breakdown of the cost function describes  the essence of how the Viterbi algorithm operates. At  stage $t$, among its incoming edges, each state $s$ only keeps the edge that achieve $E_{t+1}(s)$. At the end, backtracking from the final state $s_{Q+1}=1$ at stage $Q+1$ reveals the optimal singularities.

\bibliographystyle{unsrt}
\bibliography{myrefs}

\end{document}